\DeclareMathOperator{\E}{\mathbbmss{E}}
\DeclareMathOperator{\Hbb}{\mathbb{H}}
\newcommand*{\dotleq}{\mathrel{\dot{\leq}}}
\newtheorem{theorem}{Theorem}
\newtheorem*{theorem*}{Theorem}
\newtheorem{lemma}{Lemma}
\theoremstyle{definition}  
\newtheorem{definition}{Definition}
\newtheorem{remark}{Remark}
\begin{document}
\title{Cellular Networks With Finite Precision CSIT: GDoF Optimality of Multi-Cell TIN and Extremal Gains of Multi-Cell Cooperation
\footnotetext{H. Joudeh is with the  Department of Electrical Engineering, Eindhoven University of Technology, 5600 MB Eindhoven, The
Netherlands (e-mail: h.joudeh@tue.nl).
G. Caire is with the Faculty of Electrical Engineering and Computer Science, Technische Universit\"{a}t Berlin, 10587 Berlin, Germany (e-mail: caire@tu-berlin.de). 
This work was partially supported by the European Research Council (ERC) under the ERC Advanced Grant N. 789190 (CARENET). 
This work was presented in part at the 2020 IEEE International Symposium on Information Theory (ISIT) \cite{Joudeh2020a} and in part at the 2020 IEEE Global Communications Conference (GLOBECOM) \cite{Joudeh2020b}.}}
\author{Hamdi~Joudeh and Giuseppe~Caire}
\date{}
\maketitle
\begin{abstract}
We study the generalized degrees-of-freedom (GDoF) of cellular networks under finite precision channel state information at the transmitters (CSIT). We consider downlink settings modeled by the interfering broadcast channel (IBC) under no multi-cell cooperation, and the overloaded multiple-input-single-output broadcast channel (MISO-BC) under full multi-cell cooperation. We focus on three regimes of interest: the mc-TIN regime, where a scheme based on treating inter-cell interference as noise (mc-TIN) was shown to be GDoF optimal for the IBC; the mc-CTIN regime, where the GDoF region achievable by mc-TIN is convex without the need for time-sharing; and the mc-SLS regime which extends a previously identified regime, where a simple layered superposition (SLS) scheme is optimal for the 3-transmitter-3-user MISO-BC, to overloaded cellular-type networks with more users than transmitters. We first show that the optimality of mc-TIN for the IBC extends to the entire mc-CTIN regime when CSIT is limited to finite precision. The converse proof of this result relies on a new application of aligned images bounds. We then extend the IBC converse proof to the counterpart overloaded MISO-BC, obtained by enabling full transmitter cooperation. This, in turn, is utilized to show that a multi-cell variant of the SLS scheme is optimal in the mc-SLS regime under full multi-cell cooperation, albeit only for 2-cell networks. The overwhelming combinatorial complexity of the GDoF region stands in the way of extending this result to larger networks. Alternatively, we appeal to extremal network analysis, recently introduced by Chan et al., and study the GDoF gain of multi-cell cooperation over mc-TIN in the three regimes of interest. We show that this extremal GDoF gain is bounded by small constants in the mc-TIN and mc-CTIN regimes, yet scales logarithmically with the number of cells in the mc-SLS regime.
\end{abstract}
\newpage
\section{Introduction}
\label{sec:introduction}
For over a decade, generalized degrees-of-freedom (GDoF) studies have continued to contribute new insights into the fundamental limits of wireless communication networks \cite{Etkin2008,Jafar2010,Huang2012,Karmakar2012,Davoodi2017a,
Davoodi2017,Davoodi2018,Davoodi2019a}. 
The GDoF is defined by proportionally scaling all individual link capacities in a network to infinity, while normalizing the total network capacity by the same scaling factor. 
From an analytic standpoint, this scaling eliminates intricacies related to additive noise, unavoidable in exact capacity analysis, and focuses the attention on aspects related to interference management.
This is accomplished while preserving the character of the original network, carried through to the GDoF limit
via the almost constant normalized link capacities (or channel strength parameters), rendering the GDoF a more comprehensive figure of merit compared to the more commonly employed  degrees-of-freedom (DoF) metric.
The validity of the GDoF metric as a good \emph{approximator} is confirmed by several constant gap capacity characterizations, obtained from their  GDoF counterparts (e.g. \cite{Etkin2008,Karmakar2013,Bresler2010}).

Amongst the various insights brought to light by GDoF studies, perhaps the most striking are those related to the fundamental role of channel state information at the transmitters (CSIT) in wireless interference networks.
Initial GDoF studies\footnote{This also includes early DoF results, which can be seen as a special case of GDoF results (see Remark \ref{rermak:GDoF_DoF}).} considered somewhat idealized settings, in which the availability of perfect CSIT is assumed; and gave 
rise to novel schemes that challenged conventional wisdom at the time, especially those based on interference alignment (IA) and the use of structured codes as opposed to random codes \cite{Cadambe2008,Bresler2010,Jafar2011}.
In practical wireless networks, however, uncertainty in channel knowledge is inevitable due to fading and limited feedback resources \cite{Love2008}; and robustness against such uncertainty is of paramount importance to avoid potential catastrophic failures.
This has recently motivated a surge of interest in \emph{robust} GDoF characterizations, obtained by limiting channel knowledge at the transmitters to finite precision.

While the study of wireless networks with finite precision CSIT is not new \cite{Lapidoth2005}, progress in this area was long hindered by the lack of tight information-theoretic outer bounds that match the best known inner bounds.
This, however, has changed in recent years with a breakthrough by Davoodi and Jafar \cite{Davoodi2016} known as the aligned images (AI) approach: a combinatorial argument for bounding the number of codewords that align at one receiver and remain resolvable at another.
This approach has been instrumental in deriving tight GDoF outer bounds under finite precision CSIT\footnote{While the AI approach was initially introduced to bound the DoF under partial CSIT \cite{Davoodi2016}, it is now well-understood that DoF bounds under partial CSIT translate to GDoF bounds under finite precision CSIT \cite{Davoodi2017,Davoodi2018}.} for a variety of interference and multi-antenna networks \cite{Davoodi2017a,Davoodi2017,Davoodi2018,Davoodi2019a}.
A common theme emerging from these robust characterizations is that GDoF gains due to sophisticated schemes that rely on precise CSIT---as signal-space and signal-scale IA \cite{Cadambe2008,Bresler2010}, zero forcing (ZF) and dirty paper coding (DPC) \cite{Shamai2001,Caire2003,Weingarten2006}---are entirely lost under finite precision CSIT.
The new insights into the fragility of those once-promising elegant schemes have brought back to the forefront simple and robust alternatives that require only coarse channel knowledge---as power control and treating interference as noise (TIN) \cite{Geng2015,Geng2016,Sun2016,Yi2016,Geng2015a,Gherekhloo2017,Yi2020,Gherekhloo2016,Joudeh2019a,Joudeh2019b},  rate-splitting and simple layered superposition (SLS) \cite{Joudeh2016,Piovano2016,Davoodi2019,Joudeh2020}---which, perhaps surprisingly, have turned out to be GDoF optimal in a number of settings. 

In the present paper, we extend the state-of-the-art in robust GDoF studies to cellular networks with finite precision CSIT.
A basic cellular setting operating in the downlink mode is modeled by a Gaussian interfering broadcast channel (IBC), comprising $K$ mutually interfering Gaussian BCs with $L$ users each.
When all transmitters are allowed to cooperate (i.e. full multi-cell cooperation), the setting becomes an overloaded multiple-input-single-output BC (MISO-BC), with $K$ transmit antennas and $KL$ users.
We will refer to the above settings as $K \times KL$ networks, where the number of cells $K$ and the number of users per-cell $L$ may be arbitrary.
A major challenge in GDoF studies of large networks, including cellular networks, is the richness of the underlying parameter space.
GDoF characterizations depend on channel strength parameters which capture link strength levels in dB (or, equivalently, normalized capacities) between every transmitter and receiver in the network.
Channel strength parameters assume arbitrary non-negative real values in general,  giving rise to overwhelmingly many parameter regimes as the network  grows larger.
A one-size-fits-all approach is often insufficient for such problems---achievable schemes require separate tuning for different parameter regimes, and distinct matching outer bounds are often needed \cite{Etkin2008,Jafar2010,Huang2012,Karmakar2012,Davoodi2017a,
Davoodi2017,Davoodi2018,Davoodi2019a}. 
It is hence no surprise or coincidence that most GDoF characterizations are limited to either small or symmetric networks, described by very few parameters.
It is also often the case that insights derived from such settings do not directly generalize to large or asymmetric networks.

A promising approach to study the GDoF of large asymmetric networks, while circumventing the curse of dimensionality, has been to focus on special regimes of channel strength parameters, such as \emph{weak} interference regimes.
Prime examples are the fairly broad regimes in which simple schemes based on TIN turn out to be GDoF optimal, which have been identified for a variety of interference networks \cite{Geng2015,Geng2016,Sun2016,Yi2016,Geng2015a,
Gherekhloo2017,Yi2020,Gherekhloo2016,Joudeh2019a,Joudeh2019b}.
This approach has also proven useful for more practical purposes---the inherent simplicity of TIN combined with its (approximate) information-theoretic optimality in some regimes have inspired the design of new high-performing resource allocation algorithms  for device-to-device (D2D) networks\cite{Naderializadeh2014,Yi2015,Yi2016}. 
As far as cellular networks are concerned, it is often the case in practice that a user equipment (UE) connects to the base station (BS) with the strongest signal within its vicinity; and hence a direct link between a BS and an associated UE is typically no weaker than any of the inter-cell interference links to the same UE.
This intrinsic property of cellular networks, combined with the goal of making progress on GDoF problems that remain open in their generality, naturally motivate us to study weak \emph{inter-cell} interference regimes.
Against this background, we focus our attention on two main questions in this paper.

\emph{Q1. Robust GDoF Optimality of Multi-Cell TIN}:
The first question we set out to answer relates to the GDoF optimality of treating 
inter-cell interference as noise (multi-cell TIN, or mc-TIN) for the IBC with finite precision CSIT.
In the mc-TIN scheme, single-cell transmissions are used---this includes power control, superposition coding and successive decoding in each cell---while all inter-cell interference is treated as additional Gaussian noise at the receivers \cite{Joudeh2019b}.
This is arguably the \emph{simplest} multi-cell scheme for which GDoF optimality results can be show in fairly broad parameter regimes, and it is also especially suitable for weak inter-cell interference.
Two relevant weak inter-cell interference regimes have been identified in \cite{Joudeh2019b}: a mc-TIN regime, in which mc-TIN achieves the entire GDoF region of the IBC; and a multi-cell convex TIN (mc-CTIN) regime, a strictly larger regime in which the GDoF region achieved through mc-TIN is a convex polyhedron without the need for time-sharing.\footnote{Similar results also hold in uplink networks modeled by the interfering multiple access channel (IMAC) \cite{Joudeh2019a}.}
Note that the mc-TIN and mc-CTIN regimes respectively generalize the TIN regime of Geng \emph{et al.} \cite{Geng2015} and CTIN regime of Yi and Caire \cite{Yi2016}, identified for the $K \times K$ interference channel (IC), to $K \times KL$ cellular networks (see Section \ref{subsec:regimes}).
Perfect CSIT is assumed in \cite{Joudeh2019b}, and the optimality of mc-TIN remains 
unexplored under finite precision CSIT.

\emph{Q2. Robust GDoF Gain of Multi-Cell Cooperation:}
The second question relates to understanding the extent to which multi-cell cooperation (mc-Co) is beneficial when CSIT is limited to finite precision, especially in weak inter-cell interference regimes where the simple non-cooperative mc-TIN scheme is known (or expected) to be quite powerful.
While cooperation in cellular networks has been subject to intensive research for many years (see \cite{Gesbert2010,Simeone2012,Lozano2013} and references therein), the fundamental limits and benefits of robust  mc-Co under finite precision CSIT remain largely unexplored, apart from GDoF  results for $K \times K$ networks \cite{Davoodi2017,Davoodi2018,Davoodi2019,Chan2020}. 
These results, however, do not extend directly to overloaded $K \times KL$ networks, which better resemble cellular settings.

In principle, the gain from mc-Co can be studied by comparing the GDoF of the $K \times KL$ IBC and the GDoF of the counterpart $K \times KL$ MISO-BC.
Nevertheless, such direct comparisons are currently infeasible in large asymmetric settings, mainly due to the difficulty of obtaining explicit GDoF characterizations---for instance, while it remains plausible that the simple layer superposition (SLS) scheme is GDoF optimal for the $K \times K$ MISO-BC in the so-called SLS regime, this has been shown only for $K \leq 3$ so far, as the problem becomes highly intractable for larger $K$  \cite{Davoodi2018,Davoodi2019}.
This hurdle has been recently circumvented through \emph{extremal network} analysis: a novel approach proposed by Chan \emph{et al.} \cite{Chan2020}, where the focus is shifted towards studying a class of networks in a regime of interest which maximize the gain of one scheme over another. 
This approach was successfully applied to $K \times K$ networks with finite precision CSIT in \cite{Chan2020} to characterize the maximum multiplicative GDoF gains from transmitter cooperation over TIN in three weak interference regimes of interest: the TIN, CTIN and SLS regimes. 
The prospect of leveraging extremal network analysis to understand the gain of mc-Co over mc-TIN in $K \times KL$ cellular networks is an intriguing one, and it is yet to be explored.
Next, we summarize the main findings and contributions of this work.
\subsection{Overview of Results}
In the first result of this paper (Theorem \ref{theorem:CTIN_optimality}), we show that when CSIT  is limited to finite precision in the IBC, the GDoF optimality of mc-TIN extends to the entire mc-CTIN regime.
This stands in sharp contrast to the IBC with perfect CSIT, where IA-based schemes achieve strict GDoF gains over mc-TIN in some parts of the mc-CTIN regime.\footnote{This has been shown using a signal-scale IA scheme for fixed channels in \cite{Joudeh2019b}, and can also be shown for varying (generic) channels using the signal-space IA scheme in \cite{Suh2011}, as we will see in Section \ref{subsec:result_mc_CTIN}.}
The converse proof of Theorem \ref{theorem:CTIN_optimality} 
relies on a new application of AI bounds to the $K \times KL$ IBC, 
which departs from previous applications that focus on the $K \times K$ interference channel (IC) and its cooperative  $K \times K$ MISO-BC counterpart \cite{Davoodi2017a,Davoodi2017,Davoodi2018,Chan2020,Davoodi2019}.
In particular, the mixed BC-IC nature of the IBC, which captures both intra-cell and inter-cell interference in cellular networks, requires a careful selection  of auxiliary random variables to obtain a tight GDoF outer bound in this case 
(see Section \ref{sec:outer_bounds}).

Theorem \ref{theorem:CTIN_optimality} generalizes a previous result for the $K \times K$ IC, obtained by Chan \emph{et al.} in \cite[Th. 4.1]{Chan2020}, to the  $K \times KL$ IBC; 
and further reveals key insights into cellular networks which cannot be seen from studying the IC alone. 
For instance, in the mc-CTIN regime, apart from the \emph{strongest} user with the highest signal-to-interference ratio (SIR) in each cell, all \emph{weaker} users turn out to be redundant in the sense that they do not contribute to increasing the overall GDoF of the IBC with finite precision CSIT.
This again is in sharp contrast to the the IBC with perfect CSIT, where weaker users are still very useful for increasing the overall GDoF through IA.
This point and other key insights are discussed in greater detail further on
 in Section \ref{subsec:result_mc_CTIN}.

Next, we consider the GDoF gain from mc-Co under finite precision CSIT, focusing on the mc-TIN and mc-CTIN regimes, and a strictly larger regime which we call the multi-cell SLS (mc-SLS) regime, which naturally extends the SLS regime of \cite{Davoodi2019} to $K \times KL$ settings.
To this end, we start by deriving an AI-based GDoF outer bound for the $K \times KL$  MISO-BC with finite precision CSIT  (Theorem \ref{theorem:MISO_BC_outerbound}). 
We further show that this outer bound is achievable in the $2$-cell case using a multi-cell variant of the SLS scheme, hence settling the GDoF region question for the $2 \times 2L$  MISO-BC with finite precision CSIT in the mc-SLS regime (Theorem \ref{theorem:2_cell_SLS}).
Remarkably, the general structure of the SLS scheme greatly simplifies in $2 \times 2L$ cellular networks in the mc-SLS regime---instead of exponentially many encoded sub-messages, a linear number of sub-messages (in $L$) suffices.
This reduction is key to the achievability proof---it enables an efficient elimination of the optimization variables used in describing the GDoF region achievable through the SLS scheme, from which we can match it to the outer bound of Theorem \ref{theorem:MISO_BC_outerbound} (see Section \ref{sec:2_cell_SLS}).

Beyond $2$-cell networks, characterizing the GDoF region of the $K \times KL$ MISO-BC is very challenging due to the explosion in the number of parameters and optimization variables, rendering a direct comparison between mc-Co and 
mc-TIN infeasible in general cellular networks.
Inspired by \cite{Chan2020}, we circumvent  this complexity barrier by appealing to extremal network analysis and focusing on the extremal GDoF gain from mc-Co over mc-TIN.
The cooperative GDoF outer bound in Theorem \ref{theorem:MISO_BC_outerbound} proves very useful for this purpose---it turns out to be tight for the classes of extremal networks, which maximize the GDoF gain of mc-Co over mc-TIN, in the regimes of interest.  
We show that the extremal GDoF gain of mc-Co over mc-TIN is given by:  $3/2$ in the mc-TIN regime; $2 - 1 / K$ in the mc-CTIN regime; and scales as $\Theta\big(\log(K) \big)$ in the mc-SLS regime (the notation $\Theta( \cdot )$ is defined below).
Interestingly, these gains do not depend on the number of users per-cell $L$, and they are exactly equal to 
their counterpart extremal gains in $K \times K$ networks \cite{Chan2020}.
In other words, in the three weak interference regimes of interest, additional (weaker) users in each cell have no influence on extremal GDoF gains, governed by the underlying $K \times K$ networks of single-user cells.

Interestingly, in the past an intensive amount of research and industrial development work has been dedicated to schemes generally referred to as cooperative multi-cell processing (CoMP), which can be regarded as practical embodiments of the theoretical mc-Co paradigm studied here. After much work it was concluded that the gains offered by CoMP schemes over single-cell processing (i.e. mc-TIN) in typical cellular layouts, characterized by rather weak inter-cell interference, were somehow disappointing and not worth the extra complexity incurred by CoMP. Imperfect CSIT has been identified as an important factor in the limited CoMP gain. Viewed in this light, our extremal gain results can be seen as a theoretical justification to these empirical findings. 
\subsection{Notation}
\label{subsec:notation}
For positive integers $z_{1}$ and $z_{2}$ where $z_{1} \leq z_{2}$, the sets $\{1,2,\ldots,z_{1}\}$ and $\{z_{1},z_{1}+1,\ldots,z_{2}\}$ are denoted by
$\langle z_{1} \rangle$ and $\langle z_{1}:z_{2}\rangle$, respectively.
For any real number  $a \in \mathbb{R}$, we have $(a)^{+} = \max\{0,a\}$.
Bold symbols  denote tuples, e.g. $\mathbf{a} = (a_{1},\ldots,a_{Z})$ 
and $\bm{A} = (A_{1},\ldots,A_{Z})$, while 
calligraphic symbols denote sets, e.g. $\mathcal{A} = \{ a_{1},\ldots,a_{Z} \}$.
For any pair of sets $\mathcal{A},\mathcal{B} \subseteq \mathbb{R}^{K}$, their
Minkowski sum $\mathcal{A} \oplus \mathcal{B} $ is a set in $ \mathbb{R}^{K}$
defined as $\mathcal{A} \oplus \mathcal{B} \triangleq \left\{ \mathbf{a} + \mathbf{b} : \mathbf{a} \in \mathcal{A}, \mathbf{b} \in \mathcal{B}  \right\}$.
The indicator function with condition $\mathcal{A}$ is denoted by $\mathbbm{1}(\mathcal{A})$,
which is $1$ when $\mathcal{A}$ holds and  $0$ otherwise. 
For functions $f(K)$ and $g(K)$, we have $f(K)  = \Theta\big( g(K) \big)$ if $\lim_{K \to \infty}f(K) / g(K) = c$, 
where $c > 0$  is a finite constant.
\section{Problem Setting}
\label{sec:problem_setting}
Consider a $K$-cell cellular network in which each cell $k$, where $k \in \langle K \rangle$, comprises a base station denoted by BS-$k$ and $L$ user equipments, each denoted by UE-$(l_{k},k)$, where $l_{k} \in \langle L \rangle$.
The set of tuples corresponding to all UEs in the network is given by
$\mathcal{U} \triangleq \left\{(l_{k},k) : l_{k} \in \langle L \rangle, k \in 
\langle K \rangle  \right\}$.

We focus on the downlink mode, 
where an independent message denoted by $W_{k}^{[l_k]}$ is communicated to each UE-$(l_{k},k)$. 
The input-output relationship at the $t$-th channel use is given by
\begin{equation}
\label{eq:IBC_system model}
Y_{k}^{[l_{k}]}(t)
 = \sum_{i = 1}^{K} \bar{P}^{\alpha_{ki}^{[l_{k}]}}  G_{ki}^{[l_{k}]}(t) X_{i}(t)
+ Z_{k}^{[l_{k}]}(t).
\end{equation}
In the above, $ X_{i}(t) , Y_{k}^{[l_{k}]}(t), Z_{k}^{[l_{k}]}(t) \in \mathbb{C}$ are, respectively, 
the symbol transmitted by BS-$i$, the symbol received by UE-$(l_{k},k)$, and the zero-mean unit-variance additive white Gaussian noise (AWGN) at UE-$(l_{k},k)$.
The signal transmitted by BS-$i$ is subject to the unit average power constraint 
$\frac{1}{T}\sum_{t=1}^{T}\E \big[\big|X_{i}(t)\big|^{2}\big] \leq 1$,
where $T$ is the communication duration in channel uses.
The $T$-channel-use-long signal (or codeword) of BS-$i$ is given by 
$ \bm{X}_{i} \triangleq \big(X_{i}(t) :  t \in \langle T \rangle \big)$.

For GDoF purposes, we define $\bar{P} \triangleq  \sqrt{P}$, where $P$ is a nominal power parameter
that approaches infinity in the GDoF limit.
The exponent ${\alpha_{ki}^{[l_{k}]}} > 0$ is known as the channel strength parameter between BS-$i$ and UE-$(l_{k},k)$, while $G_{ki}^{[l_{k}]}(t) \in \mathbb{C}$ is the corresponding channel fading coefficient.
The array of all fading coefficients is given by $\bm{G} \triangleq \big( G_{ki}^{[l_{k}]}(t) : (l_k,k) \in \mathcal{U}, i \in \langle K \rangle, t \in \langle T \rangle \big)$.
We assume that UEs have access to both channel strength parameters and channel fading coefficients, while BSs know channel strength parameters (and the probability distribution of fading coefficients) only.
These assumptions, which are discussed in 
greater detail further on, lead to robust GDoF characterizations, that are far 
less sensitive to perturbations in the channel state as seen by transmitters, compared
to GDoF results that assume perfect channel knowledge at the transmitters.  

We define $\bm{\alpha} \in \mathbb{R}_{+}^{K \times K \times L}$ as the $3$-dimensional array comprising all channel strength parameters of a given network, where  the $(i,j,l_i)$-th element of $\bm{\alpha}$ is given by 
$\bm{\alpha}(i,j,l_i) = \alpha_{i j}^{[l_i]}$.
Note that $\bm{\alpha}$ describes the topology of a $K \times KL$ network, specifying the strengths of connections between different BS-UE pairs; and hence we will often refer to $\bm{\alpha}$ as a network. As seen further on, we focus
on special regimes (i.e. subsets of networks) specified by imposing conditions on $\bm{\alpha}$.
\begin{remark}
For ease of exposition, we assume that we have an equal number of users given by $L$ in each cell.
As far as GDoF results are concerned, this symmetrization in the number of users per-cell incurs no loss of generality---for any cell $i$ with $L_{i} < L$ users, we may add $L - L_i$ trivial users, with strengths $\alpha_{ij}^{[l_i]} = 0$ for all $j \in \langle K \rangle$, which have no influence on the GDoF.
\hfill $\lozenge$
\end{remark}
\begin{remark}
\label{rermak:GDoF_DoF}
In the considered model, the capacity of a link connecting BS-$j$ and UE-$(l_i,i)$  is given by $C_{ij}^{[l_i]}(P) \approx {\alpha_{ij}^{[l_i]} } \log ( P ) $.
While individual link capacities, and hence the network capacity, are scaled up to infinity in the GDoF limit, defined by taking $P \to \infty$ while normalizing by $\log(P)$, their ratios remain approximately fixed, i.e.  $C_{ij}^{[l_i]}(P) / C_{kl}^{[l_k]}(P) \approx 
\alpha_{ij}^{[l_i]}/ \alpha_{kl}^{[l_k]}$.
This is the main advantage of  the GDoF  model \cite{Etkin2008,Jafar2010} compared to the less refined (yet more intuitive) DoF model, recovered by setting $\alpha_{ij}^{[l_i]} = 1$ for all links \cite{Cadambe2008,Suh2011}. 
Indeed, the GDoF model allows us to de-emphasize the effects of additive noise, hence focusing on the interaction between different signals, while maintaining the disparity amongst the strengths of different links, which in turn allows us to distinguish between different regimes (e.g. weak and strong interference regimes).  A caveat is the overwhelming complexity incurred by the many parameters in $\bm{\alpha}$.
\hfill $\lozenge$
\end{remark}
\subsection{Finite Precision CSIT}
We assume that channel strength parameters $\bm{\alpha}$ are perfectly known to both the transmitters (BSs) and the receivers (UEs). 
Channel fading coefficients $\bm{G}$, however, are perfectly known to the receivers but only available up to finite precision at the transmitters (finite precision CSIT).
In particular, the channel fading coefficients in $\bm{G}$ satisfy the so-called bounded density assumption \cite{Davoodi2016} (see also \cite[Def. 1]{Davoodi2017a}); and the transmitters are only aware of the joint probability density functions of the channel fading coefficients and not their actual realizations.\footnote{A set of real-valued random variables $\mathcal{G}$ satisfies the bounded density assumption if 
for all finite-cardinality disjoint subsets  $\mathcal{G}_1$,$\mathcal{G}_2$ of $\mathcal{G}$, the joint probability density function of random variables in $\mathcal{G}_1$ conditioned on random variables in $\mathcal{G}_2$ exists and is bounded above by  $f_{\max}^{|\mathcal{G}_1|}$, for some constant $1 \leq f_{\max} < \infty$ (independent of $P$). 
In the case of independent channel coefficients, it is sufficient to have the marginal densities bounded by $f_{\max}$.
For complex-valued random variables, $\mathcal{G}$ is taken as the union of the sets of all real parts and all imaginary parts (see \cite[Sec. II.C]{Davoodi2017a}).} 
Therefore, the transmitted codewords $\bm{X}_{1},\ldots,\bm{X}_{K}$ are independent of the realizations of channel fading coefficients in $\bm{G}$, yet may depend on their joint distributions as well as channel strength parameters.
For more on the significance of the finite precision CSIT assumption, and a simple instructive example highlighting the essential  features of this model, readers are referred to\cite[Sec. II.C]{Chan2020}.
\begin{remark}
It is worthwhile highlighting that the above-described finite precision CSIT model can be easily extended to include imperfect feedback of channel fading coefficients, where transmitters have access to an estimate $\hat{\bm{G}}$ of $\bm{G}$ (see\cite[Sec. II.D]{Davoodi2016}).
In this case, $\bm{G}$ must satisfy the bounded density assumption when conditioned on $\hat{\bm{G}}$; and 
the transmitted codewords $\bm{X}_{1},\ldots,\bm{X}_{K}$ are independent of $\bm{G}$ given $\hat{\bm{G}}$.
For ease of exposition, we proceed without including an estimate $\hat{\bm{G}}$ in the CSIT model, however, 
results in this paper extend directly to such case of imperfect feedback.
\hfill $\lozenge$
\end{remark}
\subsection{Levels of Multi-Cell Cooperation}
\subsubsection{No Cooperation (IBC)}
\label{subsec:no_cooperation}
Under no cooperation, each BS has access to the set of messages intended to  UEs in the same cell only,  and the network is modeled by an IBC comprising $K$ mutually interfering Gaussian BCs (cells)---see Fig. \ref{fig:IBC_MBC}(left).
For instance, BS-$k$ has messages $\bm{W}_k = \big( W_{k}^{[l_k]} : l_k \in \langle L \rangle \big) $ and encodes them into the codeword  $ \bm{X}_{k}$, 
independently of all other BSs. 
On the other end,  UE-$(l_{k},k)$ in cell $k$ sees the contributions from all 
codewords $\bm{X}_{i}$ with $i \in \langle K \rangle \setminus \{k \}$ as inter-cell interference.

For any given $P$, achievable rate tuples $\mathbf{R}(P)= \big(R_{k}^{[l_{k}]}(P): (l_{k} , k) \in \mathcal{U}\big)$ and the capacity region $\mathcal{C}^{\mathrm{IBC}}(P)$  are defined in  a standard manner, see, e.g.  \cite{Joudeh2019a,Joudeh2019b}.
A GDoF tuple is denote by $\mathbf{d}= \big(d_{k}^{[l_{k}]}: (l_{k}, k) \in \mathcal{U}\big)$, and the GDoF region is defined in a standard fashion as 
\begin{equation}
\nonumber
\mathcal{D}^{\mathrm{IBC}} \triangleq  \left\{ \mathbf{d} \in \mathbb{R}_{+}^{|\mathcal{U}|}
:
d_{k}^{[l_k]} = \lim_{P \to \infty}\frac{R_{k}^{[l_k]}(P)}{\log(P)},  \  
\mathbf{R}(P) \in \mathcal{C}^{\mathrm{IBC}}(P)
\right\}.
\end{equation}
\subsubsection{Full Cooperation (MISO-BC)}
\label{subsec:full_cooperation}
Under full cooperation, all BSs have access to all messages $\big( \bm{W}_1  , \ldots, \bm{W}_K \big)$, jointly encoded into the vector codeword $\bm{X} \triangleq 
(\bm{X}_{1}, \ldots, \bm{X}_{K})$, comprising $K$ scalar codewords, and of which the $k$-th component $\bm{X}_{k}$ is transmitted through BS-$k$.
In this cooperative setting, each BS is viewed as an antenna in a large multi-antenna transmitter, and the network is modeled by a MISO-BC\footnote{This setting, with more receivers than transmit antennas, is also called an \emph{overloaded} MISO-BC \cite{Piovano2016}.} with a $K$-antenna transmitter and $KL$ single-antenna receivers---see Fig. \ref{fig:IBC_MBC}(right). 
The capacity region of the MISO-BC  is denoted by $\mathcal{C}^{\mathrm{MBC}}(P)$, and the GDoF region is defined  as 
\begin{equation}
\nonumber
\mathcal{D}^{\mathrm{MBC}} \triangleq \left\{ \mathbf{d} \in \mathbb{R}_{+}^{|\mathcal{U}|}
:
d_{k}^{[l_k]} = \lim_{P \to \infty}\frac{R_{k}^{[l_k]}(P)}{\log(P)},  \  
\mathbf{R}(P) \in \mathcal{C}^{\mathrm{MBC}}(P)
\right\}.
\end{equation}
\begin{figure}[t]
\centering
\includegraphics[width = 0.8\textwidth]{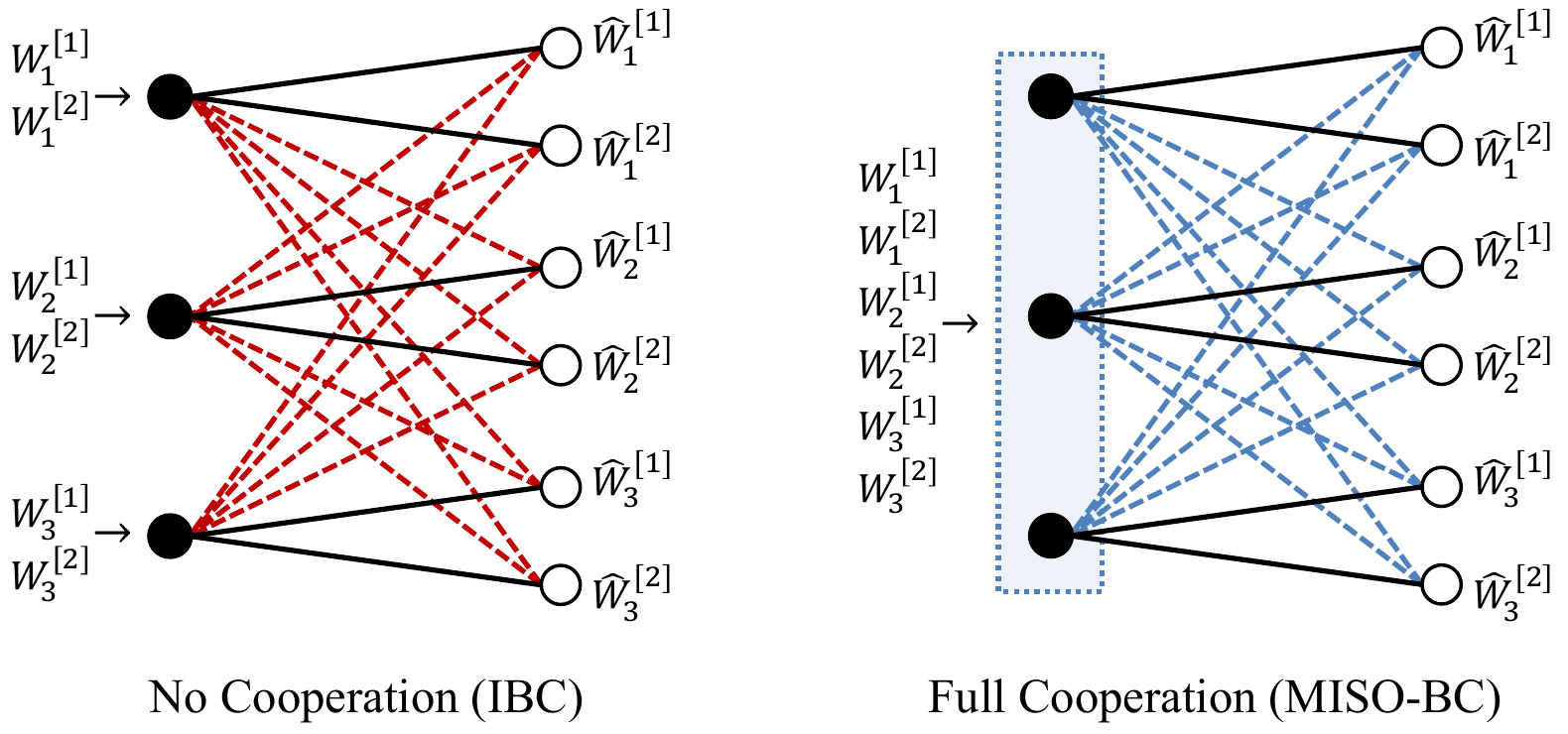}
\caption{
$3$-cell network with $2$ users per-cell under different levels of cooperation. Left: No multi-cell cooperation (IBC). Right: 
Full multi-cell cooperation (MISO-BC).}
\label{fig:IBC_MBC}
\end{figure}
\subsection{From Multi-Cell TIN to Multi-Cell SLS}
The finite precision CSIT assumption eliminates the GDoF benefits of schemes that rely on precise CSIT \cite{Davoodi2016}, such as IA for the IBC \cite{Suh2011}, and ZF and DPC for the MISO-BC \cite{Caire2003},
which naturally gives way to classes of robust schemes
based on power control, TIN, rate-splitting and SLS
\cite{Geng2015,Geng2016,Sun2016,Yi2016,Geng2015a,Gherekhloo2017,Yi2020,Gherekhloo2016,Joudeh2019a,Joudeh2019b,
Joudeh2016,Piovano2016,Davoodi2019,Joudeh2020},
which only require coarse CSIT  (i.e. channel strength parameters).
\subsubsection{Multi-Cell TIN}
\label{subsubsec:scheme_TIN}
Perhaps the simplest robust multi-cell scheme, for which somewhat general optimality results can be shown, is the mc-TIN scheme  \cite{Joudeh2019b}.
In mc-TIN, no BS cooperation is assumed and each BS employs a power-controlled single-cell-type transmission scheme, while each UE treats all inter-cell interference as additional Gaussian noise.
This boils down to superposition coding and successive decoding in each cell.
The signal of BS-$i$ is composed as
\begin{equation}
X_{i}(t) = \sum_{l_i \in \langle L \rangle } \sqrt{q_{i}^{[l_i]} } X_{i}^{[l_i]}(t)
\end{equation}
where each message $W_{i}^{[l_i]}$ is encoded into the signal $\bm{X}_{i}^{[l_i]}$ using an independent Gaussian codebook with unit average power, 
and $q_{i}^{[l_i]} \geq 0$ is the power allocated to UE-$(l_i,i)$ such
that the per-BS power constraint $\sum_{l_i \in \langle L \rangle  }  q_{i}^{[l_i]} \leq 1$
is not violated.
For successive decoding orders given by the permutation functions $p_1,\ldots, p_K$, 
each UE-$\big(p(l),i \big)$ successively decodes and cancels
the signals $X_{i}^{[p_i(1)]},\ldots, X_{i}^{[p_i(l-1)]}$, in this specific order, before decoding its own signal $X_{i}^{[p_i(l)]}$, while treating all other signals (i.e. both intra-cell and inter-cell interference) as noise.

The set of GDoF tuples achieved  through all feasible power allocation policies and successive decoding orders constitute the mc-TIN achievable GDoF region (i.e. TINA region), denoted by $\mathcal{D}^{\mathrm{TINA}}$ in this paper. An explicit characterizations of the TINA region is obtained in \cite{Joudeh2019b}.
\subsubsection{Multi-Cell SLS}
\label{subsubsec:scheme_SLS}
Under full BS cooperation, i.e. the MISO-BC with finite precision CSIT,  there is generally no non-trivial regime for which mc-TIN is GDoF optimal. 
The closest robust alternative in this case is the SLS scheme \cite{Davoodi2019}, which we call mc-SLS in multi-cell settings.\footnote{Schemes similar to SLS have been investigated in tandem with precoder design under the name of rate-splitting, see, e.g., \cite{Mao2018,Li2020}. These works focus on optimization aspects, with less emphasis on fundamental limits.}  
In the most general form of mc-SLS, the $K$ BSs jointly transmit a superposition of independent Gaussian codewords $(\bm{X}_{\mathcal{S}}: \mathcal{S} \subseteq \mathcal{U} )$,
where each $\bm{X}_{\mathcal{S}}$  is decoded by all users in the subset $\mathcal{S}$, and treated as additional Gaussian noise by users not in $\mathcal{S}$.
The signal transmitted by BS-$i$ is hence composed as 
\begin{equation}
X_{i}(t) = \sum_{\mathcal{S} \subseteq \mathcal{U}} \sqrt{q_{\mathcal{S},i}} X_{\mathcal{S}}(t)
\end{equation}
where $q_{\mathcal{S},i} \geq 0$ is the power allocated to codeword $\bm{X}_{\mathcal{S}}$
by BS-$i$, such that
$ \sum_{\mathcal{S} \subseteq \mathcal{U}} q_{\mathcal{S},i} \leq 1$.
UEs employ successive decoding with arbitrary orders, optimized alongside power control variables.

The above scheme is essentially a multiple multicast transmission, where a distinct message is communicated to each subset of users, and the GDoF achieved by different messages comprise the tuple $(d_{\mathcal{S}} : \mathcal{S} \subseteq \mathcal{U})$.
To go from the original unicast messages $\big(W_{k}^{[l_k]} : (l_k,k) \in \mathcal{U}\big)$ to multiple multicast transmission, rate-splitting is used.
Each message $W_{k}^{[l_k]}$ is split into sub-messages given by
$\big(W_{k,\mathcal{S}}^{[l_k]}  : \mathcal{S} \subseteq \mathcal{U}, (l_k,k) \in \mathcal{S}\big)$, and then each set of sub-messages indexed by the same $\mathcal{S} \subseteq \mathcal{U}$, i.e.  
$\bm{W}_{\mathcal{S}} = \big(W_{k,\mathcal{S}}^{[l_k]}  :  (l_i,i) \in \mathcal{U} \big)$,
is encoded into the codeword $\bm{X}_{\mathcal{S}}$.
It follows that
\begin{equation}
\label{eq:multiple_multicast_GDoF}
d_{\mathcal{S}} = \sum_{(l_i,i) \in \mathcal{S}} d_{i,\mathcal{S}}^{[l_i]}
\end{equation} 
where $d_{i,\mathcal{S}}^{[l_i]}$ is the GDoF carried by sub-message $W_{i,\mathcal{S}}^{[l_i]} $, i.e. the portion of $d_{\mathcal{S}} $ assigned to UE-$(l_i,i)$.
It follows that the SLS region, denoted by  $\mathcal{D}^{\mathrm{SLS}} $, is given by all GDoF tuples $\mathbf{d} \in \mathbb{R}_{+}^{|\mathcal{U}|}$
that satisfy 
\begin{align}
d_i^{[l_i]} \leq  \sum_{\mathcal{S} \subseteq \mathcal{U}: (l_i,i) \in \mathcal{S}} d_{i,\mathcal{S}}^{[l_i]}
\end{align}
for some multiple multicast GDoF tuple $(d_{\mathcal{S}} :  \mathcal{S} \subseteq \mathcal{U} )$, with components given by \eqref{eq:multiple_multicast_GDoF}, achieved through a feasible power control policy and successive decoding orders.
It is readily seen that the SLS scheme depends on a high number of auxiliary design variables---a major barrier in the face of obtaining an explicit characterization of $\mathcal{D}^{\mathrm{SLS}}$ and proving optimality results \cite{Davoodi2019}.
However, in some cases, the representation of $\mathcal{D}^{\mathrm{SLS}}$  can be simplified as we will see further on in Section \ref{sec:2_cell_SLS}. 
\section{Regimes of Interest and Useful Definitions}
In this section, we present a number of definitions which are essential for the formulation and proofs of the main results, presented in subsequent sections. 
We start by highlighting that, without loss of generality, we may assume that in each cell $i$,  direct link strengths are ordered as
\begin{equation}
\label{eq:strength_order}
\alpha_{ii}^{[1]}  \leq  \alpha_{ii}^{[2]} \leq \cdots \leq \alpha_{ii}^{[L]}.
\end{equation}
That is, same-cell users are in an ascending order with respect to their SNRs.
Note that in the absence of inter-cell interference, \eqref{eq:strength_order}  determines the degradedness order in each cell (or BC).

Throughout this work, we focus on three regimes of channel parameters (or networks). Each regime is a subset of  $\mathbb{R}_{+}^{K \times K \times L}$, and can be thought of as a collection of networks that share certain (desirable) properties.
Before introducing the regimes of interest, it is worthwhile highlighting that all three regimes are included in a larger \emph{weak} inter-cell interference regime, described by the set of all networks $\bm{\alpha} \in \mathbb{R}_{+}^{K \times K \times L}$ that satisfy the SIR order
\begin{equation}
\label{eq:SIR_order}
0 \leq  \alpha_{ii}^{[1]} - \alpha_{ij}^{[1]} \leq \alpha_{ii}^{[2]} - \alpha_{ij}^{[2]} \leq \cdots \leq \alpha_{ii}^{[L]} - \alpha_{ij}^{[L]} 
\end{equation}
for all $i,j \in \langle K \rangle$.
We find it quite instructive to think of \eqref{eq:SIR_order}, as well as the three regimes of interest introduced further on,  in terms of two types of conditions: 1) inter-cell conditions, which control interference levels between distinct cells; and 2) intra-cell conditions, which govern (and preserve) the order amongst same-cell users under inter-cell interference. 

The first type of conditions is captured by the left-most inequality in \eqref{eq:SIR_order}, which (alongside the other inequalities) implies that for all $(l_i,i) \in \mathcal{U}$ and $j \in \langle K \rangle$, we have
\begin{equation}
\label{eq:inter_cell_order}
\alpha_{ii}^{[l_i]}  \geq \alpha_{ij}^{[l_i]}.
\end{equation}
That is, a direct link between a BS and any of its associated UEs must be no weaker than interfering (or cross) links to the same UE.
This places the cellular network in the \emph{weak} inter-cell interference regime, an analog of the IC's weak interference regime \cite{Etkin2008}.
As we will see in the following subsection, all three regimes of interest fall within this weak inter-cell interference regime.

The second type of conditions is captured by the right-most $L -1 $ inequalities in \eqref{eq:SIR_order}, which in turn guarantee that for any cell $i$, and against interference from any other cell $j$, the SIR order of same-cell users should follow their SNR order in \eqref{eq:strength_order}. 
In other words, a stronger user in the SNR sense must also be stronger in the SIR sense, that is
\begin{equation}
\label{eq:intra_cell_order}
\alpha_{ii}^{[l_i]} \geq \alpha_{ii}^{[l_i-1]} \implies \alpha_{ii}^{[l_i]} - \alpha_{ij}^{[l_i]} \geq \alpha_{ii}^{[l_i-1]} - \alpha_{ij}^{[l_i-1]}.
\end{equation}
As we will see, the above SIR order also holds in all three regimes of interest.
\begin{remark}
The SIR order in \eqref{eq:SIR_order} greatly simplifies the mc-TIN scheme described in Section \ref{subsubsec:scheme_TIN}.
In particular, under this order it is optimum from the TINA region perspective to use the optimum interference-free successive decoding order (natural order) in each cell, i.e.   $p_i(l) = l$ for all $l \in \langle L \rangle$ and $i \in \langle K \rangle$.
This is not necessarily the case when \eqref{eq:SIR_order} does not hold, see \cite{Joudeh2019a,Joudeh2019b}.
Moreover, the order in \eqref{eq:SIR_order} may be seen as quite natural---cell-centre users with high long-term SNRs are also expected to have higher long-term SIRs compared to cell-edge users.
\hfill $\lozenge$
\end{remark}
\subsection{Multi-Cell TIN, CTIN and SLS regimes}
\label{subsec:regimes}
We now present the three regimes of interest, starting with the mc-TIN regime 
\cite{Joudeh2019a,Joudeh2019b}. 
\begin{definition}
\label{def:TIN_regime}
\textbf{(mc-TIN Regime).}
This regime is denoted by $\mathcal{A}^{\mathrm{TIN}}$, and is given by all networks $\bm{\alpha} \in \mathbb{R}_{+}^{K \times K \times L}$ that satisfy
\begin{align}
\label{eq:TIN_cond_2}
\alpha_{ii}^{[l_i]}   & \geq \alpha_{ij}^{[l_i]} +  \alpha_{ki}^{[l_{k}]} \\
\label{eq:TIN_cond_1}
\alpha_{ii}^{[l_i ]}   & \geq  \alpha_{ij}^{[l_i ]} +  \alpha_{ii}^{[l_{i} - 1 ]} - 
\big( \alpha_{ij}^{[l_i - 1]} - \alpha_{ij}^{[l_i]} \big)^{+}
\end{align}
for all cells $i,j,k \in \langle K \rangle$ such that  $i \notin \{j,k\}$, and for all users  $l_{i} \in \langle 2:L \rangle$ and $ l_{k} \in \langle L \rangle$.
\hfill $\lozenge$
\end{definition}
It can be verified that the inequalities in \eqref{eq:TIN_cond_2} and \eqref{eq:TIN_cond_1}  are stricter versions of their counterparts in \eqref{eq:inter_cell_order} and \eqref{eq:intra_cell_order}, respectively. 
The mc-TIN regime derives its significance from the fact that mc-TIN achieves the entire GDoF region of the IBC in this regime \cite{Joudeh2019b}.  
Hence, this regime can be thought of as a \emph{very weak} inter-cell interference regime.
The mc-TIN regime generalizes the TIN regime introduced for $K\times K$ interference networks by Geng et al. \cite{Geng2015} to $K \times KL$ cellular networks.
Next, we present a strictly larger regime called the mc-CTIN regime \cite{Joudeh2019a,Joudeh2019b}. 
\begin{definition}
\label{def:CTIN_regime}
\textbf{(mc-CTIN Regime).}
This regime is denoted by $\mathcal{A}^{\mathrm{CTIN}}$, and is given by all networks $\bm{\alpha} \in \mathbb{R}_{+}^{K \times K \times L}$ that satisfy
\begin{align}
\label{eq:CTIN_cond_2}
\alpha_{ii}^{[l_i]}   & \geq  \max \big(\alpha_{ij}^{[l_i]}  +  \alpha_{ji}^{[l_{j}]},  \alpha_{ik}^{[l_i]} + \alpha_{ji}^{[l_{j}]} - \alpha_{jk}^{[l_{j}]} \big) \\
\label{eq:CTIN_cond_1}
\alpha_{ii}^{[l_i ]}   & \geq  \alpha_{ij}^{[l_i ]} +   \alpha_{ii}^{[l_{i} - 1]} - \alpha_{ij}^{[l_i - 1]}
\end{align}
for all cells $i,j,k \in \langle K \rangle$ such that  $i \notin \{j,k\}$, and for all users  $l_{i} \in \langle 2:L \rangle$ and $ l_{j} \in \langle L \rangle$.
\hfill $\lozenge$
\end{definition}
Note that \eqref{eq:CTIN_cond_2} is a stricter version of \eqref{eq:inter_cell_order}, while \eqref{eq:CTIN_cond_1} is identical to \eqref{eq:intra_cell_order}.
In the mc-CTIN regime, the TINA GDoF region $\mathcal{D}^{\mathrm{TINA}}$ is a convex polyhedron without the need for time-sharing, which is not necessarily the case outside the mc-CTIN regime \cite{Joudeh2019b,Joudeh2019a}.
The mc-CTIN regime generalizes the CTIN regime introduced for $K \times K$ interference networks by Yi and Caire \cite{Yi2016}.

While the TINA region is convex in the mc-CTIN regime, the GDoF region of the IBC  remains unknown in this regime. In Section \ref{subsec:result_mc_CTIN}, we settle this question under the assumption of finite precision CSIT, and we show that
mc-TIN is GDoF optimal for the IBC in the mc-CTIN regime.
Next, we introduce the largest of the three regimes, which we call
the mc-SLS regime. 
\begin{definition}
\label{def:SLS_regime}
\textbf{(mc-SLS Regime).}
This regime is denoted by $\mathcal{A}^{\mathrm{SLS}}$, and is given by all networks $\bm{\alpha} \in \mathbb{R}_{+}^{K \times K \times L}$ that satisfy
\begin{align}
\label{eq:SLS_cond_2}
\alpha_{ii}^{[l_i]}   & \geq  \max \big(\alpha_{ij}^{[l_i]}  ,  \alpha_{ki}^{[l_{k}]},  \alpha_{ik}^{[l_i]} + \alpha_{ji}^{[l_{j}]} - \alpha_{jk}^{[l_{j}]} \big) \\
\label{eq:SLS_cond_1}
\alpha_{ii}^{[l_i ]}   & \geq  \alpha_{ij}^{[l_i ]} +   \alpha_{ii}^{[l_{i} - 1]} - \alpha_{ij}^{[l_i - 1]}
\end{align}
for all cells $i,j,k \in \langle K \rangle$ such that  $i \notin \{j,k\}$, and for all users  $l_{i} \in \langle 2:L \rangle$ and $ l_{k} \in \langle L\rangle$.
\hfill $\lozenge$
\end{definition}
Note that \eqref{eq:SLS_cond_2} is a stricter version of \eqref{eq:inter_cell_order}, while \eqref{eq:SLS_cond_1} is identical to \eqref{eq:intra_cell_order}.
Moreover, we have
\begin{equation}
\nonumber
\mathcal{A}^{\mathrm{TIN}} \subseteq \mathcal{A}^{\mathrm{CTIN}} \subseteq \mathcal{A}^{\mathrm{SLS}}.
\end{equation}
The mc-SLS regime extends the SLS regime introduced by Davoodi and Jafar \cite{Davoodi2019} to $K \times KL$ cellular networks.
The significance of the SLS regime is due to the fact that the SLS scheme is GDoF optimal for $K \times K$ MISO-BCs with $K \leq 3$  in this regime.
While this result may also hold for $K > 3$,  no proof is heretofore available, mainly due to the overwhelming complexity of the GDoF region when $K$ is arbitrary.
Nevertheless, the SLS regime is still very useful in the sense that it lends itself to extremal network analysis as recently shown by Chan et al. \cite{Chan2020}, where extremal gains of  transmitter cooperation over TIN are explicitly obtained. 

In Sections \ref{subsec:results_SLS} and \ref{subsec:results_extremal_gains}, we show that $K \times KL$ cellular networks enjoy desirable properties in the mc-SLS regime. 
In particular, we derive a malleable outer bound for  the $K \times KL$  MISO-BC in the mc-SLS regime, which lends itself directly to extremal network analysis, through which we bound the gain of mc-Co over mc-TIN in all three regimes of interest.
\subsection{Useful Definitions}
\begin{definition}
\label{def:cycle}
\textbf{(Cycles).}
A cycle $\pi$ of length $|\pi| = M$ is an ordered sequence of $M$ users from distinct cells, given by
\begin{equation}
\label{eq:cycle_pi}
\pi  = \big( (l_{i_1},i_1) \rightarrow (l_{i_2},i_2) \rightarrow \cdots \rightarrow (l_{i_M},i_M) \big).
\end{equation}
We define $\{\pi\} \triangleq  \big\{ (l_{i_1},i_1), (l_{i_2},i_2) , \cdots , (l_{i_M},i_M) \big\}$ as the set of users involved in cycle $\pi$.
The $m$-th user in a cycle $\pi$ is also denoted by $\pi(m) = (l_{i_m},i_m)$, 
from which \eqref{eq:cycle_pi} is equivalently expressed as 
$\pi  = \big( \pi(1)  \rightarrow \cdots \rightarrow\pi(M) \big) $.
The set of all cycles (of all lengths) is denoted by $\Pi$.
Each cycle $\pi$ is associated with an implicit cycle encompassing BS indices, given by
\begin{equation}
\sigma   = (i_1 \rightarrow i_2 \rightarrow \cdots \rightarrow i_M)
\end{equation}
which is also written as $\sigma   = \big( \sigma(1)  \rightarrow \cdots \rightarrow \sigma(M) \big) $.
For any cycle of length $M$, indices are interpreted modulo $M$, e.g.
$\pi(M+m) = \pi(m)$ and $\sigma(M+m) = \sigma(m)$, for all integers $m$.
\hfill $\lozenge$
\end{definition}
Next, we define cycle bounds for the IBC and MISO-BC associated with cycles defined above.
For each cycle $\pi \in \Pi$, a cycle bound is a bound on the sum-GDoF of users in the set 
\begin{equation}
\nonumber
 \big\{ (s_k,k) : s_k \in \langle l_k \rangle , (l_k,k) \in \{\pi\}  \big\}.
\end{equation} 
That is, the set comprising each participating user $(l_k,k) \in \{\pi\}$, as well as same-cell users that precede
user $(l_k,k) $ in the SNR (or SIR) order.
Cycle bounds are defined as follows.
\begin{definition}
\label{def:cycle_bounds_IBC}
\textbf{(IBC Cycle Bounds).}
The IBC cycle bound associated with $\pi \in \Pi$ is given by 
\begin{equation}
\label{eq:IBC_cycle_bound}
\sum_{(l_k,k) \in \{\pi\} }  \sum_{s_k \in \langle l_k \rangle}   d_{k}^{[s_k]} \leq \Delta_{\pi}
\end{equation}
where the quantity $\Delta_{\pi}$ is defined as
\begin{equation}
\Delta_{\pi} \triangleq 
\begin{cases} 
\alpha_{i_1 i_ 1}^{[l_{i_1}]}, & \textrm{if} \ M = 1 \\
\sum_{m = 1}^{M} \Big(  \alpha_{i_m i_m}^{[l_{i_m}]} - 
\alpha_{i_{m+1} i_m}^{[l_{i_{m+1}}]} \Big) , & \textrm{if} \ M > 1.
\end{cases}
\end{equation}
As shown in \cite{Joudeh2019b}, the TINA region $\mathcal{D}^{\mathrm{TINA}}$ can be expressed as a union of polyhedra (polyhedral TIN regions), each described in terms of the above IBC cycle bounds.
These cycle bounds will also constitute outer bounds for the IBC in the mc-CTIN regime, as we will show further on.
\hfill $\lozenge$
\end{definition}
\begin{definition}
\label{def:cycle_bounds_MBC}
\textbf{(MISO-BC Cycle Bounds).}
Each  $\pi \in \Pi$ gives rise to $M = |\pi|$ MISO-BC cycle bounds.
The $m$-th bound associated with $\pi$ is given by
\begin{equation}
\sum_{(l_k,k) \in \{\pi\} }  \sum_{s_k \in \langle l_k \rangle}   d_{k}^{[s_k]} \leq \Delta_{\pi,m}^{+}
\end{equation}
where the  quantity 
$\Delta_{\pi,m}^{+}$ is defined as 
\begin{equation}
\Delta_{\pi,m}^{+} \triangleq 
\begin{cases} 
\Delta_{\pi}, & \textrm{if} \ M = 1, \\
\Delta_{\pi} + \alpha_{i_{m+1} i_m}^{[l_{i_{m+1}}]}, & \textrm{if} \ M > 1.
\end{cases}
\end{equation}
The above cycle bounds will constitute outer bounds for the MISO-BC in the mc-SLS regime, as we will show further on.
\hfill $\lozenge$
\end{definition}
\section{Main Results and Insights}
In this section, we present the main results of this work, alongside some observations and insights.
It is worthwhile highlighting that our main results are all under the assumption of finite precision CSIT. 
Therefore, this is always assumed (implicitly), unless mentioned otherwise.
\subsection{Optimality of Multi-Cell TIN}
\label{subsec:result_mc_CTIN} 
From the IBC cycle bounds in Definition \ref{def:cycle_bounds_IBC}, we construct what is known as the polyhedral-TIN (PTIN) region, denoted as $\mathcal{D}^{\mathrm{PTIN}}$. This is given by all tuples $\mathbf{d} \in \mathbb{R}_{+}^{KL}$ that satisfy
\begin{align}
\label{eq:TIN_region_CTIN_regime}
 \sum_{(l_k,k) \in \{ \pi \}}  \bar{d}_k^{[ l_k ]}  &   \leq \Delta_{\pi},  \forall \pi \in \Pi
\end{align}
where $\bar{d}_k^{[ l_k ]} $ is a shorthand notation for $\sum_{s_k \in \langle l_k \rangle} d_k^{[ s_k ]} $.
From \cite{Joudeh2019b}, we know that $\mathcal{D}^{\mathrm{TINA}} = \mathcal{D}^{\mathrm{PTIN}}$ in the mc-CTIN regime,
and $ \mathcal{D}^{\mathrm{IBC}} = \mathcal{D}^{\mathrm{TINA}} = \mathcal{D}^{\mathrm{PTIN}}$ in the mc-TIN regime.\footnote{These statements hold regardless of the CSIT assumptions. Moreover, outside the mc-CTIN regime, 
$\mathcal{D}^{\mathrm{TINA}}$ is a union of PTIN regions, each defined by selecting a subset of active users and deactivating remaining users.}
Next, we show that the latter also holds in the mc-CTIN regime under finite precision CSIT.
\begin{theorem}
\label{theorem:CTIN_optimality}
In the mc-CTIN regime, mc-TIN is GDoF optimal for the $K \times KL$ IBC under finite precision CSIT. 
That is,
$\bm{\alpha} \in \mathcal{A}^{\mathrm{CTIN}} \implies 
\mathcal{D}^{\mathrm{IBC}} = 
\mathcal{D}^{\mathrm{TINA}} = \mathcal{D}^{\mathrm{PTIN}}$.
\end{theorem}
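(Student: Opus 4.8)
The plan is to separate achievability from the converse. By \cite{Joudeh2019b}, $\mathcal{D}^{\mathrm{TINA}} = \mathcal{D}^{\mathrm{PTIN}}$ throughout the mc-CTIN regime, and this identity is purely combinatorial and independent of the CSIT model; moreover mc-TIN is a legitimate IBC scheme that uses only the channel strength parameters, so $\mathcal{D}^{\mathrm{TINA}} \subseteq \mathcal{D}^{\mathrm{IBC}}$ continues to hold under finite precision CSIT. Everything therefore reduces to the converse $\mathcal{D}^{\mathrm{IBC}} \subseteq \mathcal{D}^{\mathrm{PTIN}}$, i.e. to showing that every GDoF tuple achievable for the $K \times KL$ IBC with finite precision CSIT satisfies all the IBC cycle bounds \eqref{eq:TIN_region_CTIN_regime} (the nonnegativity constraints being immediate). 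The length-one cycles are disposed of by a standard single-cell argument: in the mc-CTIN regime the SNR/SIR orders make each cell a stochastically degraded BC in the relevant sense, so a genie supplying the weaker same-cell messages lets UE-$(l_{i_1},i_1)$ decode $W_{i_1}^{[1]},\ldots,W_{i_1}^{[l_{i_1}]}$, and bounding its received entropy by the peak $\alpha_{i_1 i_1}^{[l_{i_1}]}\,T\log P + o(T\log P)$ --- legitimate since a direct link is never weaker than a cross link, cf. \eqref{eq:inter_cell_order} --- yields $\bar d_{i_1}^{[l_{i_1}]} \le \alpha_{i_1 i_1}^{[l_{i_1}]}$ through Fano's inequality; this step does not even use finite precision.

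The core is the cycle bound for a cycle $\pi$ of length $M \ge 2$. Fixing $\pi$ and relabeling, we may take the participating cells to be $1,\ldots,M$ with users $(l_m,m)$, and we may hand every receiver the messages of all cells outside the cycle as a genie, which only enlarges the region and reduces the analysis to an $M$-cell sub-network. I would then set up a \emph{cyclic telescoping} of conditional entropies around $\pi$: for each $m$, Fano's inequality together with a carefully designed genie $\mathcal{G}_m$ gives a bound of the shape
\begin{equation}
\nonumber
T\,\bar d_m^{[l_m]}\log P \;\le\; h\big(\bm Y_m^{[l_m]} \mid \mathcal{G}_m\big) - h\big(\bm Y_m^{[l_m]} \mid \mathcal{G}_m, \bm W_m\big) + o(T\log P),
\end{equation}
where $\bm Y_m^{[l_m]}$ denotes the $T$-symbol received sequence of UE-$(l_m,m)$. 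The genie must do two jobs at once, and this is exactly where the mixed BC--IC nature of the IBC forces a delicate choice. On the intra-cell side, $\mathcal{G}_m$ must let UE-$(l_m,m)$ decode $W_m^{[1]},\ldots,W_m^{[l_m]}$ (so the left-hand side really is $\bar d_m^{[l_m]}$) and, through the stronger same-cell messages, turn $h(\bm Y_m^{[l_m]}\mid\mathcal{G}_m,\bm W_m)$ into a clean function of interfering codewords only; the mc-CTIN intra-cell conditions \eqref{eq:CTIN_cond_1} are what keep the requisite degradedness intact under inter-cell interference. On the inter-cell side, $\mathcal{G}_m$ must \emph{not} reveal the codeword $\bm X_{m-1}$ of the preceding cell in $\pi$, so that $h(\bm Y_m^{[l_m]}\mid\mathcal{G}_m,\bm W_m)$ equals, up to $o(T\log P)$, a ``potential'' $\mathcal{D}_{m-1}$ inherited from the analysis of cell $m-1$; the mc-CTIN inter-cell conditions \eqref{eq:CTIN_cond_2} let us argue that every interfering codeword other than $\bm X_{m-1}$ is dominated at UE-$(l_m,m)$ and can be absorbed.

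It then remains to control the leading term $h(\bm Y_m^{[l_m]}\mid\mathcal{G}_m)$, and this is where finite precision enters, through the aligned images sum-set inequality of Davoodi and Jafar \cite{Davoodi2016} (used for $K \times K$ networks in \cite{Chan2020}). Since UE-$(l_m,m)$ observes $\bm X_m$ at strength $\alpha_{mm}^{[l_m]}$ while UE-$(l_{m+1},m+1)$ observes $\bm X_m$ only at strength $\alpha_{m+1,m}^{[l_{m+1}]}$, and \eqref{eq:CTIN_cond_2} guarantees $\alpha_{mm}^{[l_m]} \ge \alpha_{m+1,m}^{[l_{m+1}]}$, the AI bound lets one pass from $h(\bm Y_m^{[l_m]}\mid\mathcal{G}_m)$ to the potential $\mathcal{D}_m$ carried over to cell $m+1$ at a cost of $\big(\alpha_{mm}^{[l_m]} - \alpha_{m+1,m}^{[l_{m+1}]}\big)T\log P + o(T\log P)$. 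Substituting into the per-cell bounds and summing over $m = 1,\ldots,M$, the potentials $\mathcal{D}_1,\ldots,\mathcal{D}_M$ cancel by the cyclic indexing of $\pi$, leaving $\sum_{m=1}^{M}\big(\alpha_{mm}^{[l_m]} - \alpha_{m+1,m}^{[l_{m+1}]}\big)T\log P = \Delta_\pi\,T\log P$, which is the cycle bound \eqref{eq:IBC_cycle_bound} in the GDoF limit. That the AI step requires the bounded density hypothesis is essential and expected: with perfect CSIT the interference from distinct cells can be aligned and the chain breaks, consistent with the known strict GDoF gains of interference alignment over mc-TIN inside the mc-CTIN regime.

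I expect the main obstacle to be precisely the construction of the genies $\mathcal{G}_m$ and the attendant auxiliary random variables: one needs a single coherent family of side-information sets that simultaneously collapses each cell to its strongest participating user (intra-cell), keeps exactly one live interfering codeword per receiver so that the entropy terms chain around the cycle (inter-cell), and leaves every remaining cross term dominated, all while staying compatible with a clean application of the AI sum-set inequality. Verifying these requirements is where the mc-CTIN conditions \eqref{eq:CTIN_cond_2}--\eqref{eq:CTIN_cond_1} get fully exploited, and it is the step that genuinely departs from the $K \times K$ interference-channel analyses of \cite{Davoodi2017a,Davoodi2017,Chan2020}.
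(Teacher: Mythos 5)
Your proposal follows essentially the same route as the paper's proof: achievability from \cite{Joudeh2019b}, single-cell cycle bounds from degradedness, and for cycles of length $M\ge 2$ a per-cell Fano bound exploiting the intra-cell SIR order \eqref{eq:CTIN_cond_1} (the paper's Lemma \ref{lemma:single_cell_bound}), followed by a cyclic pairing of the resulting entropy terms and an application of the aligned-images bound (Lemma \ref{lemma:AI_diff_enropies}) in which \eqref{eq:CTIN_cond_2} identifies the dominant difference as $\alpha^{[1]}_{i_m i_m}-\alpha^{[l_{i_{m+1}}]}_{i_{m+1}i_m}$, summing to $\Delta_\pi$. The one obstacle you flag---constructing the genies $\mathcal{G}_m$---dissolves for the IBC: no inter-cell side information is needed, since conditioning on $\bm{W}_{i_{m+1}}$ (already supplied by Fano within cell $i_{m+1}$) lets one subtract $\bar{\bm{X}}_{i_{m+1}}$ from the negative entropy term because each BS encodes only its own cell's messages (Remark \ref{remark:IBC_conditioning}); genies of the kind you describe are required only for the cooperative MISO-BC bound of Theorem \ref{theorem:MISO_BC_outerbound}.
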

The direct part (achievability) of Theorem \ref{theorem:CTIN_optimality} follows from \cite{Joudeh2019b}, as highlighted above.
The converse relies on a new application of AI bounds \cite{Davoodi2016,Davoodi2017a} to cellular networks, and is presented in Section \ref{sec:outer_bounds}.
Next, we draw some insights from the results in Theorem \ref{theorem:CTIN_optimality}.

\begin{figure}[t]
\centering
\includegraphics[width = 0.7\textwidth]{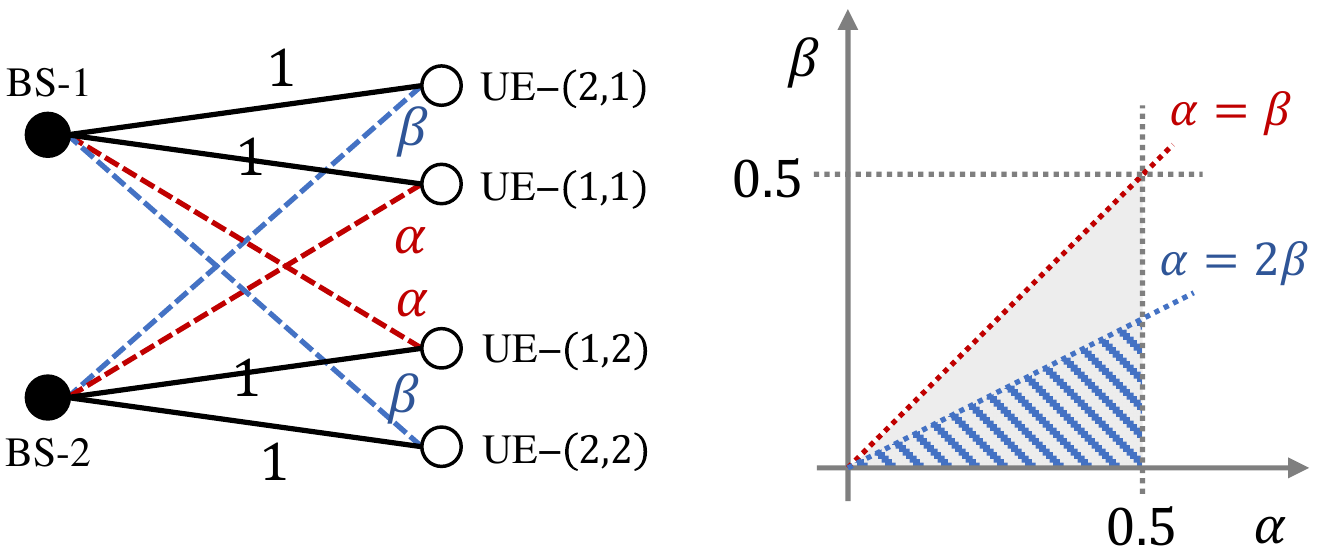}
\caption{Left: Symmetric $2$-cell network with $2$ users per-cell. Right: The mc-CTIN regime in grey and the mc-TIN regime in striped blue, assuming   $\alpha \geq \beta$.}
\label{fig:IBC_IA_example}
\end{figure}
We first observe that under prefect CSIT,
mc-TIN is not necessarily GDoF optimal for the IBC in the mc-CTIN regime, which is in sharp contrast to the finite precision CSIT result in Theorem \ref{theorem:CTIN_optimality}.
This is seen through the example in Fig. \ref{fig:IBC_IA_example}, particularly in the case where $\alpha = 0.5$ and $\frac{1}{3} < \beta \leq 0.5$,  which places this network in the mc-CTIN regime but not in the mc-TIN regime.
In this case, a sum-GDoF of  $\frac{4}{3}$ is achieved through the signal-space downlink IA scheme proposed by Suh and Tse in \cite{Suh2011}.
On the other hand,  the sum-GDoF achieved using mc-TIN is at most $2(1 - \beta) < \frac{4}{3}$.

The observation that signal-space IA seizes to be useful for the IBC under finite precision CSIT from the GDoF
perspective is perhaps not very surprising in light of previous results \cite{Davoodi2016,Davoodi2017a}. 
It is, however, not immediately clear (without Theorem \ref{theorem:CTIN_optimality}) whether a 
form of robust inter-cell interference management that is more sophisticated than mc-TIN, e.g. based on rate-splitting and partial interference decoding \cite{Etkin2008,Davoodi2017a}, is required to achieve the optimal GDoF in the mc-CTIN regime.
Theorem \ref{theorem:CTIN_optimality} settles this issue, and shows that the simple  mc-TIN scheme is indeed  GDoF optimal for the IBC in the mc-CTIN regime.
Moreover, we observe in this regime that the sum-GDoF of the $K \times KL$ IBC is equal to the sum-GDoF of
the underlying $K \times K$ IC, comprising only the strongest user in the SNR (or SIR) sense from each cell, i.e. UE-$(L,i)$ for all $i \in \langle K \rangle$.
This is seen from the fact that in the mc-TIN scheme, UE-$(L,i)$ decodes all $L$ messages in cell $i$, hence limiting the total GDoF achieved in cell $i$ to the GDoF achievable by  UE-$(L,i)$.
As mc-TIN  is GDoF optimal in the mc-CTIN regime under finite precision CSIT, it follows that  in each cell $i$, all users other than UE-$(L,i)$ are \emph{redundant} from a standpoint of maximizing the sum-GDoF.

It is also worthwhile  noting that the redundancy of \emph{weaker} users in the mc-CTIN regime does not necessarily hold under perfect CSIT.
This is clearly seen through the above example of Fig. \ref{fig:IBC_IA_example}, where lower SIR users are \emph{necessary} to achieve IA gains---the sum-GDoF is limited to that achieved via TIN in the absence of these weaker users \cite{Etkin2008}.
Finite precision CSIT, on the other hand, eliminates all GDoF gains of IA (in both signal-space and signal-scale), and brings classical schemes, based on Gaussian random codes, superposition and power control, back to the forefront.
Under these schemes, inter-cell interference is Gaussian and unstructured, and the SIR order in \eqref{eq:SIR_order} 
takes on the  interference-free role of the SNR order in \eqref{eq:strength_order}, i.e.
cells resemble their interference-free degraded BC counterparts, in which weaker users are redundant.
\subsection{Cooperative Outer Bound and Optimality of 2-Cell SLS}
\label{subsec:results_SLS}
Similar to the region $\mathcal{D}^{\mathrm{PTIN}}$  constructed from IBC cycle bounds, the MISO-BC cycle bounds in
Definition \ref{def:cycle_bounds_MBC} can be used to construct a region given by all tuples $\mathbf{d} \in \mathbb{R}_{+}^{KL}$ that satisfy
\begin{align}
\label{eq:MBC_GDoF_outerbound}
 \sum_{(l_k,k) \in \{ \pi \}} \bar{d}_k^{[ l_k ]}  &   \leq \Delta_{\pi,m}^{+}, \;
 \forall  m \in \langle |\pi| \rangle, \pi \in \Pi. 
\end{align}
We denote this region by $\mathcal{D}^{\mathrm{SLS}}_{\mathrm{out}}$, a notational choice justified by the following theorem.
\begin{theorem}
\label{theorem:MISO_BC_outerbound}
In the mc-SLS regime, the GDoF region of the $K \times KL$ MISO-BC under finite precision CSIT is included in $\mathcal{D}^{\mathrm{SLS}}_{\mathrm{out}}$. That is,
$\bm{\alpha} \in \mathcal{A}^{\mathrm{SLS}} \implies 
\mathcal{D}^{\mathrm{MBC}} \subseteq  
\mathcal{D}^{\mathrm{SLS}}_{\mathrm{out}}$.
\end{theorem}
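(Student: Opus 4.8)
The plan is to establish each MISO-BC cycle bound in \eqref{eq:MBC_GDoF_outerbound} separately, via a genie-aided converse that parallels the IBC converse behind Theorem~\ref{theorem:CTIN_optimality} (Section~\ref{sec:outer_bounds}) but is modified to cope with full transmitter cooperation. Fix a cycle $\pi \in \Pi$ and an index $m \in \langle |\pi| \rangle$ picking out the bound $\Delta_{\pi,m}^{+}$ to be proved. Relabel the cells so that the associated BS cycle is $\sigma = (1 \rightarrow 2 \rightarrow \cdots \rightarrow M)$ with $M = |\pi|$, write the participating users as $(l_1,1),\ldots,(l_M,M)$, and read cell indices modulo $M$; then $\Delta_{\pi} = \sum_{n=1}^{M}\bigl(\alpha_{nn}^{[l_n]} - \alpha_{n+1,n}^{[l_{n+1}]}\bigr)$ and $\Delta_{\pi,m}^{+} = \Delta_{\pi} + \alpha_{m+1,m}^{[l_{m+1}]}$. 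When $M = 1$ this is a point-to-point bound: once a genie supplies user $(l_1,1)$ with the messages of all weaker same-cell users, it becomes an aggregate receiver that must carry $\bar{d}_1^{[l_1]}\log P$ out of the single output $Y_1^{[l_1]}$, and since \eqref{eq:SLS_cond_2} gives $\alpha_{11}^{[l_1]} \ge \alpha_{1j}^{[l_1]}$ for every antenna $j$, this output has power $\Theta\bigl(P^{\alpha_{11}^{[l_1]}}\bigr)$ whatever the cooperatively chosen input, so $\bar{d}_1^{[l_1]} \le \alpha_{11}^{[l_1]} = \Delta_{\pi,1}^{+}$.

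For $M > 1$ the argument has three stages. First, I would collapse each cell of the cycle to a single effective receiver: for every $n$, hand user $(l_n,n)$ the genie comprising the weaker same-cell messages $\bigl(W_n^{[s]} : s < l_n\bigr)$, so that, using the SIR order \eqref{eq:SIR_order} and the intra-cell condition \eqref{eq:SLS_cond_1}---under which each cell behaves like a degraded BC and the strongest user's output $Y_n^{[l_n],T}$ subsumes the weaker ones up to $o(\log P)$---Fano's inequality turns the reliability in cell $n$ into $T\,\bar{d}_n^{[l_n]}\log P \le h\bigl(Y_n^{[l_n],T},\mathcal{V}_n\bigr) - h\bigl(Y_n^{[l_n],T},\mathcal{V}_n \,\big|\, \mathcal{W}\bigr) + T\,o(\log P)$, where $\mathcal{W}$ denotes all messages and $\mathcal{V}_n$ is a cyclic genie fixed next. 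Second, I would install that cyclic genie chain following the IBC template: let $\mathcal{V}_n$ be a downgraded copy of the component of the transmit vector radiated through antenna $n$, at the resolution matching the cross-strength $\alpha_{n+1,n}^{[l_{n+1}]}$ seen by user $(l_{n+1},n+1)$, so that the observation at $(l_{n+1},n+1)$ can substitute for $Y_n^{[l_n],T}$ up to a controlled loss. Third, I would telescope around the cycle: the aligned-images inequality of \cite{Davoodi2016,Davoodi2017a}, applied on each edge $n \rightarrow n+1$, replaces $h\bigl(Y_n^{[l_n],T} \,\big|\, \cdot\bigr)$ by $h\bigl(Y_{n+1}^{[l_{n+1}],T} \,\big|\, \cdot\bigr)$ at a cost of $T\log P\cdot\bigl(\alpha_{nn}^{[l_n]} - \alpha_{n+1,n}^{[l_{n+1}]}\bigr)^{+} + T\,o(\log P)$, which is $T\log P\cdot\bigl(\alpha_{nn}^{[l_n]} - \alpha_{n+1,n}^{[l_{n+1}]}\bigr)$ because \eqref{eq:SLS_cond_2} also forces $\alpha_{nn}^{[l_n]} \ge \alpha_{n+1,n}^{[l_{n+1}]}$; at the single cut edge indexed by $m$ I do not close the loop but bound the leftover entropy directly by $T\log P\cdot\alpha_{m+1,m}^{[l_{m+1}]}$. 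Adding the $M$ inequalities, the noise and genie entropies cancel up to $o(\log P)$, leaving $\sum_{(l_k,k)\in\{\pi\}}\bar{d}_k^{[l_k]} \le \sum_{n=1}^{M}\bigl(\alpha_{nn}^{[l_n]} - \alpha_{n+1,n}^{[l_{n+1}]}\bigr) + \alpha_{m+1,m}^{[l_{m+1}]} = \Delta_{\pi} + \alpha_{m+1,m}^{[l_{m+1}]} = \Delta_{\pi,m}^{+}$; since $\pi$ and $m$ are arbitrary, $\mathcal{D}^{\mathrm{MBC}} \subseteq \mathcal{D}^{\mathrm{SLS}}_{\mathrm{out}}$.

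The step I expect to be the main obstacle is forcing the per-edge aligned-images penalty to depend only on the single direct antenna $n$ and not on the other $K-1$ co-located antennas: unlike the IBC, where $\bm{X}_1,\ldots,\bm{X}_K$ are independently encoded, here the transmit vector $\bm{X}=(\bm{X}_1,\ldots,\bm{X}_K)$ is jointly and arbitrarily designed, so the inter-antenna correlations cannot be dismissed for free. Handling them calls for the standard aligned-images machinery---splitting each antenna's codeword into power-level sub-layers and counting the images that align at the weaker receiver---and it is exactly here that the inter-cell SLS inequalities in \eqref{eq:SLS_cond_2}, in particular the term $\alpha_{ii}^{[l_i]} \ge \alpha_{ik}^{[l_i]} + \alpha_{ji}^{[l_j]} - \alpha_{jk}^{[l_j]}$, are needed, certifying that every off-diagonal contribution encountered along the chain is dominated and that the telescoping closes with exactly $\Delta_{\pi}$ plus the one leftover antenna term. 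That extra term, and the need to range over all $M$ choices of the cut index $m$, are the direct price of losing the cross-BS independence responsible for the tighter IBC quantity $\Delta_{\pi}$ of Definition~\ref{def:cycle_bounds_IBC}.
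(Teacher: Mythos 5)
Your skeleton is the right one and matches the paper's: cycle-by-cycle bounds, a per-cell collapse onto the strongest participating user via the weaker same-cell messages (this is exactly Lemma~\ref{lemma:single_cell_bound}), a telescoping chain of aligned-images differences around the cycle with one edge left open, the inter-cell condition \eqref{eq:SLS_cond_2} used to dominate the $\max$ over antennas in Lemma~\ref{lemma:AI_diff_enropies}, and the extra term $\alpha_{i_{m+1} i_m}^{[l_{i_{m+1}}]}$ (together with the need to range over all $M$ cut positions) as the price of losing cross-BS independence. You also correctly identify the crux. But the one step that actually carries the proof --- making Lemma~\ref{lemma:AI_diff_enropies} applicable on each edge despite joint encoding --- is left unresolved, and the device you propose for it does not obviously work. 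Lemma~\ref{lemma:AI_diff_enropies} bounds $\Hbb(\bm{\lambda}\mid U)-\Hbb(\bm{\nu}\mid U)$ only when the \emph{same} auxiliary $U$ (independent of $\bm{G}$) conditions both entropies; the conditioning does not shrink the penalty $\max_i(\lambda_i-\nu_i)^+$. A quantized copy $\mathcal{V}_n$ of antenna $n$'s signal attached to receiver $n$ gives different conditionings on the positive term at cell $n$ and the negative term at cell $n+1$, so the consecutive terms do not pair up; moreover, in the IC/IBC that style of genie is absorbed using the independence of $\bm{X}_1,\ldots,\bm{X}_K$, which is precisely what is unavailable here, and naively each $H(\mathcal{V}_n)$ would contribute its own cross term, giving $M$ extra $\alpha$'s rather than one.

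The paper's resolution is a \emph{nested message} genie, not a signal genie: after relabeling, set $\bm{U}_{i_m}=(\bm{W}_{i_{m+1}},\ldots,\bm{W}_{i_M})$ for $m\in\langle M\rangle$, deliberately \emph{not} read modulo $M$, so that $\bm{U}_{i_M}$ is empty. The identity $(\bm{W}_{i_{m+1}},\bm{U}_{i_{m+1}})=\bm{U}_{i_m}$ makes the negative entropy of cell $i_{m+1}$ and the positive entropy of cell $i_m$ share the conditioning $\bm{U}_{i_m}$, so Lemma~\ref{lemma:AI_diff_enropies} applies directly to each of the $M-1$ interior edges and \eqref{eq:SLS_cond_2} caps each penalty at $\alpha_{i_m i_m}^{[1]}-\alpha_{i_{m+1}i_m}^{[l_{i_{m+1}}]}$; the chain terminates with the unconditioned term $H(\bar{\bm{Y}}_{i_M}^{[1]}\mid\bm{G})\dotleq\alpha_{i_M i_M}^{[1]}T\log P$ (the last negative entropy is simply dropped), and combining with the single-cell terms produces exactly $\Delta_{\pi}+\alpha_{i_1 i_M}^{[l_{i_1}]}=\Delta_{\pi,M}^{+}$; cyclic shifts of $\pi$ give the other $M-1$ bounds. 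So your proposal is a correct outline with the correct target arithmetic, but the construction of the auxiliary variables --- the only genuinely new idea relative to the IBC converse --- is missing and would need to be replaced by the nested-message choice (or something provably equivalent) for the argument to close.
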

The proof of Theorem \ref{theorem:MISO_BC_outerbound} also relies on AI bounds, and it is presented 
in Section \ref{subsec:MISO_BC_outer_bound}.
A natural question that follows is whether $\mathcal{D}^{\mathrm{SLS}}_{\mathrm{out}}$ is tight for the $K \times KL$ MISO-BC in the mc-SLS regime.
In the following result, we show that this is indeed the case in $2$-cell networks.
\begin{theorem}
\label{theorem:2_cell_SLS}
In the mc-SLS regime, mc-SLS is GDoF optimal for the $2 \times 2L$ MISO-BC under finite precision CSIT.  
In this case, we have
$\bm{\alpha} \in \mathcal{A}^{\mathrm{SLS}} \implies 
\mathcal{D}^{\mathrm{MBC}} = 
\mathcal{D}^{\mathrm{SLS}} = \mathcal{D}^{\mathrm{SLS}}_{\mathrm{out}}$.
\end{theorem}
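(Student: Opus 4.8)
The plan is to prove Theorem~\ref{theorem:2_cell_SLS} by sandwiching the MISO-BC GDoF region between the outer bound $\mathcal{D}^{\mathrm{SLS}}_{\mathrm{out}}$ of Theorem~\ref{theorem:MISO_BC_outerbound} and the achievable region $\mathcal{D}^{\mathrm{SLS}}$ of the mc-SLS scheme. Since $\mathcal{D}^{\mathrm{SLS}} \subseteq \mathcal{D}^{\mathrm{MBC}} \subseteq \mathcal{D}^{\mathrm{SLS}}_{\mathrm{out}}$ holds automatically in $\mathcal{A}^{\mathrm{SLS}}$, the entire content is to show $\mathcal{D}^{\mathrm{SLS}}_{\mathrm{out}} \subseteq \mathcal{D}^{\mathrm{SLS}}$ when $K = 2$. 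First I would specialize the cycle structure to $K=2$: the only cycles of length $>1$ are the two-user cycles $\pi = \big( (l_1,1) \rightarrow (l_2,2) \big)$ and their reverses, together with the length-one cycles $\pi = \big( (l_i,i) \big)$. This collapses \eqref{eq:MBC_GDoF_outerbound} into a small, explicit family of inequalities: per-cell bounds $\bar{d}_i^{[l_i]} \leq \alpha_{ii}^{[l_i]}$, and pairwise bounds on $\bar{d}_1^{[l_1]} + \bar{d}_2^{[l_2]}$ of the form $\alpha_{11}^{[l_1]} - \alpha_{21}^{[l_1]} + \alpha_{22}^{[l_2]}$ and $\alpha_{11}^{[l_1]} + \alpha_{22}^{[l_2]} - \alpha_{12}^{[l_2]}$ (the two choices of $m$), for all $l_1,l_2$.

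The second step is to exploit the promised structural simplification: in the $2 \times 2L$ mc-SLS regime, instead of the exponentially many codewords $(\bm{X}_{\mathcal{S}} : \mathcal{S} \subseteq \mathcal{U})$, only a linear (in $L$) number of layers suffices. I would argue that because of the SNR/SIR orderings \eqref{eq:strength_order}--\eqref{eq:SIR_order} within each cell, the only subsets $\mathcal{S}$ that need nonzero power are the "nested" ones: private layers $\mathcal{S} = \{(l_i,i)\}$ for each user, and at most one common layer per pair that is decoded by the strongest users of both cells. Concretely, I expect it suffices to use, for each cell, a layered-superposition stack along the degradedness chain $(1,i) \preceq (2,i) \preceq \cdots \preceq (L,i)$, plus one inter-cell common codeword decoded by UE-$(L,1)$ and UE-$(L,2)$. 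This reduces the description of $\mathcal{D}^{\mathrm{SLS}}$ to a polyhedron parametrized by a handful of power-split exponents, rather than the unwieldy general form.

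The third step is Fourier--Motzkin-style elimination of the remaining auxiliary variables (the power exponents $q_{\mathcal{S},i}$ and the sub-message splits $d_{i,\mathcal{S}}^{[l_i]}$) from the reduced achievable-region description, to obtain $\mathcal{D}^{\mathrm{SLS}}$ directly in terms of the $d_i^{[l_i]}$'s, and then to check term-by-term that the resulting inequalities coincide with the $K=2$ specialization of \eqref{eq:MBC_GDoF_outerbound}. Here the mc-SLS-regime conditions \eqref{eq:SLS_cond_2}--\eqref{eq:SLS_cond_1} enter crucially: they guarantee (i) that the natural within-cell decoding order is the right one and each private layer contributes its full power level, (ii) that the common layer's rate, when decoded at the weaker-SIR cell, stays below the relevant cross-link strength so no extra bottleneck appears, and (iii) that the third term $\alpha_{ik}^{[l_i]} + \alpha_{ji}^{[l_j]} - \alpha_{jk}^{[l_j]}$ in \eqref{eq:SLS_cond_2} (which for $K=2$ forces the direct link to dominate) makes the per-cell bounds the binding ones within a cell. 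I would verify each extreme point / each facet of $\mathcal{D}^{\mathrm{SLS}}_{\mathrm{out}}$ is attained by an explicit power allocation.

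The main obstacle I anticipate is \textbf{proving the structural reduction rigorously} --- i.e., that the linear-size layering is without loss of optimality within the full mc-SLS scheme in this regime. Showing achievability of a given reduced allocation is routine; showing that no richer choice of subsets $\mathcal{S}$ can do better (so that the reduced $\mathcal{D}^{\mathrm{SLS}}$ already equals the true $\mathcal{D}^{\mathrm{SLS}}$, not just a subset of it) is the delicate part. I expect this to be handled not by a direct combinatorial argument over all subsets, but by the sandwich itself: once the reduced region is shown to equal $\mathcal{D}^{\mathrm{SLS}}_{\mathrm{out}}$, and the reduced region is contained in the full $\mathcal{D}^{\mathrm{SLS}}$, which is contained in $\mathcal{D}^{\mathrm{MBC}} \subseteq \mathcal{D}^{\mathrm{SLS}}_{\mathrm{out}}$, all four regions collapse and the reduction is justified a posteriori. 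The genuine work is therefore the Fourier--Motzkin elimination on the reduced polyhedron and the careful case analysis (on the relative orderings of the $L$ users across the two cells) matching it to \eqref{eq:MBC_GDoF_outerbound}; the regime inequalities \eqref{eq:SLS_cond_2}--\eqref{eq:SLS_cond_1} should make every cross term land on the correct side.
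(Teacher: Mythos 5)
Your overall architecture is the paper's: sandwich $\mathcal{D}^{\mathrm{SLS}} \subseteq \mathcal{D}^{\mathrm{MBC}} \subseteq \mathcal{D}^{\mathrm{SLS}}_{\mathrm{out}}$, reduce the exponentially many SLS layers to a linear number (the $2L$ mc-TIN private codewords plus a single cooperative common codeword), eliminate the auxiliary variables, and justify the reduction a posteriori once the reduced region meets the outer bound. That last point --- that you never need to prove the reduction is without loss of optimality, because the collapse of the chain does it for you --- is exactly how the paper handles it. Two minor slips in your specialization of the outer bound: the cross-link superscripts should be $\alpha_{21}^{[l_2]}$ and $\alpha_{12}^{[l_1]}$ (the first subscript of $\alpha$ indexes the receiving cell), not $\alpha_{21}^{[l_1]}$ and $\alpha_{12}^{[l_2]}$.

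There are, however, two substantive gaps. First, your common layer is wrong as specified: you have it decoded only by UE-$(L,1)$ and UE-$(L,2)$, i.e.\ $\mathcal{S}=\{(L,1),(L,2)\}$. In the paper's scheme the common codeword sits at the \emph{top} of the power hierarchy (power $\Theta(1)$, with all private layers attenuated to $P^{r}$, $r\leq -a\leq 0$), and is decoded and cancelled \emph{first by every user}, i.e.\ $\mathcal{S}=\mathcal{U}$. If the weaker users treated a top-level common codeword as noise, their private rates would be wiped out; so your choice of decoding set would not achieve the claimed region unless you also re-engineer the power ordering. Second, you treat the elimination step as "routine Fourier--Motzkin," but the paper explicitly notes that a direct FM elimination is intractable here and replaces it with two structured tools: the potential-graph/potential-theorem argument (to eliminate the power-control exponents from the TIN part, yielding Lemma \ref{lemma:PTIN_region_2_cell}) and the polymatroid Minkowski-sum theorem \cite[Th.~44.6]{Schrijver2002} (to eliminate the private/common GDoF split $\mathbf{d}=\mathbf{d}_{\mathrm{s}}+\mathbf{d}_{\mathrm{c}}$, which requires first tightening $\mathcal{D}^{\mathrm{PTIN}}(a)$ to a monotone, submodular region $\mathcal{D}^{\mathrm{PTIN}'}(a)$). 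Only the single remaining scalar $a$ is removed by ordinary FM. Identifying and verifying the polymatroid structure is the genuine technical content of the achievability proof, and your proposal does not supply a workable substitute for it.
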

The proof of Theorem \ref{theorem:2_cell_SLS} is presented in Section \ref{sec:2_cell_SLS}.
Key to the proof is recognizing that the achievable GDoF region $\mathcal{D}^{\mathrm{SLS}} $, described in its generality in Section \ref{subsubsec:scheme_SLS}, greatly simplifies for $2$-cell networks in the mc-SLS regime.
Instead of the $2^{2L} - 1$ independent codewords communicated in the general mc-SLS scheme, it is sufficient to transmit $2L + 1$ codewords in this case: $2L$ non-cooperative codewords from the mc-TIN scheme; and a superimposed cooperative common codeword decoded by all users.
This simplification allows for an efficient elimination of auxiliary variables used in describing $\mathcal{D}^{\mathrm{SLS}} $, which turns out to coincide with $ \mathcal{D}^{\mathrm{SLS}}_{\mathrm{out}}$.

Beyond $2$-cell networks,  the outer bound in Theorem \ref{theorem:MISO_BC_outerbound} is tight in some special cases. In Appendix \ref{appendix:H_ICI}, we show that this is the case for a class of $K \times KL$ networks in the mc-SLS regime  with homogeneous inter-cell interference.
Nevertheless, the outer bound in Theorem \ref{theorem:MISO_BC_outerbound} is not tight in general.
This is seen from the $3 \times 3$ MISO-BC setting studied in \cite{Davoodi2019}---while it was shown that $\mathcal{D}^{\mathrm{MBC}} = \mathcal{D}^{\mathrm{SLS}}$ in the SLS regime (i.e. SLS is optimal here), 
the achievable region $\mathcal{D}^{\mathrm{SLS}}$  and the outer bound  
$\mathcal{D}^{\mathrm{SLS}}_{\mathrm{out}}$ do not coincide in this settings. In particular, additional bounds, which are not implied by the cycle bounds in \eqref{eq:MBC_GDoF_outerbound}, are generally required to fully describe $\mathcal{D}^{\mathrm{SLS}}$ in the $3 \times 3$ setting.
Nevertheless, these additional inequalities are not required for our  next purpose of extremal network analysis.
Despite the fact that $\mathcal{D}^{\mathrm{SLS}}_{\mathrm{out}}$ is not tight in general, this outer bound remains very useful for studying the potential benefits of full mc-Co  over mc-TIN under finite precision CSIT. 
In particular, in each of the three regimes on interest, $\mathcal{D}^{\mathrm{SLS}}_{\mathrm{out}}$ is tight for a subset of \emph{extremal} networks, in which sum-GDoF gains of full multi-cell cooperation over mc-TIN are maximized.
This allows us to obtain sharp characterizations of these extremal  GDoF gains, as we see next.
\subsection{Extremal GDoF Gains of Multi-Cell Cooperation over Multi-Cell TIN}
\label{subsec:results_extremal_gains}
Equipped with the results in Theorem \ref{theorem:CTIN_optimality} and Theorem \ref{theorem:MISO_BC_outerbound}, we are ready to conduct extremal network analysis to understand the robust GDoF benefits of  mc-Co over mc-TIN in the weak inter-cell interference regimes of interest.
To this end, we define the sum-GDoF  achievable by mc-TIN as 
\begin{equation}
\label{eq:sum_GDoF_TIN}
d^{\mathrm{TIN}}_{\Sigma}(\bm{\alpha} )  \triangleq \max_{\mathbf{d} \in \mathcal{D}^{\mathrm{TINA}} (\bm{\alpha} )}
\sum_{(l_k,k) \in \mathcal{U}} d_{k}^{[l_k]}.
\end{equation}
In a similar manner, the sum-GDoF achieved through mc-Co is defined as
\begin{equation}
\label{eq:sum_GDoF_MBC}
d^{\mathrm{MBC}}_{\Sigma}(\bm{\alpha} )  \triangleq \max_{\mathbf{d} \in \mathcal{D}^{\mathrm{MBC}} (\bm{\alpha} )}
\sum_{(l_k,k) \in \mathcal{U}} d_{k}^{[l_k]}.
\end{equation}
Note that the dependency of the GDoF on $\bm{\alpha}$ is made explicit in this part.
It is also worth noting that $d^{\mathrm{MBC}}_{\Sigma}(\bm{\alpha} ) $ is the optimal sum-GDoF  of the underlying 
$K \times KL$ MISO-BC, with no restriction on the employed scheme, hence representing the ultimate 
performance of mc-Co schemes.
On the other hand, $d^{\mathrm{TIN}}_{\Sigma}(\bm{\alpha} ) $ is the maximum sum-GDoF achieved while restricting to the simple non-cooperative scheme of mc-TIN. This is optimal for the underlying $K \times KL$ IBC in the mc-CTIN regime, yet its optimality is not guaranteed outside the mc-CTIN regime.

We are interested in the extremal sum-GDoF gain from  mc-Co relative to mc-TIN in the three regimes of interest.
For a regime $\mathcal{A} \subset \mathbb{R}^{K \times K \times L}_{+}$, the extremal gain is defined as 
\begin{equation}
\label{eq:def_eta}
\eta_{K,L}(\mathcal{A}) \triangleq \max_{\bm{\alpha} \in \mathcal{A}} 
\frac{d_{\Sigma}^{\mathrm{MBC}} (\bm{\alpha} )}
{d_{\Sigma}^{\mathrm{TIN}}(\bm{\alpha} )}
\end{equation}
which is parametrized by the network dimensions $K,L$.
As elaborated by Chan \emph{et al.} \cite{Chan2020}, the extremal gain $\eta_{K,L}(\mathcal{A})$ captures the \emph{potential} benefits of 
mc-Co over mc-TIN in $\mathcal{A}$, and does not necessarily reflect \emph{typical} or \emph{average} performance gains.
Nevertheless, studying $\eta_{K,L}(\mathcal{A})$ can still be very useful for bringing closure to regimes in which potential gains are small; as well as identifying regimes  that warrant further investigation due to large potential gains.
The two types of conclusions are seen through the following theorem, which generalizes the results in  
\cite[Th. 5.1, 6.1 and 7.1]{Chan2020} to $K \times KL$ networks.
We implicitly assume that $K \geq 2$, as the single-cell case is degenerate.
\begin{theorem}
\label{theorem: extremal gain}
The extremal sum-GDoF gain of mc-Co over mc-TIN in $K \times KL$ networks under finite precision CSIT  in the three regimes of interest is as follows:
\begin{equation}
\label{eq:extremal_gains_IBC}
\eta_{K,L}(\mathcal{A})  = 
\begin{cases}
\frac{3}{2}, & \mathcal{A} = \mathcal{A}^{\mathrm{TIN}} \\ 
2 -  \frac{1}{K}, & \mathcal{A} = \mathcal{A}^{\mathrm{CTIN}} \\
\Theta\big(\log(K) \big), & \mathcal{A} = \mathcal{A}^{\mathrm{SLS}}.
\end{cases}
\end{equation}
\end{theorem}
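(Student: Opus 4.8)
The plan is to prove Theorem \ref{theorem: extremal gain} by establishing matching upper and lower bounds on $\eta_{K,L}(\mathcal{A})$ in each of the three regimes, and in each case the main effort is on the \emph{upper bound} (no network in $\mathcal{A}$ does better than the claimed ratio), since the lower bound amounts to exhibiting a single well-chosen network $\bm{\alpha}$ together with a computation of $d^{\mathrm{MBC}}_{\Sigma}(\bm{\alpha})$ and $d^{\mathrm{TIN}}_{\Sigma}(\bm{\alpha})$ on it. For the lower bounds I would lift the extremal constructions from Chan \emph{et al.} \cite{Chan2020} for $K\times K$ networks to $K\times KL$ by placing the extremal single-user-cell construction on the \emph{strongest} user UE-$(L,i)$ of each cell and making the remaining $L-1$ users per cell negligible (all their strengths set to, say, the minimum compatible with the defining inequalities of the regime, or essentially zero where the regime conditions in Definitions \ref{def:TIN_regime}, \ref{def:CTIN_regime}, \ref{def:SLS_regime} permit). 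One then checks that such a network lies in $\mathcal{A}^{\mathrm{TIN}}$, $\mathcal{A}^{\mathrm{CTIN}}$, or $\mathcal{A}^{\mathrm{SLS}}$ respectively, that mc-TIN on it achieves exactly what it achieved on the $K\times K$ subnetwork (the extra users carry no GDoF), and that the cooperative sum-GDoF is at least the $K\times K$ value; hence $\eta_{K,L}(\mathcal{A}) \geq \eta_{K,1}(\mathcal{A})$, recovering $3/2$, $2-1/K$, and the $\Omega(\log K)$ construction.

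For the upper bounds the key tool is Theorem \ref{theorem:MISO_BC_outerbound}: in all three regimes $\mathcal{D}^{\mathrm{MBC}}\subseteq \mathcal{D}^{\mathrm{SLS}}_{\mathrm{out}}$ (the mc-SLS regime contains the other two), so $d^{\mathrm{MBC}}_{\Sigma}(\bm{\alpha})$ is upper bounded by the LP $\max \sum_{(l_k,k)\in\mathcal{U}} d_k^{[l_k]}$ over $\mathcal{D}^{\mathrm{SLS}}_{\mathrm{out}}(\bm{\alpha})$, which is controlled entirely by the MISO-BC cycle bounds \eqref{eq:MBC_GDoF_outerbound}, i.e. by the quantities $\Delta_{\pi,m}^{+}$. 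Meanwhile $d^{\mathrm{TIN}}_{\Sigma}(\bm{\alpha})$ is the $\max$-weight point of $\mathcal{D}^{\mathrm{PTIN}}=\mathcal{D}^{\mathrm{TINA}}$, governed by the IBC cycle bounds $\Delta_\pi$. So the entire problem reduces to a purely combinatorial comparison: show that for every $\bm{\alpha}$ in the regime, $\frac{\text{LP value of }\{\sum \bar d \le \Delta_{\pi,m}^{+}\}}{\text{LP value of }\{\sum \bar d \le \Delta_{\pi}\}} \le c$ with $c=3/2,\,2-1/K,\,\Theta(\log K)$. The plan is to relate the two LPs via the single extra term $\alpha_{i_{m+1}i_m}^{[l_{i_{m+1}}]}$ that distinguishes $\Delta_{\pi,m}^{+}$ from $\Delta_\pi$, bounding that cross-link term against the direct-link terms using the regime-defining inequalities \eqref{eq:TIN_cond_2}, \eqref{eq:CTIN_cond_2}, \eqref{eq:SLS_cond_2}; in the mc-TIN/mc-CTIN cases these force cross links to be small relative to the cycle's direct links, pinning the ratio to a constant, whereas in the mc-SLS regime the much weaker condition \eqref{eq:SLS_cond_2} permits cross links comparable to direct links, and chaining them around long cycles is what produces the $\log K$ growth. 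Here I expect the reduction to a single scalar "extremal network" — by an averaging/symmetrization argument over cycles and over which user index is active, exactly as in \cite{Chan2020} — so that the LP comparison collapses to optimizing a one-parameter (or few-parameter) family; this is where I would import the Chan \emph{et al.} machinery wholesale, noting that the $L$ weaker users per cell only ever appear through the $\bar d_k^{[l_k]}=\sum_{s_k\le l_k}d_k^{[s_k]}$ telescoping and hence can be absorbed into the top user without changing the extremal value, which is precisely the claimed $L$-independence.

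The hard part will be the matching upper bound in the mc-SLS regime, i.e. showing $\eta_{K,L}(\mathcal{A}^{\mathrm{SLS}}) = O(\log K)$ rather than merely $\Omega(\log K)$: this requires arguing that no network in $\mathcal{A}^{\mathrm{SLS}}$, however the cross-link strengths and the $KL$ direct-link strengths are tuned, can beat a logarithmic ratio, which means carefully bounding the cooperative LP value $\max_{\mathcal{D}^{\mathrm{SLS}}_{\mathrm{out}}}\sum d$ from above and the TIN LP value from below \emph{simultaneously over all} $\bm{\alpha}$. I would handle this by a potential/charging argument on cycles: decompose the optimal cooperative GDoF allocation into contributions along a fractional cycle cover, use the MISO-BC cycle bounds to cap each cycle's contribution by $\Delta_{\pi}+\max_m\alpha_{i_{m+1}i_m}^{[l_{i_{m+1}}]}$, and then show via \eqref{eq:SLS_cond_2} and the SIR order \eqref{eq:SIR_order} that summing the extra cross-link slacks over a cover of the $K$ cells telescopes into at most $H_K = \sum_{k=1}^K 1/k = \Theta(\log K)$ times a feasible TIN rate point — essentially because each cell can absorb cross-link "boosts" from only its predecessors in a decoding chain and the harmonic structure arises from the nested $\langle l_k\rangle$ sums. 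The non-tightness of $\mathcal{D}^{\mathrm{SLS}}_{\mathrm{out}}$ beyond $2$ cells (flagged in Section \ref{subsec:results_SLS}) is not an obstacle for the upper bound since it is an outer bound; it is only for the lower bound that one must verify the extremal $\log K$ network is actually achievable, which I would do by explicit mc-SLS code construction on that network (one cooperative common codeword plus mc-TIN layers), mirroring the $K\times K$ construction of \cite[Th. 7.1]{Chan2020} with the per-cell top user playing the role of the single user.
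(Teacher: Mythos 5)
Your overall architecture is right and shares the paper's key ingredients: the lower bound by lifting the Chan \emph{et al.} extremal $K\times K$ constructions onto the strongest user of each cell with the remaining users made trivial, and the upper bound via $d^{\mathrm{MBC}}_{\Sigma}(\bm{\alpha})\leq d^{\mathrm{MBC}}_{\mathrm{out},\Sigma}(\bm{\alpha})$ from Theorem \ref{theorem:MISO_BC_outerbound} followed by a comparison of the cycle-bound LPs. Your intuition that the weaker users ``can be absorbed into the top user without changing the extremal value'' is exactly the right idea. But you stop short of turning that intuition into the two precise reductions that make the whole proof collapse, and instead you plan to re-derive the hardest piece --- the $O(\log K)$ upper bound in the mc-SLS regime --- by a new ``potential/charging argument on cycles'' producing $H_K$. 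That is a genuine gap: the argument is not worked out, and the mechanism you attribute it to is wrong. The harmonic/logarithmic growth has nothing to do with the nested $\langle l_k\rangle$ sums or with $L$ at all; in Chan \emph{et al.} it comes from a hierarchical topology among the $K$ \emph{cells}, and it survives unchanged at $L=1$. A charging argument that ``telescopes into $H_K$ times a feasible TIN point'' via the per-cell user nesting would not reproduce the correct dependence and, as sketched, cannot be checked.

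The clean way to close this --- which is what the paper does --- is to prove two observations valid for every $\bm{\alpha}$ in the regime: (i) $d^{\mathrm{TIN}}_{\Sigma}(\bm{\alpha}) = d^{\mathrm{TIN}}_{\Sigma}(\bm{\alpha}^{[L]})$, where $\bm{\alpha}^{[L]}$ is the $K\times K$ subnetwork of strongest users, because in mc-TIN user $(L,k)$ decodes all $L$ messages of cell $k$ and hence caps the cell's total GDoF by its own achievable GDoF (the reverse inequality is trivial); and (ii) $d^{\mathrm{MBC}}_{\mathrm{out},\Sigma}(\bm{\alpha}) \leq d^{\mathrm{MBC}}_{\mathrm{out},\Sigma}(\bm{\alpha}^{[L]})$, obtained by keeping only the cycle bounds of \eqref{eq:MBC_GDoF_outerbound} in which each participating cell contributes all $L$ of its users; that sub-collection, read in the variables $\bar{d}_k^{[L]}$, is exactly $\mathcal{D}^{\mathrm{SLS}}_{\mathrm{out}}(\bm{\alpha}^{[L]})$. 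Chaining (i) and (ii) gives $\max_{\bm{\alpha}\in\mathcal{A}} d^{\mathrm{MBC}}_{\Sigma}(\bm{\alpha})/d^{\mathrm{TIN}}_{\Sigma}(\bm{\alpha}) \leq \max_{\bm{\alpha}^{[L]}\in\mathcal{A}} d^{\mathrm{MBC}}_{\mathrm{out},\Sigma}(\bm{\alpha}^{[L]})/d^{\mathrm{TIN}}_{\Sigma}(\bm{\alpha}^{[L]})$, and the right-hand side is precisely the quantity Chan \emph{et al.} already bounded by $3/2$, $2-1/K$, and $\Theta(\log K)$ in the respective regimes --- including the $O(\log K)$ direction you were planning to reprove. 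With this reduction in place there is nothing left to do in the mc-SLS regime, and your concern about the non-tightness of $\mathcal{D}^{\mathrm{SLS}}_{\mathrm{out}}$ beyond two cells is correctly moot since only the outer-bound direction is needed.
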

It is readily seen that in each of the three regimes of interest, the extremal gain $\eta_{K,L}(\mathcal{A})$ is independent of the number of users per-cell $L$, and therefore we have $\eta_{K,L}(\mathcal{A}) = \eta_{K,1}(\mathcal{A})$.
This is a consequence of the SIR order in \eqref{eq:SIR_order}, which under finite precision CSIT and from a sum-GDoF standpoint, renders the first $L - 1$ users in each cell \emph{redundant}  with respect to user  $L$ (i.e. the strongest user).
This redundancy has already been highlighted in the context of mc-TIN and the IBC in Section \ref{subsec:result_mc_CTIN}.
The same type of redundancy is exhibited by the MISO-BC in the mc-SLS regime, as seen from the outer bound  $\mathcal{D}^{\mathrm{SLS}}_{\mathrm{out}}$ (see Section \ref{sec:extremal_gains}).
The redundancy of weaker users in the regimes of interest is exploited in the proof of Theorem \ref{theorem: extremal gain},
presented in Section \ref{sec:extremal_gains}.

Theorem \ref{theorem: extremal gain} quantifies the intuition that in regimes where inter-cell interference is sufficiently weak to the extent  that it is GDoF optimal to treat it as noise for the IBC,  gains due to BS cooperation are limited.
In particular, mc-Co schemes provide at most constant factor GDoF gains
over mc-TIN in the mc-TIN and mc-CTIN regimes, bounded above by $1.5$ and $2$ respectively.
These multiplicative gains are relatively small and, more critically, they do not scale with $K$ and $L$.
On the other hand, there is far greater potential in the mc-SLS regime, specifically the part not included in the mc-CTIN regime.
Here the extremal GDoF gain of mc-Co over mc-TIN may scale as $\log(K)$ in large networks, 
rendering this regime more interesting for further investigation.

\begin{remark}
Achieving the extremal GDoF gains in \eqref{eq:extremal_gains_IBC} for $K\times K$ networks has been demonstrated in \cite{Chan2020} through constructive proofs.
For instance, for the GDoF gain of  $\Theta\big(\log(K) \big)$ in the SLS regime,  Chan et al. \cite{Chan2020} find a $K\times K$ network  with a special hierarchical topology that greatly benefits from transmitter cooperation.
In the identified network, the sum-GDoF achieved through TIN is bounded above by $2$ irrespective of the number of users $K$, while transmitter cooperation through the SLS scheme achieves a sum-GDoF of $1 + \frac{1}{2} \log (K)$ (see \cite[Sec. VII.B]{Chan2020}). 
Note that showing the existence of such special network is sufficient for proving that the corresponding extremal GDoF gain is attainable.
As $K\times K$  networks are a special case of $K\times KL$ networks, the constructive proofs of Chan et al. \cite{Chan2020} extend directly to the $K\times KL$ cellular networks considered here. 
In Section \ref{sec:extremal_gains}, we further show that such gains cannot be exceeded in cellular networks.
\hfill $\lozenge$
\end{remark}
\section{Outer Bounds}
\label{sec:outer_bounds}
In this section we present proofs for the outer bounds in Theorem  \ref{theorem:CTIN_optimality} and Theorem \ref{theorem:MISO_BC_outerbound}.
For this purpose, we work with a deterministic approximation of the channel model in \eqref{eq:IBC_system model} given by
\begin{equation}
\label{eq:IBC_system model_det}
\bar{Y}_{k}^{[l_{k}]}(t)
 = \sum_{i = 1}^{K} \big\lfloor \bar{P}^{\alpha_{ki}^{[l_{k}]} - \alpha_{\max , i} }  G_{ki}^{[l_{k}]}(t) \bar{X}_{i}(t) 
 \big\rfloor.
\end{equation}
In \eqref{eq:IBC_system model_det}, we have $ \alpha_{\max , i} \triangleq  \max_{(l_{j} , j)  \in \mathcal{U} } \alpha_{ji}^{[l_{j}]}$, and  both the real and imaginary components of $ \bar{X}_{i}(t) $ are drawn from the
integer alphabet $\big\langle 0 : \lceil \bar{P}^{\alpha_{\max , i}} \rceil  \big\rangle$. 
It can be easily checked that in all three regimes of interest, we have $\alpha_{\max , i} = \alpha_{ii}^{[L_{i}]}$.
As shown in \cite{Davoodi2016}, the GDoF region of the above deterministic channel model contains its counterpart GDoF of the original Gaussian model.

We now recall a key lemma from  \cite{Chan2020} (see  \cite{Davoodi2017a} for the proof). To this end, we consider the outputs
\begin{align}
\bar{Y}_{k}(t) & =  \sum_{i = 1}^{K} \big\lfloor \bar{P}^{\lambda_{i} - \alpha_{\max,i} } G_{ki}(t) \bar{X}_{i}(t)  \big\rfloor  \\
\bar{Y}_{j}(t) & =  \sum_{i = 1}^{K} \big\lfloor \bar{P}^{\nu_{i} - \alpha_{\max,i} } G_{ji}(t) \bar{X}_{i}(t)  \big\rfloor
\end{align}
where $\lambda_{i},\nu_{i} \in [0, \alpha_{\max,i}]$, for all $i \in \langle K \rangle$, are the corresponding channel strengths.
We use our standard notation $\bar{\bm{X}}_{i}$, $\bar{\bm{Y}}_{k}$ and $\bm{G}$ for codewords, received signals and channel coefficients, respectively.
\begin{lemma} {\normalfont{\textbf{(Aligned Images Bounds \cite[Lemma 1]{Davoodi2017a})}}}
\label{lemma:AI_diff_enropies}
Let $U$ be an auxiliary random variable and assume that $(U,\bar{\bm{X}}_{1},\ldots,\bar{\bm{X}}_K)$ are independent of  $\bm{G}$.
We have 
\begin{equation}
\label{eq:entropy_diff_lemma}
H\big( \bar{\bm{Y}}_{k} \mid \bm{G}, U \big) - 
H\big( \bar{\bm{Y}}_{j} \mid \bm{G}, U \big) \leq  
\max_{i \in \langle K \rangle} 
(\lambda_{i} - \nu_{i})^{+} T \log(P) + T o(\log(P)).
\end{equation}
\end{lemma}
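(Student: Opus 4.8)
The plan is to reconstruct the aligned images (AI) argument of Davoodi and Jafar. Set $\Delta \triangleq \max_{i \in \langle K \rangle}(\lambda_{i} - \nu_{i})^{+}$. The first step replaces the entropy difference by a single conditional entropy: using $H(A \mid C) \leq H(A, B \mid C) = H(B \mid C) + H(A \mid B, C)$ with $A = \bar{\bm{Y}}_{k}$, $B = \bar{\bm{Y}}_{j}$ and $C = (\bm{G}, U)$, we obtain $H(\bar{\bm{Y}}_{k} \mid \bm{G}, U) - H(\bar{\bm{Y}}_{j} \mid \bm{G}, U) \leq H(\bar{\bm{Y}}_{k} \mid \bar{\bm{Y}}_{j}, \bm{G}, U)$, so it suffices to bound the latter by $\Delta\, T \log(P) + T o(\log(P))$.

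The second step single-letterizes over the $T$ channel uses. In the deterministic model \eqref{eq:IBC_system model_det}, both $\bar{Y}_{k}(t)$ and $\bar{Y}_{j}(t)$ are functions of $\big(\bar{X}_{i}(t)\big)_{i}$ and of the fading coefficients at time $t$ alone; hence, for a fixed realization $\bm{G} = \bm{g}$ and a fixed value $\bar{\bm{Y}}_{j} = \bm{y}_{j}$, the set of compatible output sequences $\bar{\bm{Y}}_{k}$ is the Cartesian product over $t$ of the per-symbol compatible sets. Bounding the inner conditional entropy by the log-cardinality of this support, which therefore factorizes over $t$, replacing the expectation over $\bar{\bm{Y}}_{j}$ by a worst case over values, and using that the fading is identically distributed across channel uses, we get $H(\bar{\bm{Y}}_{k} \mid \bar{\bm{Y}}_{j}, \bm{G}, U) \leq T\, \E_{\bm{G}}\big[ \log N(\bm{G}) \big]$, where $N(\bm{g})$ is the largest number of distinct values a single symbol $\bar{Y}_{k}(t)$ can assume among all lattice inputs that produce one and the same value of $\bar{Y}_{j}(t)$ under channel realization $\bm{g}$.

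The third step, the \emph{combinatorial heart}, bounds $\E_{\bm{G}}[\log N(\bm{G})]$. Deterministically $N(\bm{g})$ is at most polynomial in $P$; the AI estimate sharpens this using the bounded density assumption on $\bm{G}$. The mechanism is a union bound over pairs of distinct lattice inputs that ``align'' at receiver $j$ (i.e. map to the same $\bar{Y}_{j}$): for each such pair, the boundedness of the joint density of the fading coefficients controls the probability that their images at receiver $k$ are separated by more than $\bar{P}^{\Delta}$ in amplitude; summing these probabilities shows that the event $\{ N(\bm{G}) > P^{\Delta + \epsilon} \}$ has probability vanishing fast enough in $P$ to contribute only $o(\log(P))$ to the expectation, while on its complement $\log N(\bm{G}) \leq (\Delta + \epsilon) \log(P)$. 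Letting $\epsilon \downarrow 0$ yields $\E_{\bm{G}}[\log N(\bm{G})] \leq \Delta \log(P) + o(\log(P))$; here the $\bar{P} = \sqrt{P}$ convention is absorbed by the fact that $\bar{X}_{i}$ is complex, so the per-symbol image counts are squared.

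Chaining the three steps gives $H(\bar{\bm{Y}}_{k} \mid \bar{\bm{Y}}_{j}, \bm{G}, U) \leq \Delta\, T \log(P) + T o(\log(P))$, which is exactly \eqref{eq:entropy_diff_lemma}. I expect the third step to be the \emph{main obstacle}: it is where the floor operations and the bounded-density union bound must be handled with care so that ``bad'' fading realizations provably contribute only $o(\log(P))$, and it is precisely the statement established in \cite[Lemma 1]{Davoodi2017a}, which we invoke directly.
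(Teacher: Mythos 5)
Your proposal is correct and matches the paper's treatment: the paper does not prove this lemma but imports it verbatim from \cite[Lemma 1]{Davoodi2017a} (stated as in \cite[Lemma 2.1]{Chan2020}), which is exactly what you do after sketching the standard three-step aligned-images argument ($H(A\mid C)-H(B\mid C)\leq H(A\mid B,C)$, reduction to counting aligned images, and the bounded-density union bound). Your sketch of the external proof is a faithful outline, and deferring the combinatorial core to the cited lemma is precisely the paper's own approach.
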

Lemma \ref{lemma:AI_diff_enropies} bounds\footnote{The bound in Lemma \ref{lemma:AI_diff_enropies} follows directly from the aligned images bound in \cite[Lemma 1]{Davoodi2017a}. The same bound was also used more recently in \cite{Chan2020}. Our statement of Lemma \ref{lemma:AI_diff_enropies}  follows the statement of \cite[Lemma 2.1]{Chan2020}.} the maximum difference of entropies (in the GDoF sense) between the two received signals 
$ \bar{\bm{Y}}_{k} $ and $ \bar{\bm{Y}}_{j}$, that can be created by any set of codewords 
$\bar{\bm{X}}_{1}, \ldots, \bar{\bm{X}}_{K}$, which are independent of the exact realizations of channel coefficients 
in $\bm{G}$.
The bound in \eqref{eq:entropy_diff_lemma} tells us that a maximum difference of entropies is created through $\bar{\bm{X}}_{i}$, where $i$ is the index yielding a maximum difference in strengths $(\lambda_{i} - \nu_{i})^{+}$.
An example of Lemma \ref{lemma:AI_diff_enropies} is shown in Fig. \ref{fig:AI_bounds_example}.
\begin{figure}[h]
\vspace{-1mm}
\centering
\includegraphics[width = 0.6\textwidth]{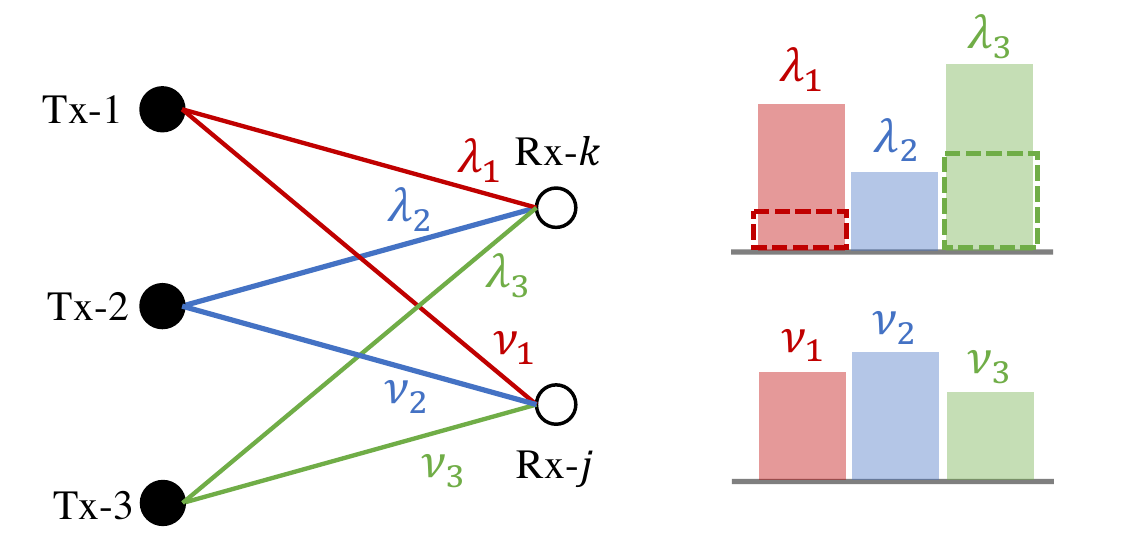}
\caption{ Left: $3$-transmitter, $2$-receiver network. Right: The corresponding received power levels. The entropy difference between receivers $k$ and $j$ is maximized through transmitter $3$.}
\label{fig:AI_bounds_example}
\end{figure}

Since  $o(\log(P))$ terms  are inconsequential for GDoF results, they will be dropped henceforth for brevity.
Whenever we do so,  we use $\dotleq$ to write inequalities  to indicate that $o(\log(P))$ has been dropped (see below).
Moreover, we adopt the compact notation of \cite{Chan2020} to represent differences of entropies of the type  in  Lemma \ref{lemma:AI_diff_enropies}.
Using this brief notion,  inequality \eqref{eq:entropy_diff_lemma} is expressed as
\begin{equation}
\label{eq:entropy_diff_lemma_brief}
\Hbb \big(\bm{\lambda} \mid U \big)  -  \Hbb \big( \bm{\nu} \mid U \big)   \; \dotleq \; \max_{i \in \langle K \rangle} 
(\lambda_{i} - \nu_{i})^{+} T \log(P)
\end{equation}
where $ \bm{\lambda} \triangleq  [\lambda_{1} \;  \cdots \; \lambda_{K}] $ and $ \bm{\nu} \triangleq  [\nu_{1} \;  \cdots \; \nu_{K}] $.
The expression in \eqref{eq:entropy_diff_lemma_brief} succinctly captures the essential parts in \eqref{eq:entropy_diff_lemma}, 
i.e. channel strength levels, and will be employed further on.
\subsection{Single-Cell Outer Bound}
Next, we employ Lemma  \ref{lemma:AI_diff_enropies}  to bound the sum-rate of users in cell $i$, for any $i \in \langle K \rangle$, under finite precision CSIT.
This single-cell bound holds in the general weak inter-cell interference regime specified by the order in \eqref{eq:SIR_order}, and hence holds in all three regimes of interest.
Moreover, the  bound is applicable to both the IBC and MISO-BC, as we will see further on.
\begin{lemma}
\label{lemma:single_cell_bound}
In the weak inter-cell interference regime specified by \eqref{eq:SIR_order}, 
the sum-rate of users associated with cell $i$, where $i \in \langle K \rangle$,
 is bounded above as
\begin{equation}
\label{eq:single_cell_sum_rate_bound}
T \sum_{l \in \langle L \rangle} R_{i}^{[l]} \; \dotleq \; \big( \alpha_{ii}^{[L]} - \alpha_{ii}^{[1]} \big) T \log(P) + 
H  \big(   \bar{\bm{Y}}_{i}^{[1]} |  \bm{G}, U_i  \big)  - H \big(  \bar{\bm{Y}}_{i}^{[L]}  | \bm{G}, \bm{W}_{i} , U_i \big)
\end{equation}
where $U_i$ is a side information random variable, independent of $\bm{G}$ and $\bm{W}_i$.
\end{lemma}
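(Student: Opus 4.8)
The plan is to bound the sum-rate in cell $i$ by a telescoping argument over the $L$ users, exploiting the SNR/SIR order $\alpha_{ii}^{[1]} \le \cdots \le \alpha_{ii}^{[L]}$, and then to collapse the telescoping sum using Lemma \ref{lemma:AI_diff_enropies}. First I would fix $i$ and write each rate $R_i^{[l]}$ in Fano form, $TR_i^{[l]} \le I(\bm{W}_i^{[l]}; \bar{\bm{Y}}_i^{[l]} \mid \bm{G}, U_i)$ up to $o(\log P)$ terms (the side-information variable $U_i$ can be inserted because conditioning on a variable independent of the message cannot decrease mutual information — or rather, I would build it in as a genie from the start). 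Summing over $l$ and then applying a chain-rule / successive-decoding decomposition — decode $\bm{W}_i^{[1]}$ at receiver $(1,i)$, hand it as a genie to receiver $(2,i)$ which then decodes $\bm{W}_i^{[2]}$, and so on — is the standard degraded-broadcast-channel manipulation. The key point enabled by the weak inter-cell interference order \eqref{eq:SIR_order} is that, modulo interference, the receivers in cell $i$ form a degradedness chain, so each step of this decomposition is valid in the GDoF sense.

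Carrying this out, the sum $T\sum_l R_i^{[l]}$ telescopes into something of the shape $\sum_{l=1}^{L}\big[H(\bar{\bm{Y}}_i^{[l]} \mid \bm{G}, \bm{W}_i^{[1:l-1]}, U_i) - H(\bar{\bm{Y}}_i^{[l]} \mid \bm{G}, \bm{W}_i^{[1:l]}, U_i)\big]$. The heart of the argument is then to re-pair consecutive terms: I would compare $H(\bar{\bm{Y}}_i^{[l]} \mid \bm{G}, \bm{W}_i^{[1:l]}, U_i)$ against $H(\bar{\bm{Y}}_i^{[l+1]} \mid \bm{G}, \bm{W}_i^{[1:l]}, U_i)$ using Lemma \ref{lemma:AI_diff_enropies}. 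Here receivers $(l,i)$ and $(l+1,i)$ see the same transmit codewords but with channel-strength vectors that differ only in the $i$-th coordinate (for the signal of BS-$i$) — actually, more carefully, one must account for the inter-cell links too, and this is exactly where the SIR order \eqref{eq:SIR_order} is needed: it guarantees that for every interfering BS $j\ne i$, the strength gap $\alpha_{ii}^{[l+1]} - \alpha_{ii}^{[l]}$ dominates the corresponding gap $\alpha_{ij}^{[l+1]} - \alpha_{ij}^{[l]}$ at the cross-links, so the maximizing index in \eqref{eq:entropy_diff_lemma_brief} is the direct link $i$, yielding an entropy difference bounded by $(\alpha_{ii}^{[l+1]} - \alpha_{ii}^{[l]}) T\log P$. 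Summing these $L-1$ differences telescopes the channel-strength terms to $(\alpha_{ii}^{[L]} - \alpha_{ii}^{[1]}) T\log P$, leaving the two boundary entropy terms $H(\bar{\bm{Y}}_i^{[1]} \mid \bm{G}, U_i)$ and $H(\bar{\bm{Y}}_i^{[L]} \mid \bm{G}, \bm{W}_i, U_i)$, which is precisely \eqref{eq:single_cell_sum_rate_bound}.

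The main obstacle I anticipate is the bookkeeping around the conditioning sets when invoking the aligned-images lemma: Lemma \ref{lemma:AI_diff_enropies} requires the conditioned-upon tuple $(U, \bar{\bm{X}}_1,\ldots,\bar{\bm{X}}_K)$ to be independent of $\bm{G}$, so at the step comparing receivers $(l,i)$ and $(l+1,i)$ one must set the lemma's auxiliary variable to $U = (U_i, \bm{W}_i^{[1:l]})$ and verify that this composite is independent of $\bm{G}$ — true because codewords are functions of messages (and $U_i$) only, and messages and $U_i$ are independent of $\bm{G}$. A secondary subtlety is making the degradedness/genie chain rigorous despite inter-cell interference: because $U_i$ and the $\bm{W}_i$'s carry no information about the other cells' codewords, the inter-cell contribution at receivers in cell $i$ is, conditioned on $\bm{G}$, just a fixed (channel-dependent) signal whose entropy contribution is handled uniformly by the lemma. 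I would also need to confirm that the natural successive-decoding order $1, 2, \ldots, L$ is the correct one here — this is guaranteed by \eqref{eq:strength_order} together with \eqref{eq:SIR_order}. Everything else is routine Fano-inequality and chain-rule manipulation with $o(\log P)$ slack absorbed into $\dotleq$.
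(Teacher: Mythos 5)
Your proposal is correct and follows essentially the same route as the paper: Fano's inequality with the genie chain $W_i^{[1:l-1]}$, re-pairing the telescoped entropy terms so that Lemma \ref{lemma:AI_diff_enropies} is applied to the difference $H(\bar{\bm{Y}}_i^{[l]}\mid \bm{G},W_i^{[1:l-1]},U_i)-H(\bar{\bm{Y}}_i^{[l-1]}\mid \bm{G},W_i^{[1:l-1]},U_i)$ with auxiliary variable $(U_i,W_i^{[1:l-1]})$, and then using the SIR order \eqref{eq:SIR_order} (together with the SNR order \eqref{eq:strength_order} to drop the $(\cdot)^{+}$) to identify the direct link as the maximizer, so the bound telescopes to $(\alpha_{ii}^{[L]}-\alpha_{ii}^{[1]})T\log P$. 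The subtleties you flag — independence of the composite conditioning variable from $\bm{G}$, and the role of the SIR order in selecting index $i$ — are exactly the points the paper's proof relies on.
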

\begin{proof}
In what follows, we use $W_{i}^{[1:l]}$ to denoted $\big(W_{i}^{[1]}, \ldots, W_{i}^{[l]} \big)$, where $(l,i) \in \mathcal{U}$.
From Fano's inequality, and the independence of $W_{i}^{[l]} $ and $\big( W_{i}^{[1:l - 1]}, U_i  \big)$, we obtain
\begin{align}
\nonumber
 T &  \sum_{l = 1}^{L }  \  (R_{i}^{[l]} - \epsilon_{T} ) \leq \
 \sum_{l = 1}^{L}    I \Big(  W_{i}^{[l]} ;  \bar{\bm{Y}}_{i}^{[l]}  | \bm{G},  W_{i}^{[1:l - 1]}, U_i   \Big) \\
 \nonumber
 = &  \sum_{l = 1}^{L}    H  \Big(   \bar{\bm{Y}}_{i}^{[l]}  |   \bm{G},W_{i}^{[1:l - 1]},U_i  \Big)  - 
 H \Big(  \bar{\bm{Y}}_{i}^{[l]}  |  \bm{G}, W_{i}^{[1:l]} , U_i \Big)  \\
\nonumber
  = &   \sum_{l = 2}^{L} \!   \Big[ \!  H  \Big(   \bar{\bm{Y}}_{i}^{[l ]}  |  \bm{G}, W_{i}^{[1:l - 1]} ,U_i \Big) \!  - \! 
 H \Big(  \bar{\bm{Y}}_{i}^{[l - 1]}  |  \bm{G}, W_{i}^{[1:l - 1 ]}, U_i  \Big)  \!  \Big]
\! + \! H  \Big(   \bar{\bm{Y}}_{i}^{[1]} |  \bm{G},U_i  \Big) \!  - \! H \Big(  \bar{\bm{Y}}_{i}^{[L]}  |  \bm{G}, \bm{W}_{i},U_i \Big)  \\
\label{eq:converse_IBC_single_cell_2}
  = &   \sum_{l = 2}^{L}  \!    \Big[ \!  \Hbb \Big( \bm{\alpha}_i^{[l]}  |   W_{i}^{[1:l - 1]},U_i \Big)  \!  -  \!   
  \Hbb \Big( \bm{\alpha}_i^{[l-1]}  |  W_{i}^{[1:l - 1]},U_i  \Big) \!  \Big]
  \! + \! H  \Big(   \bar{\bm{Y}}_{i}^{[1]} |  \bm{G} , U_i  \Big)  \! - \! H \Big(  \bar{\bm{Y}}_{i}^{[L]}  |  \bm{G}, \bm{W}_{i} , U_i\Big).
\end{align}
In the above,  $\epsilon_T $ denotes terms that approach zero as $T$ goes to infinity. 
As these terms  disappear once we take $T \to \infty$, they will be dropped henceforth. 
Next, we focus on bounding the sum of differences of entropies described in brief notation in \eqref{eq:converse_IBC_single_cell_2}.
By employing Lemma \ref{lemma:AI_diff_enropies}, we obtain 
\begin{align}
\nonumber
 \sum_{l = 2}^{L}   \Hbb \Big( \bm{\alpha}_i^{[l]}  |  W_{i}^{[1:l - 1]}  , U_i \Big)  \!  -  \!    
  \Hbb \Big( \bm{\alpha}_i^{[l-1]}  | W_{i}^{[1:l - 1]}  , U_i  \Big)  
& \; \dotleq \; \sum_{l = 2}^{L}  \max_{j \in \langle K \rangle } \big(  \alpha^{[l ]}_{ij}  - \alpha^{[l  -1 ]}_{ij}   \big)^{+}  T \log (P) \\ 
\label{eq:converse_IBC_single_cell_4}
& = \sum_{l = 2}^{L}  \big(    \alpha^{[l ]}_{ii}  - \alpha^{[l -1 ]}_{ii} \big) T \log (P)   \\
\label{eq:converse_IBC_single_cell_5}
& =  \big(    \alpha^{[L ]}_{ii}  - \alpha^{[1]}_{ii} \big) T \log (P).
\end{align}
\eqref{eq:converse_IBC_single_cell_4} follows from the SIR order $\alpha^{[l ]}_{ii} - \alpha^{[l  ]}_{ij}  \geq  \alpha^{[l - 1 ]}_{ii} - \alpha^{[l -1  ]}_{ij} $ in \eqref{eq:SIR_order}, which implies
\begin{equation}
\nonumber 
\alpha^{[l ]}_{ii} - \alpha^{[l - 1 ]}_{ii}   \geq   \alpha^{[l  ]}_{ij}  - \alpha^{[l -1  ]}_{ij}, \ \forall l  \in \langle 2 : L \rangle, 
\end{equation}
as well as the SNR order $\alpha_{ii}^{[l]} \geq \alpha_{ii}^{[l-1]}$ in \eqref{eq:strength_order}, which allows us to drop $(\cdot)^{+}$.
By combining  \eqref{eq:converse_IBC_single_cell_2} and \eqref{eq:converse_IBC_single_cell_5},  we obtain the desired inequality in 
\eqref{eq:single_cell_sum_rate_bound}, which holds for any cell $i \in \langle K \rangle$.
\end{proof}
Note that  Lemma \ref{lemma:single_cell_bound} is derived without making explicit assumptions on transmitter cooperation, and it will be adapted to prove the outer bounds in Theorem \ref{theorem:CTIN_optimality} and Theorem \ref{theorem:MISO_BC_outerbound} by adjusting the transmitter cooperation assumptions and selecting suitable auxiliary variables $U_i$.
\begin{remark}
\label{remark:IBC_conditioning}
Before we prove the multi-cell converse bounds, it is worthwhile noting that when using Lemma  \ref{lemma:single_cell_bound} in the context of the IBC, the negative entropy term in  \eqref{eq:single_cell_sum_rate_bound} can be written as
\begin{equation}
\nonumber
H \big(  \bar{\bm{Y}}_{i}^{[L]}  \mid  \bm{G}, \bm{W}_{i} , U_i \big) =  H \big(  \widetilde{\bm{Y}}_{i}^{[L]}  \mid  \bm{G}, U_i \big) 
\end{equation}
where $ \widetilde{\bm{Y}}_{i}^{[L]}$ is obtained from $ \bar{\bm{Y}}_{i}^{[L]}$ by subtracting the contribution of the input signal $ \bar{\bm{X}}_{i}$.
This holds since the signal transmitted from BS-$i$ depends only on the set of messages in the same cell under no transmitter
 cooperation, i.e. $ \bar{\bm{X}}_{i}$ is fully determined by $\bm{W}_{i}$ (see Section \ref{subsec:no_cooperation}). Therefore, the contribution of $ \bar{\bm{X}}_{i}$ to $ \bar{\bm{Y}}_{i}^{[L]}$ can be subtracted, after which the conditioning on $ \bar{\bm{X}}_{i}$  (or $\bm{W}_{i}$) can be dropped, since $ \bar{\bm{X}}_{i}$  is independent of all remaining signals from BSs other than BS-$i$. 
This, however, does not hold for the MISO-BC, where messages to all users in the network are jointly encoded by all BSs (see Section \ref{subsec:full_cooperation}).
In this case, choosing the \emph{right} side information variables becomes critical to obtain the desired outer bound, as we will see in Section \ref{subsec:MISO_BC_outer_bound}.
\hfill $\lozenge$
\end{remark}
We are now equipped to prove the multi-cell converse bounds in Theorem \ref{theorem:CTIN_optimality}
and Theorem \ref{theorem:MISO_BC_outerbound}.
We follow in the footsteps of the proofs of \cite[Th. 4.1]{Chan2020} and \cite[Lem. B.4]{Chan2020}, where similar converse bounds are derived for $K \times K$ networks. 
In the following proofs, we generalize the bounds in \cite{Chan2020}  to  $K \times KL$ cellular networks, with the aid of single-cell bounds derived in Lemma \ref{lemma:single_cell_bound}.
\subsection{IBC Outer Bound: Proof of Theorem \ref{theorem:CTIN_optimality}}
\label{subsec:IBC_outer_bound}
In this part, we prove that each of the inequalities in 
\eqref{eq:TIN_region_CTIN_regime} is a valid outer bound for the IBC in the mc-CTIN regime.
Single-cell bounds in \eqref{eq:TIN_region_CTIN_regime},  associated with cycles of length $|\pi| = 1$, are directly obtained from the capacity region of the degraded Gaussian BC (see, e.g., \cite{Cover2012}).
We hence focus on multi-cell bounds in  \eqref{eq:TIN_region_CTIN_regime},  associated with cycles of length 
$|\pi| \geq 2$.

Consider a cycle $\pi \in \Pi$  of length $M = |\pi | \geq 2$, and eliminate all non-participating users (and their message), i.e. users not in the set $\big\{ (s_k,k) : s_k \in \langle l_k \rangle , (l_k,k) \in \{\pi\}  \big\}$.
This cannot hurt participating users.
For any participating cell $i$ and its $l_{i}$ participating users, where $(l_i,i) \in \{\pi\}$,  the corresponding sum-rate  $\bar{R}_{i}^{[l_{i}]} \triangleq \sum_{s_i = 1}^{l_{i}  } R_{i}^{[s_{i}]}$ is bounded above as 
\begin{equation}
\label{eq:single_cell_bound_IBC}
T\bar{R}_{i}^{[l_{i}]}   \; \dotleq \;   \big(    \alpha^{[l_{i } ]}_{i i}  -  \alpha^{[1]}_{i i } \big) T \log (P) 
+ H  \Big(   \bar{\bm{Y}}_{i}^{[1]} |  \bm{G}  \Big)  - H \Big(  \bar{\bm{Y}}_{i}^{[l_{i}]}  |  \bm{G}, \bm{W}_{i}  \Big).
\end{equation}
which follows directly from Lemma \ref{lemma:single_cell_bound}. Note that after eliminating non-participating users, $\bm{W}_{i}$ now corresponds to $\big(W_{i}^{[1]},\ldots,W_{i}^{[l_i]}  \big)$.
By summing over single-cell bounds obtained from \eqref{eq:single_cell_bound_IBC} of all $M$ participating cells, and after rearranging entropy terms, we obtain a cycle sum-rate bound as 
\begin{align}
\label{eq:converse_IBC_multi_cell_1}
T \! \! \!  \!   \sum_{(l_i,i) \in \{ \pi \}} \! \! \!  \!  \bar{R}_{i}^{[l_{i}]}  \; \dotleq \;
\sum_{m = 1}^{M}    \big(    \alpha^{[l_{i_m } ]}_{i_m i_m}  -  \alpha^{[1]}_{i_m i_m } \big) T \log (P) 
+ \sum_{m = 1}^{M} H  \Big(   \bar{\bm{Y}}_{i_m}^{[1]} |  \bm{G}  \Big)  - H \Big(  \bar{\bm{Y}}_{i_{m+1}}^{[l_{i_{m+1}}]}  |  \bm{G}, \bm{W}_{i_{m+1}}  \Big).
\end{align}
Focusing on the the 
sum of differences of entropies in  \eqref{eq:converse_IBC_multi_cell_1}, we obtain
\begin{align}
\label{eq:converse_IBC_multi_cell_2_hadamard_0}
 \sum_{m = 1}^{M}  H  \Big(   \bar{\bm{Y}}_{i_m}^{[1]} |  \bm{G}  \Big)  - H \Big(  \bar{\bm{Y}}_{i_{m+1}}^{[l_{i_{m+1}}]}  |  \bm{G}, \bm{W}_{i_{m+1}}  \Big)  & =  \sum_{m = 1}^{M} \mathbb{H} \big( \bm{\alpha}_{i_m}^{[1]} \big) -   \mathbb{H} \big( \bm{\alpha}_{i_{m+1}}^{[l_{i_{m+1}}]}   \mid  \bar{\bm{X}}_{i_{m+1}}  \big)  \\
\label{eq:converse_IBC_multi_cell_2_hadamard}
& = \sum_{m = 1}^{M} \mathbb{H} \big( \bm{\alpha}_{i_m}^{[1]} \big) -   \mathbb{H} \big( \bm{\alpha}_{i_{m+1}}^{[l_{i_{m+1}}]} \circ 
\bar{\mathbf{e}}_{i_{m+1}}  \big).
\end{align}
The equality in \eqref{eq:converse_IBC_multi_cell_2_hadamard_0} is obtained by invoking \eqref{eq:entropy_diff_lemma_brief}, while noting that $\bar{\bm{X}}_{i_{m+1}}$ depends only on $\bm{W}_{i_{m+1}}$ in the IBC.
In \eqref{eq:converse_IBC_multi_cell_2_hadamard}, 
 $\bar{\mathbf{e}}_{i_{m+1}}$ is an appropriately-sized vector of all ones except for a single zero at the $(i_{m+1})$-th element, while $\circ$ denotes the element-wise product. Therefore
\begin{equation}
\nonumber
\bm{\alpha}_{i_{m+1}}^{[l_{i_{m+1}}]} \circ
\bar{\mathbf{e}}_{i_{m+1}}  = \big[ \alpha_{i_{m+1} 1 }^{[l_{i_{m+1}}]}  \; \cdots \! \! \!  \! \! \! \! \!   \overbrace{0}^{(i_{m+1})\text{-th entry}} \! \! \! \! \! \! \! \! \cdots \; \alpha_{i_{m+1} K}^{[l_{i_{m+1}}]}  \big].
\end{equation}
For each $m \in \langle M \rangle$ in  \eqref{eq:converse_IBC_multi_cell_2_hadamard}, we subtracted the contribution of $\bar{\bm{X}}_{i_{m+1}}$ to  $\bar{\bm{Y}}_{i_{m+1}}^{[l_{i_{m+1}}]}$, after which the conditioning on $\bar{\bm{X}}_{i_{m+1}}$ can be dropped (see Remark \ref{remark:IBC_conditioning}).
From a GDoF perspective, this is equivalent to replacing $\alpha_{i_{m+1} i_{m+1}}^{[l_{i_{m+1}}]}$ in $\bm{\alpha}_{i_{m+1}}^{[l_{i_{m+1}}]}$ with a zero, as in \eqref{eq:converse_IBC_multi_cell_2_hadamard}.
By applying Lemma \ref{lemma:AI_diff_enropies} to \eqref{eq:converse_IBC_multi_cell_2_hadamard}, we obtain
\begin{align}
\nonumber
\sum_{m = 1}^{M} & \mathbb{H} \big( \bm{\alpha}_{i_m}^{[1]} \big) -   \mathbb{H} \big( \bm{\alpha}_{i_{m+1}}^{[l_{i_{m+1}}]} \circ 
\bar{\mathbf{e}}_{i_{m+1}}  \big)  \\
\label{eq:converse_IBC_multi_cell_pre3}
& \; \dotleq \;    \sum_{m = 1}^{M} \max \Big(  \alpha^{[1]}_{i_m i_{m+1}} ,  
\max_{k \in \langle K \rangle, k \neq i_{m+1}} \Big(  \alpha^{[1]}_{i_m k} 
- \alpha^{[l_{i_{m+1}}]}_{i_{m+1} k}   \Big)^{+}  \Big) T \log (P)   \\
\label{eq:converse_IBC_multi_cell_3}
&  \leq   \sum_{m = 1}^{M}   \big(   \alpha^{[1]}_{i_m i_m}   - \alpha^{[l_{i_{m+1}}]}_{i_{m+1} i_m}  \big) T \log (P).
\end{align}
The bound in  \eqref{eq:converse_IBC_multi_cell_3} holds due to the mc-CTIN condition in \eqref{eq:CTIN_cond_2}, which implies that
\begin{align}
\nonumber
\alpha^{[1]}_{i_m i_m}   - \alpha^{[l_{i_{m+1}}]}_{i_{m+1} i_m }  &  \geq  \alpha^{[1]}_{i_m k } 
- \alpha^{[l_{i_{m+1}}]}_{i_{m+1} k}  \\
\nonumber
\alpha^{[1]}_{i_m i_m}   - \alpha^{[l_{i_{m+1}}]}_{i_{m+1} i_m }   & \geq \alpha^{[1]}_{i_m i_{m+1}}.
\end{align}
By combining  the bounds in  \eqref{eq:converse_IBC_multi_cell_3}  and \eqref{eq:converse_IBC_multi_cell_1},
we obtain the desired cycle bound as
\begin{align}
\nonumber
T \! \! \!  \!   \sum_{(l_i,i) \in \{ \pi \}} \! \! \!  \!  \bar{R}_{i}^{[l_{i}]} 
& \; \dotleq \;  \sum_{m = 1}^{M}   \left(
    \alpha^{[l_{i_m}]}_{i_m i_m}   - 
 \alpha^{[l_{i_{m+1}}]}_{i_{m+1} i_m}  \right) T \log (P) \\ 
&  =  \Delta_{\pi} T \log (P).
\end{align}
The above applies to any cycle $\pi$ of length $| \pi | \geq 2$, which concludes the converse proof for the IBC.
\subsection{MISO-BC Outer Bound: Proof of Theorem \ref{theorem:MISO_BC_outerbound}}
\label{subsec:MISO_BC_outer_bound}
We now move on to proving that each of the inequalities in \eqref{eq:MBC_GDoF_outerbound} is a valid outer bound for the MISO-BC in the mc-SLS regime.
As in the previous part, we consider a cycle $\pi \in \Pi$  of length $M = |\pi | \geq 2$.
Unlike the IBC, however, such cycle is associated with $M$ bounds for the MISO-BC. We first focus on the $M$-th of such bounds, i.e. $ \sum_{(l_i,i) \in \{ \pi \}} \sum_{s_i \in \langle l_i \rangle} d_k^{[ s_i ]}    \leq \Delta_{\pi,M}^{+}$, and we address remaining bounds further on.
Following similar steps leading to \eqref{eq:converse_IBC_multi_cell_1}, we obtain
\begin{align}
\nonumber
T \! \! \sum_{(l_i,i) \in \{ \pi \}} \! \!  \bar{R}_{i}^{[l_{i}]}  \!  \; \dotleq \;
\sum_{m = 1}^{M}  &    \big(    \alpha^{[l_{i_m } ]}_{i_m i_m}  -  \alpha^{[1]}_{i_m i_m } \big) T \log (P) 
 + \\ 
\label{eq:converse_MISO_BC_1}
& \sum_{m = 1}^{M}   H  \Big(   \bar{\bm{Y}}_{i_m}^{[1]} |  \bm{G} , \bm{U}_{i_m} \Big)   -  H \Big(  \bar{\bm{Y}}_{i_{m+1}}^{[l_{i_{m+1}}]}  |  \bm{G}, \bm{W}_{i_{m+1}}, \bm{U}_{i_{m+1}} \Big).
\end{align}
For $m \in \langle M \rangle$, we  select the side information variable $\bm{U}_{i_m}$ as 
\begin{equation}
\label{eq:side_information_U_i_m}
\bm{U}_{i_m} = \big(\bm{W}_{i_{m+1}} ,\ldots, \bm{W}_{i_M}  \big).
\end{equation} 
Note that $\bm{U}_{i_M}$ is empty, and hence users in cell $i_M$ are given no side information about messages intended to users in other cells.\footnote{It is worth noting that modulo $M$ is not used for cell indices in \eqref{eq:side_information_U_i_m}.}
The sum of entropy differences in \eqref{eq:converse_MISO_BC_1} leads to
\begin{align}
\nonumber
 \sum_{m = 1}^{M} &  H  \Big(   \bar{\bm{Y}}_{i_m}^{[1]} |  \bm{G} , \bm{U}_{i_m} \Big)   -  H \Big(  \bar{\bm{Y}}_{i_{m+1}}^{[l_{i_{m+1}}]}  |  \bm{G}, \bm{W}_{i_{m+1}}, \bm{U}_{i_{m+1}} \Big) \\
\label{eq:converse_MISO_BC_2}
& \leq  \sum_{m = 1}^{M-1}  \bigg[   H  \Big(   \bar{\bm{Y}}_{i_m}^{[1]} |  \bm{G}, \bm{U}_{i_{m}} \Big)   - H \Big(  \bar{\bm{Y}}_{i_{m+1}}^{[l_{i_{m+1}}]}  |  \bm{G}, \bm{U}_{i_{m}}  \Big) \bigg] + H  \Big(  \bar{\bm{Y}}_{i_M}^{[1]} |  \bm{G}\Big) \\
\label{eq:converse_MISO_BC_3}
  & \leq \sum_{m = 1}^{M-1} \bigg[ \mathbb{H} \Big( \bm{\alpha}_{i_m}^{[1]} | \bm{U}_{i_m} \Big) -   \mathbb{H} \Big( \bm{\alpha}_{i_{m+1}}^{[l_{i_{m+1}}]}   |  \bm{U}_{i_m}  \Big) \bigg] +
\alpha_{i_M i_M}^{[1]}  T\log(P)  \\
\label{eq:converse_MISO_BC_4}
&  \; \dotleq \;  \sum_{m = 1}^{M-1} \max_{k \in \langle K \rangle} \left( \alpha_{i_m k}^{[1]} - \alpha_{i_{m+1} k}^{[l_{i_{m+1}}]}  \right)^{+} T\log(P)  +
\alpha_{i_M i_M}^{[1]}  T\log(P)  \\
\label{eq:converse_MISO_BC_5}
& \leq   \sum_{m = 1}^{M-1} \Big( \alpha_{i_m i_m}^{[1]} - \alpha_{i_{m+1} i_m}^{[l_{i_{m+1}}]}  \Big)T\log(P)   + \alpha_{i_M i_M}^{[1]} T\log(P).
\end{align}
In \eqref{eq:converse_MISO_BC_2}, we used the fact $(\bm{W}_{i_{m+1}} , \bm{U}_{i_{m+1}} ) = \bm{U}_{i_{m}}$,
while the  upper bound in \eqref{eq:converse_MISO_BC_4} is obtained using Lemma \ref{lemma:AI_diff_enropies}.
On the other hand, \eqref{eq:converse_MISO_BC_5} holds due to the 
mc-SLS condition in \eqref{eq:SLS_cond_2}, which implies 
\begin{align}
\nonumber
\alpha^{[1]}_{i_m i_m}   - \alpha^{[l_{i_{m+1}}]}_{i_{m+1} i_m }  &  \geq  \alpha^{[1]}_{i_m k } 
- \alpha^{[l_{i_{m+1}}]}_{i_{m+1} k}  \\
\nonumber
\alpha^{[1]}_{i_m i_m}   - \alpha^{[l_{i_{m+1}}]}_{i_{m+1} i_m }   & \geq 0.
\end{align}
By combining  the bounds in  \eqref{eq:converse_MISO_BC_5} and \eqref{eq:converse_MISO_BC_1},
we obtain
\begin{align}
\nonumber
T \! \! \! \! \sum_{(l_i,i) \in \{ \pi \}} \! \! \! \!  \bar{R}_{i}^{[l_{i}]}   &  \; \dotleq \;   
\sum_{m = 1}^{M -1 }    \big(    \alpha^{[l_{i_m } ]}_{i_m i_m}  -  \alpha^{[l_{i_{m+1} }]}_{i_{m+1} i_m } \big) T \log (P) 
 + \alpha_{i_M i_M}^{[l_{i_M}]} T\log(P)  \\ 
& =
\sum_{m = 1}^{M }    \big(    \alpha^{[l_{i_m } ]}_{i_m i_m}  -  \alpha^{[l_{i_{m+1} }]}_{i_{m+1} i_m } \big) T \log (P) 
+ \alpha_{i_1 i_M}^{[l_{i_1}]} T\log(P)   \\ 
& = \Delta_{\pi} T \log (P) + \alpha_{i_1 i_M}^{[l_{i_1}]} T\log(P).
\end{align}
This proves the $M$-th outer bound associated with cycle $\pi$.
To obtain the remaining $M-1$ bounds associated with the same cycle $\pi$, we follow the same steps while replacing $\pi$ with a shifted cycle $\pi'$, where $\pi'(m) = \pi(m+j)$ for some $j \in \langle M-1 \rangle$.
We obtain the bound
\begin{align}
\nonumber
T \! \! \! \! \sum_{(l_i,i) \in \{ \pi' \}} \! \! \! \!  \bar{R}_{i}^{[l_{i}]}   & \; \dotleq \; \Delta_{\pi'} T \log (P) + \alpha_{\sigma'(1) \sigma'(M)}^{[l_{\sigma'(1)}]}   T\log(P) \\
\label{eq:MISO_outer_shift_1}
&  =  \Delta_{\pi} T \log (P) +  \alpha_{\sigma(j+1) \sigma(j+M)}^{[l_{\sigma(j+1)}]}  T\log(P) \\
\label{eq:MISO_outer_shift_2}
&  =  \Delta_{\pi} T \log (P) +  \alpha_{i_{j+1} i_j}^{[l_{i_{j+1}}]}  T\log(P)
\end{align}
where \eqref{eq:MISO_outer_shift_1} holds since 
$\Delta_{\pi'}  = \Delta_{\pi} $ and $\sigma'(m) = \sigma(j+m) $.
Repeating the same steps for all  $j$, we obtain all $M$ bounds associated with cycle $\pi$.
The same can be done for all cycles $\pi \in \Pi$ with $|\pi| \geq 2$, from which we obtain the outer bound in Theorem \ref{theorem:MISO_BC_outerbound}. This concludes the proof.
\section{Optimality of Multi-Cell SLS in 2-Cell Networks}
\label{sec:2_cell_SLS}
Here we show that the MISO-BC outer bound in Theorem \ref{theorem:MISO_BC_outerbound} is tight for $2$-cell networks in the mc-SLS regime, hence proving the result in Theorem \ref{theorem:2_cell_SLS}.
In the $2$-cell case, the outer bound $\mathcal{D}_{\mathrm{out}}^{\mathrm{SLS}}$ is described by
all GDoF tuples $\mathbf{d} \in \mathbb{R}_{+}^{2K}$ with components satisfying
\begin{align}
\label{eq:MISO_outer_bound_2_cell_1}
 \bar{d}_{1}^{[l_1]}  & \leq \alpha_{11}^{[l_1]}  \\ 
\bar{d}_{2}^{[l_2]} & \leq \alpha_{22}^{[l_2]}  \\ 
\bar{d}_{1}^{[l_1]}  + \bar{d}_{2}^{[l_2]}  & \leq \alpha_{11}^{[l_1]} - \alpha_{21}^{[l_2]}  + 
\alpha_{22}^{[l_2]}   \\
\label{eq:MISO_outer_bound_2_cell_3}
\bar{d}_{1}^{[l_1]} + \bar{d}_{2}^{[l_2]}  & \leq \alpha_{11}^{[l_1]} + 
\alpha_{22}^{[l_2]}  - \alpha_{12}^{[l_1]}
\end{align}
for all $l_1,l_2 \in \langle L \rangle$.
Recall that $\bar{d}_{i}^{[l_i]}$ denotes $\sum_{s_i = 1}^{l_i}d_{i}^{[s_i]}$.
To show that the outer bound $\mathcal{D}_{\mathrm{out}}^{\mathrm{SLS}}$ is achievable in the $2$-cell case, we consider a simplified SLS scheme and show that the corresponding achievable GDoF region coincides with the one described in \eqref{eq:MISO_outer_bound_2_cell_1}--\eqref{eq:MISO_outer_bound_2_cell_3}.
Before we proceed, we highlight that the set of all UEs, given by $\mathcal{U} = \big\{ (l_i, i) : l_i \in \langle L \rangle , i \in \langle 2 \rangle \big\}$ in this part, is partitioned as $\mathcal{U}_1 \cup \mathcal{U}_2$, where 
$\mathcal{U}_i$ denotes the set of UEs in cell $i$, for $i \in \langle 2 \rangle$.
\subsection{Simplified 2-Cell SLS}
\label{subsec:2_cell_SLS}
In the considered scheme, the message intended to UE-$(l_i,i)$ is split as $W_{i}^{[l_i]}  = \big( W_{\mathrm{s},i}^{[l_i]}  , W_{\mathrm{c},i}^{[l_i]}   \big)$, where $W_{\mathrm{s},i}^{[l_i]} $ is a single-cell sub-message, transmitted from BS-$i$ only; while 
$W_{\mathrm{c},i}$ is a multi-cell sub-message, transmitted in a cooperative fashion from both BS-$1$ and BS-$2$.
Each single-cell sub-message $W_{\mathrm{s},i}^{[l_i]}$ is encoded into a codeword $\bm{X}_{\mathrm{s},i}^{[l_i]}$. On the other hand,
the $2L$ multi-cell sub-messages $\big( W_{\mathrm{c},i}^{[l_i]} : (l_i,i) \in \mathcal{U}\big)$ are jointly encoded into the common codeword $\bm{X}_{\mathrm{c}}$. 
Codewords are independent, and each is drawn from a Gaussian codebook with unit average power.
Moreover, sub-messages $W_{\mathrm{s},i}^{[l_i]}$  and $W_{\mathrm{c},i}^{[l_i]}$ carry GDoFs of $d_{\mathrm{s},i}^{[l_i]}$  and $d_{\mathrm{c},i}^{[l_i]}$ respectively; and hence, the total GDoF achieved by UE-$(l_i,i)$ is given by a sum of two contributions as
$d_{i}^{[l_i]} = d_{\mathrm{s},i}^{[l_i]}+d_{\mathrm{c},i}^{[l_i]}$. 
GDoF tuples of single-cell and multi-cell sub-messages are given by 
$\mathbf{d}_{\mathrm{s}} \triangleq \big( d_{\mathrm{s},i}^{[l_i]} : (l_i,i) \in \mathcal{U} \big)$ and
$\mathbf{d}_{\mathrm{c}} \triangleq \big( d_{\mathrm{c},i}^{[l_i]} : (l_i,i) \in \mathcal{U} \big)$ respectively,
from which a GDoF tuple of user messages is given by $\mathbf{d} = \mathbf{d}_{\mathrm{s}}  + \mathbf{d}_{\mathrm{c}} $.

The transmit signal of BS-$i$ is composed as a  superposition of the $L$ designated single-cell codewords and the common codeword.
For a single use of the channel, this is given by:
\begin{equation}
X_{i}  = \sqrt{q_{\mathrm{c}} } X_{\mathrm{c}} +  \sum_{l_i \in \langle L \rangle } \sqrt{q_{i}^{[l_i]}  } X_{\mathrm{s},i}^{[l_i]}
\end{equation}
where $q_{\mathrm{c}} ,q_{i}^{[1]},\ldots, q_{i}^{[L]} $ are power control variables that satisfy $q_{\mathrm{c}} + \sum_{l_i \in \langle L \rangle} q_{i}^{[l_i]} \leq 1$.
For the purpose of GDoF analysis, we set the power control variables as follows:
\begin{equation}
\label{eq:power_allocation_2_cell_SLS}
q_{\mathrm{c}}   = \frac{1}{L+1} \ \ \text{and} \ \ q_{i}^{[l_i]}  =  \frac{1}{L+1} \cdot 
P^{r_i^{[l_i]}} 
\end{equation}
where the exponents $r_i^{[l_i]} \leq 0$, for all $(l_i,i) \in \mathcal{U}$, are power control variables on the GDoF scale.
The tuple of (GDoF scale) power control variables is given by $\mathbf{r} \triangleq \big( r_i^{[l_i]} : (l_i,i) \in \mathcal{U} \big)$.

At the other end of the channel, each UE-$(l_i,i)$ successively decodes
$\bm{X}_{\mathrm{c}}, \bm{X}_{\mathrm{s},i}^{[1]} ,\ldots, \bm{X}_{\mathrm{s},i}^{[l_i]}$, in this specific order, while treating (intra- and inter-cell)  interference from all remaining codewords as additional Gaussian noise.
Note that in successive decoding, each decoded codeword is reconstructed, and its contribution is removed from the received signal before decoding the following codewords.  
We further note that $\bm{X}_{\mathrm{c}}$ is decoded by all UEs  in the network and it may be useful (in part) for each of them, depending on the GDoF allocation $\mathbf{d}_{\mathrm{c}}$. On the other hand, each codeword $\bm{X}_{\mathrm{s},i}^{[l_i]}$ is  decoded by UEs $(l_i,i), \ldots , (L,i)$ in cell $i$, and it is intended to UE-$(l_i,i)$ only.

For a given power control policy $\mathbf{r}$, the multi-cell sub-messages, carried through the common codeword $\bm{X}_{\mathrm{c}} $, achieve any GDoF tuple $\mathbf{d}_{\mathrm{c}} \in \mathbb{R}_{+}^{2K}$ with components satisfying
\begin{equation}
\label{eq:GDoF_common_2_cell_SLS}
\sum_{(l_i,i) \in \mathcal{U}} d_{\mathrm{c},i}^{[l_i]} \leq 
\min_{(l_i,i) \in \mathcal{U}} \left( \alpha_{ii}^{[l_i]} - \max\left\{ \alpha_{ii}^{[l_i]}  + 
\max_{s_i \in \langle L \rangle} \big\{r_{i}^{[s_i]} \big\} , \alpha_{ij}^{[l_i]}  + \max_{s_j \in \langle L \rangle } \big\{ r_{j}^{[s_j]} \big\} , 0 \right \} \right)^{+}.
\end{equation}
On the other hand, single-cell sub-messages achieve $\mathbf{d}_{\mathrm{s}} \in \mathbb{R}_{+}^{2K}$  with components satisfying
\begin{equation}
\label{eq:GDoF_TIN_2_cell_SLS}
d_{\mathrm{s},i}^{[l_i]}   \leq \min_{s_i \geq l_i } \left( \alpha_{ii}^{[s_i]} + r_{i}^{[l_i]} - 
\max\left\{ \alpha_{ii}^{[s_i]}  + \max_{l_i' > l_i} \big\{r_{i}^{[l_i']} \big\}  , \alpha_{ij}^{[s_i]}  + \max_{l_j \in \langle L \rangle} \big\{r_{j}^{[l_j]} \big\} , 0 \right \} \right)^{+}.
\end{equation}
In both \eqref{eq:GDoF_common_2_cell_SLS} and \eqref{eq:GDoF_TIN_2_cell_SLS}, it is implicitly assumed that $i \neq j$.
Moreover, note that the successive decoding order specified earlier is reflected in \eqref{eq:GDoF_common_2_cell_SLS}
and \eqref{eq:GDoF_TIN_2_cell_SLS}.
\begin{remark}
The simplified SLS scheme presented above is a superposition of a mc-TIN scheme and an overlaying multicast codeword (also known as a common, or public, message). 
As seen from \eqref{eq:power_allocation_2_cell_SLS}, single-cell codewords, constituting the mc-TIN part, occupy lower power levels, and they are treated as noise by other-cell users.
On the other hand, the multi-cell multicast codeword occupies higher power levels, and it is decoded by all users in the network. 
This scheme is a special case of the more general mc-SLS scheme, described in Section \ref{subsubsec:scheme_SLS}.
On the other hand, the scheme may also be seen as a multi-cell extension (or generalization) of the rate-splitting schemes in \cite{Davoodi2018,Joudeh2016}.
In what follows, we refer to $\mathbf{d}_{\mathrm{s}}$ as the TIN contribution, and to $\mathbf{d}_{\mathrm{c}}$ as the multicast contribution.
\hfill $\lozenge$
\end{remark}
For a fixed power control policy $\mathbf{r}$, the set of all tuples $\mathbf{d}_{\mathrm{c}}$ that satisfy \eqref{eq:GDoF_common_2_cell_SLS} is denoted by $\mathcal{D}^{\mathrm{Mul}}(\mathbf{r})$; while $\mathcal{D}^{\mathrm{TIN}}(\mathbf{r})$ denotes the set of all tuples $\mathbf{d}_{\mathrm{s}}$ that satisfy \eqref{eq:GDoF_TIN_2_cell_SLS}.
The GDoF region achievable  through the proposed scheme is hence given by all tuples $\mathbf{d}$,
where $\mathbf{d} = \mathbf{d}_{\mathrm{c}} + \mathbf{d}_{\mathrm{s}}$
for some $\mathbf{d}_{\mathrm{c}} \in \mathcal{D}^{\mathrm{Mul}}(\mathbf{r})$, $\mathbf{d}_{\mathrm{s}} \in \mathcal{D}^{\mathrm{TIN}}(\mathbf{r})$, and $\mathbf{r} \leq \mathbf{0}$.
Denoting this achievable region by $\mathcal{D}^{\mathrm{SLS}} $, it follows that 
\begin{equation}
\label{eq:GDoF_region_SLS_2_cell}
\mathcal{D}^{\mathrm{SLS}} = \bigcup_{\mathbf{r} \leq \mathbf{0}} \mathcal{D}^{\mathrm{Mul}}(\mathbf{r}) \oplus \mathcal{D}^{\mathrm{TIN}}(\mathbf{r})
\end{equation}
where $\oplus$ is the Minkowski sum (or vector sum) operation (see Section \ref{subsec:notation}).\footnote{With a slight abuse of notation, we denote the achievable region here by $\mathcal{D}^{\mathrm{SLS}}$. Strictly speaking, however, the achievable region on the right-hand-side of \eqref{eq:GDoF_region_SLS_2_cell} is included in the SLS region $\mathcal{D}^{\mathrm{SLS}}$, described in Section \ref{subsubsec:scheme_SLS}.}

Achievable GDoF tuples in \eqref{eq:GDoF_region_SLS_2_cell} are highly coupled with auxiliary design variables. For instance, each $\mathbf{d} \in \mathcal{D}^{\mathrm{SLS}} $ is determined by a multicast contribution $\mathbf{d}_{\mathrm{c}}$ and a TIN contribution $\mathbf{d}_{\mathrm{s}}$, which in turn depend on the power control tuple $\mathbf{r}$. Together, these auxiliary variables are jointly optimized to achieve different GDoF trade-offs (or tuples $\mathbf{d}$). 
On the other hand, the outer bound $\mathcal{D}^{\mathrm{MBC}}_{\mathrm{out}}$ is described in terms of fixed channel parameters only (i.e. $\bm{\alpha}$) with no auxiliary variables, as seen in \eqref{eq:MISO_outer_bound_2_cell_1}--\eqref{eq:MISO_outer_bound_2_cell_3}.
A direct comparison between $\mathcal{D}^{\mathrm{MBC}}_{\mathrm{out}}$  and $\mathcal{D}^{\mathrm{SLS}} $ can be facilitated by eliminating the auxiliary variables in the latter, which is typically accomplished by means of Fourier-Motzkin (FM) elimination.
Nevertheless, the intricate structure of $\mathcal{D}^{\mathrm{SLS}}$ and the high number of auxiliary variables prohibit a direct application of the FM procedure in this case.
To circumvent this technical challenge, we devise a sequence of reductions, which are summarized as follows:
\begin{itemize}
\item We first restrict the space of admissible power control variables $\mathbf{r}$, which in turn allows us to obtain simpler inner bounds for $\mathcal{D}^{\mathrm{Mul}}(\mathbf{r})$ and $ \mathcal{D}^{\mathrm{TIN}}(\mathbf{r})$, denoted by $\mathcal{D}^{\mathrm{Mul}}(a)$  and $\mathcal{D}^{\mathrm{PTIN}}(a)$, respectively. These inner bounds  depend on a single power control variable $a$ only. As we will see,  $\mathcal{D}^{\mathrm{PTIN}}(a)$ is obtained through a new application of the potential graph approach \cite{Geng2015}, a structured FM  procedure tailored for GDoF regions achievable through TIN.  
\item We then evaluate the Minkowski sum
$\mathcal{D}^{\mathrm{Mul}}(a) \oplus \mathcal{D}^{\mathrm{PTIN}}(a)$, for any feasible $a$, in terms of inequalities that bound $\mathbf{d}$ only by eliminating the auxiliary GDoF variables $\mathbf{d}_{\mathrm{c}}$  and $\mathbf{d}_{\mathrm{s}}$.
This is accomplished by exploiting polymatroid properties of these inner bounds, which allows us to utilize a result on the Minkowski sum of polymatroids \cite[Th. 44.6]{Schrijver2002}. 
At this point, we obtain an inner bound for $\mathcal{D}^{\mathrm{SLS}}$ with $a$ as the only remaining auxiliary variable.
\item We then eliminate the remaining variable $a$ via a second round of FM elimination.
Remarkably, the region obtained after this step matches the outer bound in \eqref{eq:MISO_outer_bound_2_cell_1}--\eqref{eq:MISO_outer_bound_2_cell_3}.
\end{itemize}
The remainder of this section is dedicated to explaining the above steps in more detail. 
\subsection{Optimality of 2-Cell SLS}
\label{subsec:optimality_of_2_cell_SLS}
As a first simplifying step, we restrict the space of feasible power control variables $\mathbf{r}$ such that 
for both $i \in \langle 2 \rangle$, we have the following order
\begin{align}
\label{eq:power_control_order}
r_{i}^{[L]} \leq r_i^{[L-1]} \leq \cdots \leq r_i^{[1]} \leq -a \\
\label{eq:a_LB_UB}
 \text{where}  \ \ 0 \leq a \leq \max_{i,j,l_i: i\neq j} \big\{   \alpha_{ij}^{[l_i]} \big\}.
\end{align}
As it turns out, imposing the power allocation order in \eqref{eq:power_control_order} incurs no loss of generality in the regimes of interest. This is due to the SIR order in \eqref{eq:SIR_order}, which holds in the mc-SLS regime, requiring 
higher power levels for weaker users. 
On the other hand, it is sufficient to consider values of $a$ which are no greater than the strongest interfering link as in \eqref{eq:a_LB_UB}, since this level of attenuation guarantees that all inter-cell interference from the mc-TIN scheme is received below the  noise level.  

Given the power allocation order in \eqref{eq:power_control_order}, we may bound the right-hand-side of \eqref{eq:GDoF_common_2_cell_SLS}  below as
\begin{align}
\nonumber
\min_{(l_i,i) \in \mathcal{U}} & \left( \alpha_{ii}^{[l_i]} - \max\left\{ \alpha_{ii}^{[l_i]}  + 
\max_{s_i \in \langle L \rangle} \big\{r_{i}^{[s_i]} \big\} , \alpha_{ij}^{[l_i]}  + \max_{s_j \in \langle L \rangle } \big\{ r_{j}^{[s_j]} \big\} , 0 \right \} \right)^{+} \\
\label{eq:GDoF_common_2_cell_SLS_1}
& \geq \min_{(l_i,i) \in \mathcal{U}} \left( \alpha_{ii}^{[l_i]} - 
\max\left\{ \alpha_{ii}^{[l_i]}  - a , \alpha_{ij}^{[l_i]}  - a  , 0 \right \} \right)^{+} \\
& = \min_{(l_i,i) \in \mathcal{U}} \left(  
\min\left\{ a , \alpha_{ii}^{[l_i]} - \alpha_{ij}^{[l_i]}  + a  , \alpha_{ii}^{[l_i]} \right \} \right)^{+} \\
\label{eq:GDoF_common_2_cell_SLS_2}
& = a
\end{align}
where the equality in \eqref{eq:GDoF_common_2_cell_SLS_2} holds due to $a \geq 0$, $\alpha_{ii}^{[l_i]} \geq a$ and $\alpha_{ii}^{[l_i]} \geq \alpha_{ij}^{[l_i]}$.
It follows that $\mathcal{D}^{\mathrm{Mul}} (\mathbf{r})$ includes the set of GDoF tuples  $\mathbf{d}_{\mathrm{c}} \in \mathbb{R}_{+}^{2K}$ with components satisfying
\begin{equation}
\label{eq:GDoF_common_2_cell_SLS_simplified}
\sum_{(l_i,i) \in \mathcal{U}} d_{\mathrm{c},i}^{[l_i]} \leq a.
\end{equation}
For any $a$ satisfying \eqref{eq:a_LB_UB}, and with a slight abuse of notation, we denote the set of GDoF tuples satisfying \eqref{eq:GDoF_common_2_cell_SLS_simplified} as $\mathcal{D}^{\mathrm{Mul}} (a)$.
This leads to an inner bound for $\mathcal{D}^{\mathrm{SLS}} $ in \eqref{eq:GDoF_region_SLS_2_cell}, given by 
\begin{align}
\nonumber
\mathcal{D}^{\mathrm{SLS}}   &  \supseteq  \bigcup_{a} \bigcup_{\mathbf{r} \in \mathcal{R}(a)} \mathcal{D}^{\mathrm{Mul}}(a) \oplus \mathcal{D}^{\mathrm{TIN}}(\mathbf{r}) \\
\label{eq:GDoF_inner_SLS_2_cell} 
& =  \bigcup_{a}  \left(  \mathcal{D}^{\mathrm{Mul}}(a) \oplus  \bigcup_{\mathbf{r} \in \mathcal{R}(a)  }   \mathcal{D}^{\mathrm{TIN}}(\mathbf{r}) \right)
\end{align}
where $\mathcal{R}(a) $ is the set of power control tuples $\mathbf{r}$ satisfying \eqref{eq:power_control_order} for fixed $a$, while the union with respect to $a$ is taken over the interval specified in \eqref{eq:a_LB_UB}.
The equality in \eqref{eq:GDoF_inner_SLS_2_cell} holds due to the fact that $(\mathcal{A} \oplus \mathcal{B}_{1}) \cup (\mathcal{A} \oplus \mathcal{B}_{2}) = \mathcal{A} \oplus (\mathcal{B}_{1} \cup \mathcal{B}_{2})$.
Next, we turn to obtaining a simplified inner bound for the TIN region $\cup_{\mathbf{r} \in \mathcal{R}(a)  }   \mathcal{D}^{\mathrm{TIN}}(\mathbf{r})$, for any fixed $a$ in the designated interval \eqref{eq:a_LB_UB}.

The TIN region $\mathcal{D}^{\mathrm{TIN}} (\mathbf{r})$
includes a smaller region known as the polyhedral-TIN region, denoted by $\mathcal{D}^{\mathrm{PTIN}} (\mathbf{r})$, obtained by relaxing the $(\cdot)^{+}$ operation on the right-hand-side of \eqref{eq:GDoF_TIN_2_cell_SLS}, hence posing further restrictions on admissible power control tuples---see, e.g., \cite{Geng2015,Joudeh2019b}.
Under the order in \eqref{eq:power_control_order},  $\mathcal{D}^{\mathrm{PTIN}} (\mathbf{r})$
is  described by the set of GDoF tuples  $\mathbf{d}_{\mathrm{s}} \in \mathbb{R}_{+}^{2K}$ with components satisfying
\begin{align}
d_{\mathrm{s},i}^{[l_i]}   & \leq  \min_{s_i \geq l_i }
\left\{ \alpha_{ii}^{[s_i]} + r_{i}^{[l_i]} - 
\max\left\{ \alpha_{ii}^{[s_i]}  + \max_{l_i' > l_i} \big\{r_{i}^{[l_i']} \big\}  , \alpha_{ij}^{[s_i]}  + \max_{l_j \in \langle L \rangle} \big\{r_{j}^{[l_j]} \big\} , 0 \right \} \right\} \\
\label{eq:GDoF_TIN_2_cell_SLS_1}
& = \min_{s_i \geq l_i }
\left\{ r_{i}^{[l_i]}   - \max_{l_i' > l_i} \big\{r_{i}^{[l_i']} \big\}  , \alpha_{ii}^{[s_i]} - \alpha_{ij}^{[s_i]}  + r_{i}^{[l_i]}  - \max_{l_j \in \langle L \rangle} \big\{r_{j}^{[l_j]} \big\} , \alpha_{ii}^{[s_i]} + r_{i}^{[l_i]}  \right \} \\
\label{eq:GDoF_TIN_2_cell_SLS_2}
& = \min_{s_i \geq l_i } \left\{ r_{i}^{[l_i]}   - r_{i}^{[l_i + 1]}  , \alpha_{ii}^{[s_i]} - \alpha_{ij}^{[s_i]}  + r_{i}^{[l_i]}  - r_{j}^{[1]} , \alpha_{ii}^{[s_i]} + r_{i}^{[l_i]}  \right \}  \\
\label{eq:GDoF_TIN_2_cell_SLS_3}
& = \min \left\{ r_{i}^{[l_i]}   - r_{i}^{[l_i + 1]}  , \alpha_{ii}^{[l_i]} - \alpha_{ij}^{[l_i]}  + r_{i}^{[l_i]}  - r_{j}^{[1]} , \alpha_{ii}^{[l_i]} + r_{i}^{[l_i]}  \right \}
\end{align}
where in the above, we set $r_{i}^{[L+1]} = - \infty$.
Note that \eqref{eq:GDoF_TIN_2_cell_SLS_2} is obtained from \eqref{eq:GDoF_TIN_2_cell_SLS_1} by invoking the power allocation order in \eqref{eq:power_control_order}, while the equality in \eqref{eq:GDoF_TIN_2_cell_SLS_3} holds due to the SIR order in \eqref{eq:SIR_order} and the SNR order in \eqref{eq:strength_order}, which imply  
$\alpha_{ii}^{[l_i]} - \alpha_{ij}^{[l_i]}  \leq \alpha_{ii}^{[s_i]} - \alpha_{ij}^{[s_i]} $ and $\alpha_{ii}^{[l_i]} \leq \alpha_{ii}^{[s_i]}$, for all 
$s_i \in \langle l_i: L \rangle$.
For a given $a$ satisfying \eqref{eq:a_LB_UB}, and with another slight abuse of notation, we use 
$\mathcal{D}^{\mathrm{PTIN}} (a)$ to denote $\cup_{\mathbf{r} \in \mathcal{R}(a)  }   \mathcal{D}^{\mathrm{PTIN}}(\mathbf{r})$.
As it turns out, the region $\mathcal{D}^{\mathrm{PTIN}} (a)$
lends itself to an efficient FM elimination procedure using the potential graph approach \cite{Geng2015}, yielding the following result.
\begin{lemma}
\label{lemma:PTIN_region_2_cell}
For any $0 \leq a \leq \max_{i,j,l_i: i\neq j} \big\{   \alpha_{ij}^{[l_i]} \big\}$, 
the polyehdral TIN region $\mathcal{D}^{\mathrm{PTIN}} (a)$ is equal to the region described by all tuples
$\mathbf{d}_{\mathrm{s}} \in \mathbb{R}_{+}^{2K}$ with components satisfying
\begin{align}
\label{eq:PTIN_2_cell_1}
\bar{d}_{\mathrm{s},i}^{[l_i]} & \leq \alpha_{ii}^{[l_i]}  -  a \\ 
\label{eq:PTIN_2_cell_2}
\bar{d}_{\mathrm{s},i}^{[l_i]} + \bar{d}_{\mathrm{s},j}^{[l_j]}  & \leq \alpha_{ii}^{[l_i]} + 
\alpha_{jj}^{[l_j]}  - \max\left\{\alpha_{ij}^{[l_i]}+a, \alpha_{ji}^{[l_j]} +a, 2a , \alpha_{ij}^{[l_i]} + \alpha_{ji}^{[l_j]}   \right\}
\end{align}
for all $i,j \in \langle 2 \rangle$, $i \neq j$, and $l_i, l_j \in \langle L \rangle$.
\end{lemma}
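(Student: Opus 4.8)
The plan is to identify $\mathcal{D}^{\mathrm{PTIN}}(a)$ with the projection onto $\mathbf{d}_{\mathrm{s}}$-space of the polyhedron (in the joint $(\mathbf{d}_{\mathrm{s}},\mathbf{r})$-space) cut out by \eqref{eq:GDoF_TIN_2_cell_SLS_3} together with the power-floor $r_i^{[1]}\le -a$, and to carry out this projection using the potential-graph method of \cite{Geng2015}. Fix $a$ in the range \eqref{eq:a_LB_UB} and $\mathbf{d}_{\mathrm{s}}\in\mathbb{R}_{+}^{2K}$. Unpacking \eqref{eq:GDoF_TIN_2_cell_SLS_3}, we have $\mathbf{d}_{\mathrm{s}}\in\mathcal{D}^{\mathrm{PTIN}}(a)$ iff there exist reals $r_i^{[l_i]}$ with $r_i^{[l_i+1]}-r_i^{[l_i]}\le -d_{\mathrm{s},i}^{[l_i]}$ for $l_i\in\langle L-1\rangle$ (the $l_i=L$ case being vacuous), $r_j^{[1]}-r_i^{[l_i]}\le \alpha_{ii}^{[l_i]}-\alpha_{ij}^{[l_i]}-d_{\mathrm{s},i}^{[l_i]}$ and $-r_i^{[l_i]}\le\alpha_{ii}^{[l_i]}-d_{\mathrm{s},i}^{[l_i]}$ for all $l_i\in\langle L\rangle$ and $i,j\in\langle 2\rangle$ with $i\ne j$, and $r_i^{[1]}\le -a$ for $i\in\langle 2\rangle$; the rest of the ordering in \eqref{eq:power_control_order} is then automatic from $\mathbf{d}_{\mathrm{s}}\ge\mathbf{0}$ and the first family, and so need not be imposed. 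After adjoining an auxiliary vertex pinned to the value $0$, each of these inequalities is a difference constraint, so by the standard feasibility criterion the system admits a solution iff the weighted digraph $G(\mathbf{d}_{\mathrm{s}},a)$ on the vertex set $\{0\}\cup\mathcal{U}$ --- with an arc of weight $-d_{\mathrm{s},i}^{[l_i]}$ from $(l_i,i)$ to $(l_i+1,i)$, an arc of weight $\alpha_{ii}^{[l_i]}-\alpha_{ij}^{[l_i]}-d_{\mathrm{s},i}^{[l_i]}$ from $(l_i,i)$ to $(1,j)$, an arc of weight $\alpha_{ii}^{[l_i]}-d_{\mathrm{s},i}^{[l_i]}$ from $(l_i,i)$ to $0$, and an arc of weight $-a$ from $0$ to $(1,i)$ --- has no directed cycle of negative total weight. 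Hence $\mathcal{D}^{\mathrm{PTIN}}(a)$ is precisely the set of $\mathbf{d}_{\mathrm{s}}\ge\mathbf{0}$ for which every cycle of $G(\mathbf{d}_{\mathrm{s}},a)$ has non-negative weight.

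The remaining work is to enumerate the simple directed cycles of $G(\mathbf{d}_{\mathrm{s}},a)$. The arcs leaving a cell-vertex $(l_i,i)$ either descend within cell $i$, jump to the other cell's top vertex $(1,j)$, or go to $0$; the only arcs into $0$ come from cell-vertices, the only arcs out of $0$ land on $(1,1)$ or $(1,2)$; and a cell can be re-entered only at its top vertex, which a simple cycle cannot revisit. A short case split --- on whether the cycle uses $0$, and, if so, on whether it traverses one or two cells before returning --- shows that every simple cycle is, up to swapping $i\leftrightarrow j$, of one of four shapes: $0\to(1,i)\to\cdots\to(l_i,i)\to 0$; $0\to(1,i)\to\cdots\to(l_i,i)\to(1,j)\to\cdots\to(l_j,j)\to 0$; its mirror image; and $(1,i)\to\cdots\to(l_i,i)\to(1,j)\to\cdots\to(l_j,j)\to(1,i)$, with $l_i,l_j$ arbitrary in $\langle L\rangle$. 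Summing arc weights and collapsing the descending runs into the partial sums $\bar{d}_{\mathrm{s},i}^{[l_i]}$, non-negativity of these four cycle families reads, respectively, $\bar{d}_{\mathrm{s},i}^{[l_i]}\le\alpha_{ii}^{[l_i]}-a$; $\bar{d}_{\mathrm{s},i}^{[l_i]}+\bar{d}_{\mathrm{s},j}^{[l_j]}\le\alpha_{ii}^{[l_i]}+\alpha_{jj}^{[l_j]}-\alpha_{ij}^{[l_i]}-a$; the same with $\alpha_{ji}^{[l_j]}+a$ in place of $\alpha_{ij}^{[l_i]}+a$; and $\bar{d}_{\mathrm{s},i}^{[l_i]}+\bar{d}_{\mathrm{s},j}^{[l_j]}\le\alpha_{ii}^{[l_i]}+\alpha_{jj}^{[l_j]}-\alpha_{ij}^{[l_i]}-\alpha_{ji}^{[l_j]}$. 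These are exactly \eqref{eq:PTIN_2_cell_1} and the three nontrivial branches of the maximum in \eqref{eq:PTIN_2_cell_2}; and summing the cell-$i$ and cell-$j$ instances of the first one reproduces the fourth, $2a$-branch of that maximum, so its presence in \eqref{eq:PTIN_2_cell_2} is harmless. This identifies $\mathcal{D}^{\mathrm{PTIN}}(a)$ with the region in the statement.

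I expect the delicate point to be the completeness of the cycle enumeration --- in particular the fact that a simple cycle cannot re-enter a cell it has already left, which is what forbids longer alternating walks and keeps the list to four shapes --- together with the arc-weight bookkeeping needed to match those four shapes to the four branches of the maximum in \eqref{eq:PTIN_2_cell_2}. One must also use correctly the $(\cdot)^{+}$-relaxed form \eqref{eq:GDoF_TIN_2_cell_SLS_3} and the range \eqref{eq:a_LB_UB} of $a$ (the upper bound $a\le\max_{i,j,l_i:\,i\ne j}\alpha_{ij}^{[l_i]}$ being what keeps the single-cell bound $\alpha_{ii}^{[l_i]}-a$ meaningful). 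A more pedestrian alternative is plain Fourier--Motzkin elimination of the $r_i^{[l_i]}$ one variable at a time, but the potential-graph formulation makes the combinatorics, and hence the match with \eqref{eq:PTIN_2_cell_2}, much more transparent.
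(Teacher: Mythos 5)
Your proposal is correct and follows essentially the same route as the paper's Appendix A: it builds the same potential graph on $\{0\}\cup\mathcal{U}$ with the same arc weights, invokes the no-negative-cycle feasibility criterion for difference constraints (the potential theorem of \cite[Th. 8.2]{Schrijver2002} cited in the paper), and enumerates the same circuit families to recover \eqref{eq:PTIN_2_cell_1} and the branches of \eqref{eq:PTIN_2_cell_2}. Your observation that the $2a$ branch arises from concatenating two single-cell circuits through the ground vertex (equivalently, is implied by summing two instances of \eqref{eq:PTIN_2_cell_1}) matches the paper's third multi-cell circuit, so nothing is missing.
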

The potential graph approach, used to prove  Lemma \ref{lemma:PTIN_region_2_cell}, constructs a directed graph and maps power control variables to its vertices and GDoF inequalities to its edges.  
A potential theorem \cite[Th. 8.2]{Schrijver2002} from combinatorial optimization is then invoked to derive equivalent GDoF inequalities from the directed graph, which do not include power control variables. This approach essentially exploits the structure of polyhedral-TIN  GDoF regions to carry out FM elimination of power control variables in an efficient manner. In proving Lemma \ref{lemma:PTIN_region_2_cell}, we extend the original approach in \cite{Geng2015} to deal with the region $\mathcal{D}^{\mathrm{PTIN}} (a)$. A detailed proof of Lemma \ref{lemma:PTIN_region_2_cell} is relegated to Appendix \ref{appendix:PTIN_region_2_cell}.

After the elimination of $\mathbf{r}$ in Lemma \ref{lemma:PTIN_region_2_cell}, we now have an inner 
bound for $\mathcal{D}^{\mathrm{SLS}} $ given by
\begin{align}
\label{eq:GDoF_inner_SLS_2_cell_2} 
\mathcal{D}^{\mathrm{SLS}}    \supseteq  \bigcup_{a}  \mathcal{D}^{\mathrm{Mul}}(a) \oplus   \mathcal{D}^{\mathrm{PTIN}}(a).
\end{align}
This inner bound does not depend on $\mathbf{r}$, yet it is still characterized in terms of $a$, $\mathbf{d}_{\mathrm{c}}$ and $\mathbf{d}_{\mathrm{s}}$.

Next, we eliminate the auxiliary GDoF tuples $\mathbf{d}_{\mathrm{c}}$ and $\mathbf{d}_{\mathrm{s}}$ by  characterizing the Minkowski sum $\mathcal{D}^{\mathrm{Mul}}(a) \oplus  \mathcal{D}^{\mathrm{PTIN}}(a)$ in terms of inequalities that bound $\mathbf{d} = \mathbf{d}_{\mathrm{c}} + \mathbf{d}_{\mathrm{s}} $, for any fixed $a$.
To this end, it is easier to work with an inner bound for $\mathcal{D}^{\mathrm{PTIN}}(a)$, denoted by $\mathcal{D}^{\mathrm{PTIN}'}(a)$, given by
\begin{align}
\label{eq:PTIN_2_cell_2_1}
\bar{d}_{\mathrm{s},i}^{[l_i]} & \leq \alpha_{ii}^{[l_i]}  -  \max\left\{\alpha_{ij}^{[l_i]}, a \right\} \\ 
\label{eq:PTIN_2_cell_2_2}
\bar{d}_{\mathrm{s},i}^{[l_i]} + \bar{d}_{\mathrm{s},j}^{[l_j]}  & \leq \alpha_{ii}^{[l_i]} + 
\alpha_{jj}^{[l_j]}  - \max\left\{\alpha_{ij}^{[l_i]}+a, \alpha_{ji}^{[l_j]} +a, 2a , \alpha_{ij}^{[l_i]} + 
\alpha_{ji}^{[l_j]}   \right\}.
\end{align}
Note that $\mathcal{D}^{\mathrm{PTIN}}(a)$ and  $\mathcal{D}^{\mathrm{PTIN}'}(a)$ are almost identical, with the exception that \eqref{eq:PTIN_2_cell_2_1} in the latter is tighter than the corresponding inequality  \eqref{eq:PTIN_2_cell_1} in the former.
The rationale behind this tightening will become clear further on.
We now observe that $\mathcal{D}^{\mathrm{Mul}}(a)$ may be described  as
\begin{align}
\label{eq:common_2_cell_2_1}
\bar{d}_{\mathrm{c},i}^{[l_i]} & \leq a\\
\label{eq:common_2_cell_2_2}
\bar{d}_{\mathrm{c},i}^{[l_i]} + \bar{d}_{\mathrm{c},j}^{[l_j]}  & \leq a
\end{align}
for all $i,j \in \langle 2 \rangle$, $i \neq j$, and $l_i, l_j \in \langle L \rangle$.
This is obtained directly from \eqref{eq:GDoF_common_2_cell_SLS_simplified} by including redundant bounds 
so that the linear inequalities describing $\mathcal{D}^{\mathrm{Mul}}(a)$ in \eqref{eq:common_2_cell_2_1} and \eqref{eq:common_2_cell_2_2} are of the same type as those used to describe $\mathcal{D}^{\mathrm{PTIN}'}(a)$ in \eqref{eq:PTIN_2_cell_2_1} and \eqref{eq:PTIN_2_cell_2_2}.
This leads us to the following result.
\begin{lemma}
\label{lemma:Minkowski_sum}
For any $0 \leq a \leq \max_{i,j,l_i: i\neq j} \big\{   \alpha_{ij}^{[l_i]} \big\}$, the Minkowski sum $\mathcal{D}^{\mathrm{Mul}}(a) \oplus \mathcal{D}^{\mathrm{PTIN}'}(a)$ is characterized by all tuples
$\mathbf{d} \in \mathbb{R}_{+}^{2K}$ with components satisfying
\begin{align}
\label{eq:Minkowski_sum_1}
\bar{d}_{i}^{[l_i]} & \leq \alpha_{ii}^{[l_i]}  -  \max \left\{\alpha_{ij}^{[l_i]} - a, 0 \right\} \\ 
\label{eq:Minkowski_sum_2}
\bar{d}_{i}^{[l_i]} + \bar{d}_{j}^{[l_j]}  & \leq \alpha_{ii}^{[l_i]} + 
\alpha_{jj}^{[l_j]}  - \max\left\{\alpha_{ij}^{[l_i]}, \alpha_{ji}^{[l_j]}, a , \alpha_{ij}^{[l_i]} + \alpha_{ji}^{[l_j]} - a   \right\}
\end{align}
for all $i,j \in \langle 2 \rangle$, $i \neq j$, and $l_i, l_j \in \langle L \rangle$.
\end{lemma}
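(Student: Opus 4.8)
The plan is to recognize both $\mathcal{D}^{\mathrm{Mul}}(a)$ and $\mathcal{D}^{\mathrm{PTIN}'}(a)$ as polymatroids on the ground set $\mathcal{U}_1 \cup \mathcal{U}_2$ and then invoke the Minkowski-sum-of-polymatroids identity \cite[Th. 44.6]{Schrijver2002}, i.e. that the Minkowski sum of two polymatroids is the polymatroid whose rank function is the sum of the two rank functions. The easy half is $\mathcal{D}^{\mathrm{Mul}}(a)$: using $\mathbf{d}_{\mathrm{c}} \geq \mathbf{0}$, the inequalities \eqref{eq:common_2_cell_2_1}--\eqref{eq:common_2_cell_2_2} all follow from the single inequality $\sum_{(l_i,i)\in\mathcal{U}} d_{\mathrm{c},i}^{[l_i]} \leq a$, so $\mathcal{D}^{\mathrm{Mul}}(a)$ is a scaled simplex, namely the polymatroid with rank function $r_{\mathrm{Mul}}(\mathcal{S}) = a$ for every nonempty $\mathcal{S}$ and $r_{\mathrm{Mul}}(\emptyset) = 0$; submodularity, monotonicity and nonnegativity of this constant rank function are immediate.

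Next I would cast $\mathcal{D}^{\mathrm{PTIN}'}(a)$ as a polymatroid. Since each cell is internally SNR-ordered by \eqref{eq:strength_order}, the sets appearing in \eqref{eq:PTIN_2_cell_2_1}--\eqref{eq:PTIN_2_cell_2_2} are the ``staircase'' prefixes $P_{l_1,l_2} = \{(s_1,1): s_1 \leq l_1\} \cup \{(s_2,2): s_2 \leq l_2\}$ with $0 \leq l_1,l_2 \leq L$; these form a lattice family closed under union and intersection ($P_{l_1,l_2}\cup P_{l_1',l_2'} = P_{\max(l_1,l_1'),\max(l_2,l_2')}$, and likewise with $\min$), and because $\mathbf{d}_{\mathrm{s}}\geq\mathbf{0}$ the polytope \eqref{eq:PTIN_2_cell_2_1}--\eqref{eq:PTIN_2_cell_2_2} is unchanged if its defining function $f$ (read off the right-hand sides: $f(P_{l_1,0}) = \alpha_{11}^{[l_1]} - \max\{\alpha_{12}^{[l_1]},a\}$, and so on) is extended to all subsets by $f(\mathcal{T}) := \min\{f(P_{l_1,l_2}): P_{l_1,l_2}\supseteq\mathcal{T}\}$. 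It then remains to check that $f$ is nonnegative, non-decreasing and submodular, so that $\mathcal{D}^{\mathrm{PTIN}'}(a)$ is exactly the polymatroid with rank function $f$. Nonnegativity of the single-cell values $\alpha_{ii}^{[l_i]}-\max\{\alpha_{ij}^{[l_i]},a\}$ and of the pairwise values follows from $\alpha_{ii}^{[l_i]}\geq\max\{\alpha_{ij}^{[l_i]},a\}$ and the third inequality in \eqref{eq:SLS_cond_2}; monotonicity of the single-cell values follows because, by \eqref{eq:strength_order} and \eqref{eq:SIR_order}, $\alpha_{ii}^{[l_i]}-\max\{\alpha_{ij}^{[l_i]},a\} = \min\{\alpha_{ii}^{[l_i]}-\alpha_{ij}^{[l_i]},\, \alpha_{ii}^{[l_i]}-a\}$ is a minimum of two non-decreasing sequences, and one checks analogously that each pairwise value dominates the single-cell values it refines. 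Submodularity is the crux: the nested case of $f(P_{l_1,l_2}) + f(P_{l_1',l_2'}) \geq f(P_{\max(l_1,l_1'),\max(l_2,l_2')}) + f(P_{\min(l_1,l_1'),\min(l_2,l_2')})$ is trivial, and in the cross case ($l_1<l_1'$, $l_2>l_2'$) the direct-link terms $\alpha_{ii}^{[\cdot]}$ and $\alpha_{jj}^{[\cdot]}$ cancel, reducing the claim to supermodularity of $(l_1,l_2)\mapsto \max\{\alpha_{12}^{[l_1]}+a,\, \alpha_{21}^{[l_2]}+a,\, 2a,\, \alpha_{12}^{[l_1]}+\alpha_{21}^{[l_2]}\}$ (with the single-cell form substituted when $l_2'=0$). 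This I would verify by a short case analysis on which of the four terms attains the maximum at each of the four corners, invoking the SIR order \eqref{eq:SIR_order} together with \eqref{eq:SLS_cond_1} to control the increments $\alpha_{ij}^{[l_i]}-\alpha_{ij}^{[l_i-1]}$ against $\alpha_{ii}^{[l_i]}-\alpha_{ii}^{[l_i-1]}$.

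With both regions identified as polymatroids, \cite[Th. 44.6]{Schrijver2002} gives $\mathcal{D}^{\mathrm{Mul}}(a)\oplus\mathcal{D}^{\mathrm{PTIN}'}(a)$ as the polymatroid with rank function $r_{\mathrm{Mul}}+f$, whose value on a nonempty prefix is $a + f(P_{l_1,l_2})$. Distributing the added $a$ into the maxima via $a - \max\{\cdots\} = -\max\{\cdots - a\}$ turns $a + (\alpha_{11}^{[l_1]}-\max\{\alpha_{12}^{[l_1]},a\})$ into $\alpha_{11}^{[l_1]} - \max\{\alpha_{12}^{[l_1]}-a,\,0\}$, and $a + (\alpha_{11}^{[l_1]}+\alpha_{22}^{[l_2]} - \max\{\alpha_{12}^{[l_1]}+a,\alpha_{21}^{[l_2]}+a,2a,\alpha_{12}^{[l_1]}+\alpha_{21}^{[l_2]}\})$ into $\alpha_{11}^{[l_1]}+\alpha_{22}^{[l_2]} - \max\{\alpha_{12}^{[l_1]},\alpha_{21}^{[l_2]},a,\alpha_{12}^{[l_1]}+\alpha_{21}^{[l_2]}-a\}$; since, again by $\mathbf{d}\geq\mathbf{0}$, only the prefix constraints of this polymatroid are binding, the resulting description is precisely the system \eqref{eq:Minkowski_sum_1}--\eqref{eq:Minkowski_sum_2}, which finishes the proof.

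The one genuinely technical step is the submodularity check for $\mathcal{D}^{\mathrm{PTIN}'}(a)$, and this is exactly why \eqref{eq:PTIN_2_cell_1} was tightened to \eqref{eq:PTIN_2_cell_2_1} (the ``rationale'' promised just before the statement): replacing $a$ by $\max\{\alpha_{ij}^{[l_i]},a\}$ is what makes the single-cell and pairwise facets mesh into a submodular system, and it is also what makes the $\max$-terms recombine cleanly once $a$ is added back in the last step. A secondary nuisance is keeping straight the boundary prefixes $l_2=0$ (and $l_1=0$), where $f$ takes its single-cell form, throughout both the monotonicity and the submodularity verifications.
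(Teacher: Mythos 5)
Your proposal is correct and follows essentially the same route as the paper's Appendix~B proof: identify $\mathcal{D}^{\mathrm{Mul}}(a)$ as a scaled simplex and $\mathcal{D}^{\mathrm{PTIN}'}(a)$ as a polymatroid whose rank function depends only on the maximal user index in each cell, invoke \cite[Th.~44.6]{Schrijver2002}, and fold the added $a$ back into the $\max$ terms. The only cosmetic difference is in the submodularity check: the paper writes the rank function as $\min_{m_1,m_2}\{\delta_1^{m_1}(s_1)+\delta_2^{m_2}(s_2)\}$ and uses an exchange argument, whereas your case analysis on the $\max$ amounts to the same observation (the interference term separates as $\max\{\alpha_{12}^{[l_1]},a\}+\max\{\alpha_{21}^{[l_2]},a\}$, so the function is modular on nonempty prefixes).
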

It is perhaps clear from Lemma \ref{lemma:Minkowski_sum} that the linear inequalities that describe  $\mathcal{D}^{\mathrm{Mul}}(a) \oplus \mathcal{D}^{\mathrm{PTIN}'}(a)$ are simply the direct sums of the corresponding inequalities describing the constituent polyhedra $\mathcal{D}^{\mathrm{Mul}}(a)$ and  
$\mathcal{D}^{\mathrm{PTIN}'}(a)$. That is, \eqref{eq:Minkowski_sum_1} is obtained by adding \eqref{eq:PTIN_2_cell_2_1} and \eqref{eq:common_2_cell_2_1}; and \eqref{eq:Minkowski_sum_2} is obtained by adding  \eqref{eq:PTIN_2_cell_2_2} and \eqref{eq:common_2_cell_2_2}.
Since summing inequalities loosens them in general, one can directly conclude that 
$\mathcal{D}^{\mathrm{Mul}}(a) \oplus \mathcal{D}^{\mathrm{PTIN}'}(a)$ is included in, yet not necessarily equal to, the region described by \eqref{eq:Minkowski_sum_1} and \eqref{eq:Minkowski_sum_2} in Lemma \ref{lemma:Minkowski_sum}.
Remarkably, it turns out that in this special case $\mathcal{D}^{\mathrm{Mul}}(a) \oplus \mathcal{D}^{\mathrm{PTIN}'}(a)$ is equal to the polyhedron described by \eqref{eq:Minkowski_sum_1} and \eqref{eq:Minkowski_sum_2}. 
This holds since both $\mathcal{D}^{\mathrm{Mul}}(a)$ and $\mathcal{D}^{\mathrm{PTIN}'}(a)$ are polymatroids (see Appendix \ref{appendix:Minkowski_sum}); and polymatroids have the nice property that their Minkowski sums are given by the direct sums of their corresponding linear inequalities \cite[Th. 44.6]{Schrijver2002}  (see also \cite[Th. 3]{McDiarmid1975}).
This point, and the proof of Lemma \ref{lemma:Minkowski_sum}, are discussed in detail in Appendix \ref{appendix:Minkowski_sum}.

In light of the above, going from $\mathcal{D}^{\mathrm{PTIN}}(a)$ to the smaller region $\mathcal{D}^{\mathrm{PTIN}'}(a)$ can now be explained. 
In particular, this step guarantees that the set function associated with the polyhedron in \eqref{eq:PTIN_2_cell_2_1} and \eqref{eq:PTIN_2_cell_2_2}, defined over the ground set $\mathcal{U}$, is non-decreasing. 
This monotonicity, alongside submodularity, imply that the region $\mathcal{D}^{\mathrm{PTIN}'}(a)$ is a polymatroid, as shown in Appendix \ref{appendix:Minkowski_sum}.

Building upon the result in Lemma \ref{lemma:Minkowski_sum}, it follows that the inner bound given by
\begin{align}
\label{eq:inner_region_a_0}
 \bigcup_{a}  \mathcal{D}^{\mathrm{Mul}}(a) \oplus   \mathcal{D}^{\mathrm{PTIN}'}(a) 
\end{align}
is described by all tuples $\mathbf{d} \in \mathbb{R}_{+}^{2K}$  that satisfy
\begin{align}
\label{eq:inner_region_a_1}
\bar{d}_{i}^{[l_i]} & \leq \alpha_{ii}^{[l_i]}  -  \max \left\{\alpha_{ij}^{[l_i]} - a, 0 \right\} \\ 
\label{eq:inner_region_a_2}
\bar{d}_{i}^{[l_i]} + \bar{d}_{j}^{[l_j]}  & \leq \alpha_{ii}^{[l_i]} + 
\alpha_{jj}^{[l_j]}  - \max\left\{\alpha_{ij}^{[l_i]}, \alpha_{ji}^{[l_j]}, a , \alpha_{ij}^{[l_i]} + \alpha_{ji}^{[l_j]} - a   \right\} 
\\
\label{eq:inner_region_a_3}
0 \leq a & \leq \max_{i,j,l_i: i\neq j} \alpha_{ij}^{[l_i]}.
\end{align}
for all $i,j \in \langle 2 \rangle$, $i \neq j$, and $l_i, l_j \in \langle L \rangle$.
Now it remains to eliminate the last auxiliary variable $a$.
This is accomplished by a standard application of the FM procedure \cite[Appendix D]{ElGamal2011}.

To this end, we classify the inequalities in \eqref{eq:inner_region_a_1}--\eqref{eq:inner_region_a_3} with respect to the presence and sign of the variable $a$ on the right-hand-side. We have the three following classes:
\begin{itemize}
\item Inequalities with no $a$:
\begin{align}
\label{eq:FM_elim_1_1}
\bar{d}_{i}^{[l_i]} & \leq \alpha_{ii}^{[l_i]}  \\ 
\label{eq:FM_elim_1_2}
\bar{d}_{i}^{[l_i]} + \bar{d}_{j}^{[l_j]}  & \leq \alpha_{ii}^{[l_i]} + 
\alpha_{jj}^{[l_j]}  - \max\left\{\alpha_{ij}^{[l_i]}, \alpha_{ji}^{[l_j]}  \right\}.
\end{align}
\item Inequalities with $+a$:
\begin{align}
\label{eq:FM_elim_2_1}
\bar{d}_{i}^{[l_i]} & \leq \alpha_{ii}^{[l_i]}  - \alpha_{ij}^{[l_i]} + a \\ 
\label{eq:FM_elim_2_2}
\bar{d}_{i}^{[l_i]} + \bar{d}_{j}^{[l_j]}  & \leq \alpha_{ii}^{[l_i]} + 
\alpha_{jj}^{[l_j]}  - \left( \alpha_{ij}^{[l_i]} + \alpha_{ji}^{[l_j]}  \right) + a   \\
\label{eq:FM_elim_2_3}
0 & \leq a.
\end{align}
\item Inequalities with $-a$:
\begin{align}
\label{eq:FM_elim_3_1}
\bar{d}_{i}^{[l_i]} + \bar{d}_{j}^{[l_j]}  & \leq \alpha_{ii}^{[l_i]} + 
\alpha_{jj}^{[l_j]}  -  a   \\
\label{eq:FM_elim_3_2}
0 & \leq \max_{i,j,l_i: i\neq j} \alpha_{ij}^{[l_i]} - a.
\end{align}
\end{itemize}
To eliminate $a$, we add each inequality with $+a$ to every inequality with $-a$.
We start by adding an arbitrary inequality from \eqref{eq:FM_elim_3_1}, given by 
$\bar{d}_{i}^{[l_i']} + \bar{d}_{j}^{[l_j']}   \leq \alpha_{ii}^{[l_i']} + 
\alpha_{jj}^{[l_j']}  -  a $, to all inequalities in \eqref{eq:FM_elim_2_1}--\eqref{eq:FM_elim_2_3}.
We obtain the following set of inequalities 
\begin{align}
\label{eq:FM_elim_21_1}
\bar{d}_{i}^{[l_i']} + \bar{d}_{j}^{[l_j']}  + \bar{d}_{i}^{[l_i]} & \leq \alpha_{ii}^{[l_i']} + 
\alpha_{jj}^{[l_j']}  + \alpha_{ii}^{[l_i]}  - \alpha_{ij}^{[l_i]} \\ 
\label{eq:FM_elim_21_2}
\bar{d}_{i}^{[l_i']} + \bar{d}_{j}^{[l_j']}  + \bar{d}_{i}^{[l_i]} + \bar{d}_{j}^{[l_j]}  & \leq \alpha_{ii}^{[l_i']} + 
\alpha_{jj}^{[l_j']}  + \alpha_{ii}^{[l_i]} + 
\alpha_{jj}^{[l_j]}  - \left( \alpha_{ij}^{[l_i]} + \alpha_{ji}^{[l_j]}  \right)   \\
\label{eq:FM_elim_21_3}
\bar{d}_{i}^{[l_i']} + \bar{d}_{j}^{[l_j']}  & \leq \alpha_{ii}^{[l_i']} + 
\alpha_{jj}^{[l_j']}.
\end{align}
Next, we show that these resulting inequalities in \eqref{eq:FM_elim_21_1}--\eqref{eq:FM_elim_21_3} are all redundant, as they are implied by the set of inequalities in \eqref{eq:FM_elim_1_1} and \eqref{eq:FM_elim_1_2}.
In particular, from  \eqref{eq:FM_elim_1_1} and \eqref{eq:FM_elim_1_2}, we obtain
\begin{align}
\label{eq:FM_elim_11_1}
\bar{d}_{i}^{[l_i']} & \leq \alpha_{ii}^{[l_i']}  \\ 
\label{eq:FM_elim_11_2}
\bar{d}_{j}^{[l_j']} & \leq \alpha_{ii}^{[l_j']}  \\ 
\label{eq:FM_elim_11_3}
\bar{d}_{i}^{[l_i]} + \bar{d}_{j}^{[l_j']}   & \leq \alpha_{ii}^{[l_i]} + 
\alpha_{jj}^{[l_j']}  - \alpha_{ij}^{[l_i]} \\
\label{eq:FM_elim_11_4}
\bar{d}_{i}^{[l_i']} + \bar{d}_{j}^{[l_j]}   & \leq \alpha_{ii}^{[l_i']} + 
\alpha_{jj}^{[l_j]}  - \alpha_{ji}^{[l_j]}
\end{align}
from which it is clear that \eqref{eq:FM_elim_21_1} is implied by \eqref{eq:FM_elim_11_1} and \eqref{eq:FM_elim_11_3}, i.e. by summing them; \eqref{eq:FM_elim_21_2} is implied by \eqref{eq:FM_elim_11_3} and \eqref{eq:FM_elim_11_4};
and \eqref{eq:FM_elim_21_3} is implied by \eqref{eq:FM_elim_11_1} and \eqref{eq:FM_elim_11_2}.

We move on to the next step of the FM elimination, where we add the inequality in \eqref{eq:FM_elim_3_2} to all inequalities in \eqref{eq:FM_elim_2_1}--\eqref{eq:FM_elim_2_3}, from which we obtain
\begin{align}
\label{eq:FM_elim_22_1}
\bar{d}_{i}^{[l_i]} & \leq \alpha_{ii}^{[l_i]}  - \alpha_{ij}^{[l_i]} + \max_{i,j,l_i: i\neq j} \alpha_{ij}^{[l_i]}  \\ 
\label{eq:FM_elim_22_2}
\bar{d}_{i}^{[l_i]} + \bar{d}_{j}^{[l_j]}  & \leq \alpha_{ii}^{[l_i]} + 
\alpha_{jj}^{[l_j]}  - \left( \alpha_{ij}^{[l_i]} + \alpha_{ji}^{[l_j]}  \right) + \max_{i,j,l_i: i\neq j} \alpha_{ij}^{[l_i]}    \\
\label{eq:FM_elim_22_3}
0 & \leq \max_{i,j,l_i: i\neq j} \alpha_{ij}^{[l_i]}.
\end{align}
It is evident that \eqref{eq:FM_elim_22_1} and  \eqref{eq:FM_elim_22_2} are redundant, as they are implied by \eqref{eq:FM_elim_1_1} and  \eqref{eq:FM_elim_1_2} respectively.  Moreover, \eqref{eq:FM_elim_22_3} holds by definition of channel strength parameters.

From the above, it follows that all inequalities obtained by eliminating $a$ in \eqref{eq:FM_elim_2_1}--\eqref{eq:FM_elim_3_2} are redundant with respect to \eqref{eq:FM_elim_1_1} and \eqref{eq:FM_elim_1_2}.
Therefore, the inner bound  in \eqref{eq:inner_region_a_0} is equal to the region specified by all GDoF tuples $\mathbf{d} \in \mathbb{R}_{+}^{2K}$  that satisfy
\begin{align}
\bar{d}_{i}^{[l_i]} & \leq \alpha_{ii}^{[l_i]}  \\ 
\bar{d}_{i}^{[l_i]} + \bar{d}_{j}^{[l_j]}  & \leq \alpha_{ii}^{[l_i]} + 
\alpha_{jj}^{[l_j]}  - \max\left\{\alpha_{ij}^{[l_i]}, \alpha_{ji}^{[l_j]}  \right\}
\end{align}
for all $i,j \in \langle 2 \rangle$, $i \neq j$, and $l_i, l_j \in \langle L \rangle$.
This exactly matches the $2$-cell outer bound in  \eqref{eq:MISO_outer_bound_2_cell_1}--\eqref{eq:MISO_outer_bound_2_cell_3}, which in turn concludes the proof of Theorem \ref{theorem:2_cell_SLS}.
\section{Extremal Gains of Multi-Cell Cooperation over Multi-Cell TIN}
\label{sec:extremal_gains}
In this section, we present a proof for Theorem \ref{theorem: extremal gain}.
The main idea of the proof is to exploit the redundancy of weaker UEs in each cell of the network, exhibited in the regimes of interest due to the SIR order in \eqref{eq:SIR_order}, which allows us to form a direct relationship between $K \times KL$ networks and $K \times K$  networks. This in turn enables us to utilize previous extremal results in \cite{Chan2020}.

Similar to the sum-GDoF definitions in \eqref{eq:sum_GDoF_TIN} and \eqref{eq:sum_GDoF_MBC}, we define 
a sum-GDoF outer bound for the MISO-BC in the mc-SLS regime from the outer bound region in Theorem \ref{theorem:MISO_BC_outerbound} as 
\begin{equation}
\label{eq:sum_GDoF_outer_MISO_BC}
d^{\mathrm{MBC}}_{\mathrm{out},\Sigma}(\bm{\alpha} )  \triangleq \max_{\mathbf{d} \in \mathcal{D}^{\mathrm{SLS}}_{\mathrm{out}} (\bm{\alpha} )}
\sum_{(l_k,k) \in \mathcal{U}} d_{k}^{[l_k]}.
\end{equation}
Moreover, for an arbitrary network $\bm{\alpha} \in \mathbb{R}^{K \times K \times L}_{+}$, we use $\bm{\alpha}^{[L]} \in \mathbb{R}^{K \times K}_{+}$ to denote the sub-network  obtained by keeping the strongest UE (in the SNR or SIR sense) in each cell, and  eliminating all remaining UEs.  Therefore, the $(i,j)$-th element of $\bm{\alpha}^{[L]}$ is given by  $\alpha^{[L]}_{ij}$.
To avoid confusion, we use $\bm{\alpha}$ exclusively to denote $K \times KL$ networks, while $K \times K$ networks are denoted by $\bm{\alpha}^{[L]}$. 
With a slight abuse of notation, a regime $\mathcal{A} $ is automatically adjusted to the dimensions of the network under consideration, i.e.  $\bm{\alpha}^{[L]} \in \mathcal{A} $  implies $\mathcal{A} \subset \mathbb{R}_{+}^{K \times K}$, while 
$\bm{\alpha}\in \mathcal{A} $ implies $\mathcal{A} \subset \mathbb{R}_{+}^{K \times KL}$. 

We now recall the main results in \cite{Chan2020}, which are analog to Theorem \ref{theorem: extremal gain} but for $K\times K$ networks.
\begin{theorem}
\label{theorem:extremal_gain_IC}
 {\normalfont{\textbf{( \cite[Th. 5.1, 6.1 and 7.1]{Chan2020} )} }}
The extremal GDoF gains of transmitter cooperation over TIN in the three regimes of interest for $K \times K$ networks 
are as follows:
\begin{equation}
\label{eq:extremal_gains_IC}
\max_{\bm{\alpha}^{[L]} \in \mathcal{A}} 
\frac{d_{\Sigma}^{\mathrm{MBC}} (\bm{\alpha}^{[L]} )}
{d_{\Sigma}^{\mathrm{TIN}}(\bm{\alpha}^{[L]} )} = \eta_{K,1}(\mathcal{A})  = 
\begin{cases}
\frac{3}{2}, & \mathcal{A} = \mathcal{A}^{\mathrm{TIN}} \\ 
2 -  \frac{1}{K}, & \mathcal{A} = \mathcal{A}^{\mathrm{CTIN}} \\
\Theta\big(\log(K)\big), & \mathcal{A} = \mathcal{A}^{\mathrm{SLS}}.
\end{cases}
\end{equation}
\end{theorem}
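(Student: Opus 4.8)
This is the $K\times K$ extremal-gain theorem of Chan \emph{et al.}, stated verbatim from \cite{Chan2020}, so my plan is to reconstruct its proof along the two directions that any extremal characterization demands. For each regime $\mathcal{A}\in\{\mathcal{A}^{\mathrm{TIN}},\mathcal{A}^{\mathrm{CTIN}},\mathcal{A}^{\mathrm{SLS}}\}$ I would establish separately a \emph{converse}---a uniform bound $d^{\mathrm{MBC}}_\Sigma(\bm\alpha^{[L]})\le c_{\mathcal A}\,d^{\mathrm{TIN}}_\Sigma(\bm\alpha^{[L]})$ valid for every $\bm\alpha^{[L]}\in\mathcal A\subset\mathbb R_+^{K\times K}$, with $c_{\mathcal A}$ the claimed value---and an \emph{achievability} direction---a family of networks in $\mathcal A$ whose gain tends to $c_{\mathcal A}$. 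Two structural reductions organize everything. First, the ratio in \eqref{eq:extremal_gains_IC} is invariant under the uniform scaling $\bm\alpha^{[L]}\mapsto\rho\,\bm\alpha^{[L]}$, so I may normalize, e.g.\ set $d^{\mathrm{TIN}}_\Sigma=1$. Second, specializing Definitions \ref{def:cycle_bounds_IBC} and \ref{def:cycle_bounds_MBC} to single-user cells, both $d^{\mathrm{TIN}}_\Sigma$ and the cooperative outer bound $d^{\mathrm{MBC}}_{\mathrm{out},\Sigma}$ of \eqref{eq:sum_GDoF_outer_MISO_BC} are values of linear programs that maximize $\sum_k d_k$ over polyhedra sharing the \emph{same} cycle-incidence constraint matrix: for each cycle $\sigma=(i_1\to\cdots\to i_M)$ the constraint reads $\sum_{m}d_{i_m}\le b_\sigma$, with $b_\sigma=\Delta_\sigma$ for TIN and $b^{+}_\sigma=\Delta_\sigma+\min_m\alpha^{[L]}_{i_{m+1}i_m}$ for the cooperative bound. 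For the converse it suffices to use $d^{\mathrm{MBC}}_\Sigma\le d^{\mathrm{MBC}}_{\mathrm{out},\Sigma}$ (Theorem \ref{theorem:MISO_BC_outerbound}), while $d^{\mathrm{TIN}}_\Sigma$ is exact in these regimes by Theorem \ref{theorem:CTIN_optimality} and the identity $\mathcal D^{\mathrm{TINA}}=\mathcal D^{\mathrm{PTIN}}$.

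The converse then becomes a cycle-cover duality comparison. Writing $d^{\mathrm{TIN}}_\Sigma=\min\{\mathbf y^{\top}\mathbf b:\mathbf y\ge\mathbf 0,\ \sum_{\sigma\ni k}y_\sigma\ge1\ \forall k\}$ over fractional cycle covers $\mathbf y$, every such $\mathbf y$ is also feasible for the cooperative dual (the covering constraints are identical), so evaluating the TIN-optimal cover $\mathbf y^{\star}$ at the augmented right-hand side gives the valid upper bound
\begin{equation}
\nonumber
d^{\mathrm{MBC}}_{\mathrm{out},\Sigma}\;\le\;\mathbf y^{\star\top}\mathbf b^{+}\;=\;d^{\mathrm{TIN}}_\Sigma+\sum_{\sigma:|\sigma|\ge2}y^{\star}_\sigma\,\min_{m}\alpha^{[L]}_{i_{m+1}i_m}.
\end{equation}
The whole converse thus reduces to bounding the excess cross-link mass $\sum_\sigma y^{\star}_\sigma\min_m\alpha^{[L]}_{i_{m+1}i_m}$ by $(c_{\mathcal A}-1)\,d^{\mathrm{TIN}}_\Sigma$, where the cover may be re-chosen to balance $\Delta$-cost against bonus. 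Here the regime inequalities do the work: under \eqref{eq:TIN_cond_2} each cycle's bonus $\min_m\alpha^{[L]}_{i_{m+1}i_m}$ is at most half of $\Delta_\sigma$ (for a two-cycle this is exactly the condition $\alpha_{ii}\ge\alpha_{ij}+\alpha_{ji}$ forcing $\Delta_\sigma\ge\alpha_{ij}+\alpha_{ji}\ge2\min$), capping the excess at $\tfrac12 d^{\mathrm{TIN}}_\Sigma$ and yielding $c=\tfrac32$; under the looser \eqref{eq:CTIN_cond_2} the admissible bonus is larger and, optimized against the worst $K$-cell network, saturates at $(1-\tfrac1K)d^{\mathrm{TIN}}_\Sigma$, giving $c=2-\tfrac1K$. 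For achievability in these two regimes I would exhibit explicit $K$-cell networks tuned to the regime boundary (where the governing cross links meet \eqref{eq:TIN_cond_2}/\eqref{eq:CTIN_cond_2} with equality) and verify, using the simplified scheme of Section \ref{sec:2_cell_SLS} as a lower bound on $d^{\mathrm{MBC}}_\Sigma$, that the ratio attains $\tfrac32$ and $2-\tfrac1K$.

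The SLS regime is the genuinely hard case and the expected main obstacle, because the answer is only $\Theta(\log K)$ rather than a constant. The reason the preceding arguments break is already visible in the displayed bound: the SLS condition \eqref{eq:SLS_cond_2} enforces only $\alpha^{[L]}_{ii}\ge\alpha^{[L]}_{ij}$, so a bonus $\min_m\alpha^{[L]}_{i_{m+1}i_m}$ can be of the same order as a direct link while the matching $\Delta_\sigma$ is driven toward zero, meaning no fixed fraction of $d^{\mathrm{TIN}}_\Sigma$ controls it cycle-by-cycle. For achievability I would import the hierarchical network of \cite[Sec.~VII.B]{Chan2020} recalled in the Remark above, whose $\log K$ nested levels force single-cell TIN to saturate at sum-GDoF $\le 2$ while cooperation through SLS attains $1+\tfrac12\log K$, certifying $\eta_{K,1}(\mathcal A^{\mathrm{SLS}})\ge\tfrac12\log K$.

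The matching logarithmic converse is what I expect to require the most care. Instead of a single cover I would decompose the vertex set according to the level structure induced by \eqref{eq:SLS_cond_2}, bound the cooperative bonus accrued at level $j$ by a $1/j$ fraction of $d^{\mathrm{TIN}}_\Sigma$, and telescope the contributions into a harmonic sum $\sum_{j=1}^{K}\tfrac1j=\Theta(\log K)$. The crux is to show that \eqref{eq:SLS_cond_2} is precisely strong enough to validate this level-wise amortization---yet weak enough to admit the logarithmic construction---and then to pin the resulting constants so that the upper and lower bounds coincide up to the $\Theta(\cdot)$ order. Establishing the recursion that produces the harmonic decay, rather than a geometric or constant decay, is the delicate step on which the whole $\log K$ scaling hinges.
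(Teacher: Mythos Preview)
This theorem is not proved in the present paper at all: it is quoted from \cite{Chan2020} and used as a black-box input to the proof of Theorem~\ref{theorem: extremal gain}. The paper's entire ``proof'' consists of the two paragraphs following the statement, which record the two-direction structure (explicit extremal networks for achievability; the cycle-based MISO-BC outer bound, i.e.\ the $K\times K$ specialization of Theorem~\ref{theorem:MISO_BC_outerbound}, for the converse) together with the Remark explaining that \eqref{eq:extremal_gain_UB_IC} is implicit in \cite{Chan2020}. Your plan matches this structure exactly, so at the level of what the present paper offers you are on target.

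Where your reconstruction goes beyond the paper and into Chan et al.'s actual arguments, the LP/cycle-cover duality framing is a clean way to organize the converse, and your TIN-regime per-cycle inequality is correct (indeed, summing the TIN condition $\alpha_{i_mi_m}\ge\alpha_{i_mi_{m-1}}+\alpha_{i_{m+1}i_m}$ around an $M$-cycle gives $\Delta_\sigma\ge\sum_m\alpha_{i_{m+1}i_m}\ge M\cdot\min_m\alpha_{i_{m+1}i_m}$, so the bonus is at most $\Delta_\sigma/M\le\Delta_\sigma/2$). The CTIN and SLS converses, however, remain assertions: you claim the excess ``saturates at $(1-1/K)d_\Sigma^{\mathrm{TIN}}$'' for CTIN and propose a level-decomposition harmonic sum for SLS, but neither is carried through. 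You have correctly located the hard parts; completing them would require the detailed extremal arguments of \cite[Sec.~VI--VII]{Chan2020}, which the present paper deliberately does not reproduce. One minor slip: the simplified scheme of Section~\ref{sec:2_cell_SLS} is specific to $2$-cell networks, so for the $K$-cell achievability constructions you should invoke the general SLS scheme (or the explicit networks of \cite{Chan2020}) rather than that section.
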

In each regime of interest $\mathcal{A}$, drawn from $ \{\mathcal{A}^{\mathrm{TIN}},\mathcal{A}^{\mathrm{CTIN}},\mathcal{A}^{\mathrm{SLS}} \}$, the authors in \cite{Chan2020} identify a network 
$\bm{\alpha}^{\star[L]} \in \mathcal{A}$ (or a class of networks) for which the following lower bound holds 
\begin{equation}
d_{\Sigma}^{\mathrm{MBC}} (\bm{\alpha}^{\star[L]} ) \geq \eta_{K,1}(\mathcal{A}) 
d_{\Sigma}^{\mathrm{TIN}} (\bm{\alpha}^{\star[L]} )
\end{equation}
where $\eta_{K,1}(\mathcal{A})$ takes on values as given in \eqref{eq:extremal_gains_IC}.
This lower bound is combined with a matching upper bound, derived using an
analog of the sum-GDoF outer bound in \eqref{eq:sum_GDoF_outer_MISO_BC}, but specialized to $K \times K $ networks.
This matching upper bound is given by
\begin{equation}
\label{eq:extremal_gain_UB_IC}
d_{\mathrm{out},\Sigma}^{\mathrm{MBC}} (\bm{\alpha}^{[L]} ) \leq  \eta_{K,1}(\mathcal{A}) 
d_{\Sigma}^{\mathrm{TIN}} (\bm{\alpha}^{[L]} )
\end{equation}
which is shown to hold for all networks $\bm{\alpha}^{[L]} \in \mathcal{A}$, in each of the regimes of interest.
Next, these bounds for $K \times K$ networks are utilized to derive similar bounds for $K \times KL$ networks.
In what follows, we fix an arbitrary regime $\mathcal{A}$ drawn from $ \{\mathcal{A}^{\mathrm{TIN}},\mathcal{A}^{\mathrm{CTIN}},\mathcal{A}^{\mathrm{SLS}} \}$.  
\begin{remark}
It is worthwhile highlighting that the upper bound in \eqref{eq:extremal_gain_UB_IC} does not appear in this explicit form in \cite{Chan2020}, yet it can be inferred from the proofs of Theorems 5.1, 6.1 and 7.1, presented respectively in Sections V.A, VI.A and VII.A of the same paper. It is shown that the inequality $d_{\Sigma}^{\mathrm{MBC}} (\bm{\alpha}^{[L]} ) \leq  \eta_{K,1}(\mathcal{A}) d_{\Sigma}^{\mathrm{TIN}} (\bm{\alpha}^{[L]} )$ holds for the three regimes of interest by bounding 
$d_{\Sigma}^{\mathrm{MBC}} (\bm{\alpha}^{[L]} ) $ above using analogs of the MISO-BC cycle bounds in Definition \ref{def:cycle_bounds_MBC}, specialized to $K \times K$ networks. This is equivalent to bounding $d_{\Sigma}^{\mathrm{MBC}} (\bm{\alpha}^{[L]} ) $ above by $d_{\mathrm{out},\Sigma}^{\mathrm{MBC}} (\bm{\alpha}^{[L]} )$, obtained from the outer bound region $\mathcal{D}^{\mathrm{SLS}}_{\mathrm{out}} (\bm{\alpha}^{[L]} )$ 
in Theorem \ref{theorem:MISO_BC_outerbound}. This in turn allows us to write the upper bound in \eqref{eq:extremal_gain_UB_IC}.
\hfill $\lozenge$
\end{remark}
\subsection{Lower Bound}
We first show that a lower bound on the extremal gain in Theorem \ref{theorem: extremal gain}  given  by 
\begin{equation}
\max_{\bm{\alpha} \in \mathcal{A}} 
\frac{d_{\Sigma}^{\mathrm{MBC}} (\bm{\alpha} )}
{d_{\Sigma}^{\mathrm{TIN}}(\bm{\alpha} )}  \geq  \eta_{K,1}(\mathcal{A})
\end{equation}
is easily obtained from Theorem \ref{theorem:extremal_gain_IC}.
Let  $\bm{\alpha}^{\star[L]}$ be a $K \times K$ network that attains $\eta_{K,1}(\mathcal{A})$,
that is 
\begin{equation}
\bm{\alpha}^{\star[L]} = \arg \max_{\bm{\alpha}^{[L]} \in \mathcal{A}} 
\frac{d_{\Sigma}^{\mathrm{MBC}} (\bm{\alpha}^{[L]} )}
{d_{\Sigma}^{\mathrm{TIN}}(\bm{\alpha}^{[L]} )}.
\end{equation}
Now let  $\bm{\alpha}^{\star}$ be a $K \times KL$ network that includes $\bm{\alpha}^{\star[L]}$ as a sub-network, and which is obtained by adding $L - 1$ trivial UEs to each cell (i.e. with all channel strengths parameters set to zero).
Adding trivial users does not alter the GDoF, and hence we have 
$d_{\Sigma}^{\mathrm{MBC}}(\bm{\alpha}^{\star} ) = d_{\Sigma}^{\mathrm{MBC}}(\bm{\alpha}^{\star [L]} )$ and $d_{\Sigma}^{\mathrm{TIN}}(\bm{\alpha}^{\star} ) = 
d_{\Sigma}^{\mathrm{TIN}}(\bm{\alpha}^{\star[L]} )$. 
This directly leads to
\begin{equation}
\max_{\bm{\alpha} \in \mathcal{A}} 
\frac{d_{\Sigma}^{\mathrm{MBC}} (\bm{\alpha} )}
{d_{\Sigma}^{\mathrm{TIN}}(\bm{\alpha} )}  \geq  
\frac{d_{\Sigma}^{\mathrm{MBC}} (\bm{\alpha}^{\star} )}
{d_{\Sigma}^{\mathrm{TIN}}(\bm{\alpha}^{\star} )} 
= \frac{d_{\Sigma}^{\mathrm{MBC}} (\bm{\alpha}^{\star [L]} )}
{d_{\Sigma}^{\mathrm{TIN}}(\bm{\alpha}^{\star [L]} )} 
= \eta_{K,1}(\mathcal{A}).
\end{equation}
Having obtained a lower bound, it now remains to prove a matching upper bound.
\subsection{Upper Bound}
We first observe that for any network $\bm{\alpha}$, we have 
\begin{equation}
\label{eq:sum_GDoF_TIN_equality}
d^{\mathrm{TIN}}_{\Sigma}(\bm{\alpha}) = 
d^{\mathrm{TIN}}_{\Sigma}(\bm{\alpha}^{[L]}).
\end{equation}
In other words, starting from a network $\bm{\alpha}^{[L]}$ with single-user cells, including additional (weaker) users in each cell does not increase the sum-GDoF achieved using TIN. 
To see this, we first note that  $d^{\mathrm{TIN}}_{\Sigma}(\bm{\alpha}) \geq 
d^{\mathrm{TIN}}_{\Sigma}(\bm{\alpha}^{[L]})$ evidently holds, since $\bm{\alpha}^{[L]}$ is a sub-network of $\bm{\alpha}$, and the TIN scheme is a special case of mc-TIN scheme.
The other direction,  i.e. $d^{\mathrm{TIN}}_{\Sigma}(\bm{\alpha}) \leq  d^{\mathrm{TIN}}_{\Sigma}(\bm{\alpha}^{[L]})$, holds by construction of the mc-TIN scheme. All $L$ messages of cell $k$ are decoded by UE-$(L,k)$, which in turn bounds the achievable GDoF in each cell $k$ by the achievable GDoF of UE-$(L,k)$. 

Next, we make a somewhat similar observation to the one above, but in the context of the MISO-BC GDoF outer bound.
In particular, we observe that for any network $\bm{\alpha}$, we have 
\begin{equation}
\label{eq:MBC_sum_GDoF_UB}
d_{\mathrm{out},\Sigma}^{\mathrm{MBC}} (\bm{\alpha} ) \leq d_{\mathrm{out},\Sigma}^{\mathrm{MBC}} (\bm{\alpha}^{[L]} )
\end{equation}
To see this, we first observe that the outer bound  $\mathcal{D}^{\mathrm{SLS}}_{\mathrm{out}}(\bm{\alpha})$ in Theorem \ref{theorem:MISO_BC_outerbound} is included in the region $\bar{\mathcal{D}}^{\mathrm{MBC}}_{\mathrm{out}}(\bm{\alpha})$, 
described  by all tuples $ \mathbf{d} \in \mathbb{R}_{+}^{KL}$ that satisfy
\begin{align}
\nonumber
 &\sum_{k \in \{ \sigma \}}  \bar{d}_k^{[L]}     \leq \Delta_{\pi,m}^{+}, \;
 \forall  m \in \langle |\pi| \rangle, \pi \in \Pi, \\
 \label{eq:MBC_GDoF_outerbound_2}
 & \{\pi\} = \{(L,\sigma(1)),\ldots,(L,\sigma(|\pi|))\}.
\end{align}
This holds as \eqref{eq:MBC_GDoF_outerbound_2} is obtained from \eqref{eq:MBC_GDoF_outerbound} by removing some of the cycle bounds in the latter, and only keeping the bounds that involve all $L$ users in each participating cell of any cycle $\pi$.
Denoting the sum-GDoF outer bound obtained from $\bar{\mathcal{D}}^{\mathrm{MBC}}_{\mathrm{out}}(\bm{\alpha})$ in  \eqref{eq:MBC_GDoF_outerbound_2} as
$\bar{d}_{\mathrm{out},\Sigma}^{\mathrm{MBC}} (\bm{\alpha} ) $,
we clearly have
\begin{equation}
\label{eq:d_sum_LB_bar}
 d_{\mathrm{out},\Sigma}^{\mathrm{MBC}} (\bm{\alpha} )  \leq 
 \bar{d}_{\mathrm{out},\Sigma}^{\mathrm{MBC}} (\bm{\alpha} ).
\end{equation}
Note that $\bar{\mathcal{D}}^{\mathrm{MBC}}_{\mathrm{out}}(\bm{\alpha})$ in \eqref{eq:MBC_GDoF_outerbound_2} may be viewed as a GDoF region for a $K \times K$ network, with GDoF tuples given by
$\bar{\mathbf{d}} \triangleq \big( \bar{d}_k^{[L]} : k \in \langle K \rangle \big) \in \mathbb{R}_{+}^{K}$.
From this observation, it can be easily verified that $\bar{\mathcal{D}}^{\mathrm{MBC}}_{\mathrm{out}}(\bm{\alpha})$ in \eqref{eq:MBC_GDoF_outerbound_2}  coincides with $\mathcal{D}^{\mathrm{MBC}}_{\mathrm{out}}(\bm{\alpha}^{[L]})$, and therefore we have 
\begin{equation}
\label{eq:d_sum_bar_d_sum}
\bar{d}_{\mathrm{out},\Sigma}^{\mathrm{MBC}} (\bm{\alpha} ) = d_{\mathrm{out},\Sigma}^{\mathrm{MBC}} (\bm{\alpha}^{[L]} ).
\end{equation}
By combining \eqref{eq:d_sum_LB_bar} and \eqref{eq:d_sum_bar_d_sum}, the inequality in \eqref{eq:MBC_sum_GDoF_UB} is
obtained. 

Next, we employ \eqref{eq:sum_GDoF_TIN_equality} and \eqref{eq:MBC_sum_GDoF_UB} to obtain the following
\begin{align}
\max_{\bm{\alpha} \in \mathcal{A}} 
\frac{d_{\Sigma}^{\mathrm{MBC}} (\bm{\alpha} )}
{d_{\Sigma}^{\mathrm{TIN}}(\bm{\alpha} )}   &  \leq \max_{\bm{\alpha} \in \mathcal{A}} 
\frac{d_{\Sigma,\mathrm{out}}^{\mathrm{MBC}} (\bm{\alpha} )}
{d_{\Sigma}^{\mathrm{TIN}}(\bm{\alpha} )}\\
\label{eq:eta_K_L_UB}
& \leq \max_{\bm{\alpha}^{[L]} \in \mathcal{A}} 
\frac{d_{\Sigma,\mathrm{out}}^{\mathrm{MBC}} (\bm{\alpha}^{[L]} )}
{d_{\Sigma}^{\mathrm{TIN}}(\bm{\alpha}^{[L]} )}.
\end{align}
Finally,  combining \eqref{eq:eta_K_L_UB} with the upper bound in \eqref{eq:extremal_gain_UB_IC},
we obtain the desired upper bound
\begin{equation}
\max_{\bm{\alpha} \in \mathcal{A}} 
\frac{d_{\Sigma}^{\mathrm{MBC}} (\bm{\alpha} )}
{d_{\Sigma}^{\mathrm{TIN}}(\bm{\alpha} )}  \leq \eta_{K,1}(\mathcal{A}).
\end{equation}
This completes the proof of Theorem \ref{theorem: extremal gain}.
\section{Conclusion}
We studied the GDoF of downlink cellular networks with finite precision CSIT, modeled by the IBC under no BS cooperation and the overloaded MISO-BC under full BS cooperation; while focusing 
on three weak inter-cell interference regimes of interest, given (in a monotonically increasing order) by: the mc-TIN regime, mc-CTIN regime, and mc-SLS regime. 
Through a new application of AI bounds to $K \times KL$ cellular-type networks,
we derived outer bounds for the $K \times KL$ IBC  and the $K \times KL$ MISO-BC under finite precision CSIT.
The former outer bound is utilized to show that the mc-TIN scheme is GDoF optimal for the  $K \times KL$ IBC under finite precision CSIT in the entire mc-CTIN regime; while the latter is combined with a non-trivial achievability argument to show that the mc-SLS scheme is GDoF optimal for the  $2 \times 2L$ MISO-BC in the mc-SLS regime.
We then leveraged the recently introduced extremal network analysis framework to study the GDoF gain from mc-Co over mc-TIN in large asymmetric cellular networks. 
Our analysis reveals that the extremal GDoF gain from mc-Co over mc-TIN is limited to (small) constant factors in the mc-TIN and mc-CTIN regimes, and scales logarithmically with the number of cells $K$ in the mc-SLS regime.
These results mirror and generalize recent results for $K \times K$ networks by Chan \emph{et al.} \cite{Chan2020}.

The results presented in this paper give rise to a number of interesting questions, which remain unanswered. 
For instance, it is not clear whether the logarithmic extremal GDoF gain in the mc-SLS regime is a fundamental gain of the MISO-BC over the IBC, or rather an artefact of limiting the IBC to mc-TIN schemes.
Will this gain diminish if we replace the mc-TIN scheme with a GDoF-optimal robust scheme for the IBC in the mc-SLS?
Such robust scheme for the IBC in the mc-SLS regime will most likely rely on layered superposition and inter-cell rate-splitting with overwhelmingly many parameters and optimization variables, rendering extremal network analysis all the more essential in this case. 
Another intriguing direction is to study extremal GDoF gains of the MISO-BC over the IBC beyond the mc-SLS regime, e.g. in the general weak inter-cell interference regime specified in \eqref{eq:SIR_order}.
Moreover, while the extremal network analysis carried out in this paper focuses on the sum-GDoF viewpoint,
it is of interest to investigate extremal gains from the perspective of \emph{fair} utility functions, such as the symmetric-GDoF.
In this case, switching off weaker users and reducing $K \times KL$ networks to $K \times K$ networks may not be possible any more; and the cell-edge benefits of robust BS cooperation may be better reflected.
\appendix 
\section{Proof of Lemma \ref{lemma:PTIN_region_2_cell}}
\label{appendix:PTIN_region_2_cell}
Here we show that the polyhedral TIN region $\mathcal{D}^{\mathrm{PTIN}}(a)$, given by all tuples $\mathbf{d}_{\mathrm{s}} \in \mathbb{R}_{+}^{2K}$
that satisfy \eqref{eq:GDoF_TIN_2_cell_SLS_3} for some $\mathbf{r} \in \mathcal{R}(a)$,
is identical to the region specified by \eqref{eq:PTIN_2_cell_1} and \eqref{eq:PTIN_2_cell_2}.
To this end, we first note from \eqref{eq:power_control_order} and \eqref{eq:GDoF_TIN_2_cell_SLS_3}  that $\mathcal{D}^{\mathrm{PTIN}}(a)$ may be described by all tuples 
$\mathbf{d} \in \mathbb{R}_{+}^{2K}$ that satisfy:
\begin{align}
\label{eq:GDoF_potential_1}
d_{i}^{[l_i]}  &  \leq r_{i}^{[l_i]}   - r_{i}^{[l_i + 1]}  \\
\label{eq:GDoF_potential_2}
d_{i}^{[l_i]} - \big( \alpha_{ii}^{[l_i]} - \alpha_{ij}^{[l_i]}  \big) &  \leq r_{i}^{[l_i]}  - r_{j}^{[1]}\\
\label{eq:GDoF_potential_3}
d_{i}^{[l_i]}  - \alpha_{ii}^{[l_i]}  &  \leq r_{i}^{[l_i]}
\end{align}
for all $i,j \in \langle 2 \rangle$, $i \neq j$, and $l_i, l_j \in \langle L \rangle$, where
\begin{align}
\label{eq:GDoF_potential_4}
0  &  \leq r_{i}^{[l_i]}   - r_{i}^{[l_i + 1]}  \\
\label{eq:GDoF_potential_5}
a  & \leq  -  r_{i}^{[1]}.
\end{align}
In the above, we set $r_i^{[L +1 ]} = - \infty$ as in \eqref{eq:GDoF_TIN_2_cell_SLS_3}.
Note that in \eqref{eq:GDoF_potential_1}--\eqref{eq:GDoF_potential_3} and the remainder of this appendix, we drop the subscript in $\mathbf{d}_{\mathrm{s}} $, and we use $\mathbf{d}$ instead (as no confusion may arise due to this). 
Moreover, it is worthwhile noting that inequalities as the one in \eqref{eq:GDoF_potential_4} are in fact redundant, as they are implied by \eqref{eq:GDoF_potential_1} and the non-negativity of GDoF components.

Next, we construct a directed graph (digraph) known as the potential graph \cite{Geng2015}, through which we can carry out an efficient FM elimination of the power control variables $\mathbf{r}$ in \eqref{eq:GDoF_potential_1}--\eqref{eq:GDoF_potential_5}.
\subsection{Potential Graph}
We define the potential graph associated with the $2$-cell network of interest as $\mathcal{G} = (\mathcal{V} , \mathcal{E})$, where $\mathcal{V}$ is a set of $2L + 1$ vertices, while $\mathcal{E}$ is a set of directed edges.
The set of vertices is given by 
\begin{equation}
\nonumber
\mathcal{V} \triangleq \big\{ u  \big\} \cup \big\{ v_i^{[l_i]} : (l_i,i) \in \mathcal{U}  \big\}
\end{equation}  
comprising a vertex $v_i^{[l_i]}$ for each UE-$(l_i,i) $, and a \emph{ground} vertex $u$.
The set of directed edges $\mathcal{E}$ is given by the union $\mathcal{E}_1 \cup \mathcal{E}_2  \cup \mathcal{E}_3 \cup \mathcal{E}_4$,
where constituent subsets are defined as follows:
\begin{align}
\nonumber
\mathcal{E}_1 & = \left\{ (v_i^{[l_i]}, v_i^{[l_i+1]}) : i \in \langle 2 \rangle , l_i \in \langle L - 1 \rangle \right\}  \\
\nonumber
\mathcal{E}_2 & = \left\{ (v_i^{[l_i]}, v_j^{[1]}) : (l_i,i) \in \mathcal{U} ,j \in \langle 2 \rangle , i\neq j \right\} \\
\nonumber
\mathcal{E}_3 & = \left\{ (v_i^{[l_i]}, u) : (l_i,i) \in \mathcal{U}  \right\}  \\
\nonumber
\mathcal{E}_4 & = \left\{ (u,v_i^{[1]} ) : i \in \langle 2 \rangle  \right\}.
\end{align}
It is worthwhile noting that $\mathcal{G}$ is not a complete digraph, e.g. vertex  $v_i^{[l_i + 1]}$ with $l_i \in \langle L - 1 \rangle$ may only be reached through its preceding vertex $v_i^{[l_i]}$ of the same cell---see Fig. \ref{fig:potential_graph}.
This reflects the successive decoding order of the mc-TIN scheme as described in Section 
\ref{subsec:2_cell_SLS}, as well as the power allocation order in \eqref{eq:power_control_order}.
Note that this incompleteness of the potential graph in the multi-cell setting is a key difference to the potential graph in \cite{Geng2015}, constructed  for the interference channel.
\begin{figure}[h]
\vspace{-1mm}
\centering
\includegraphics[width = 0.7\textwidth]{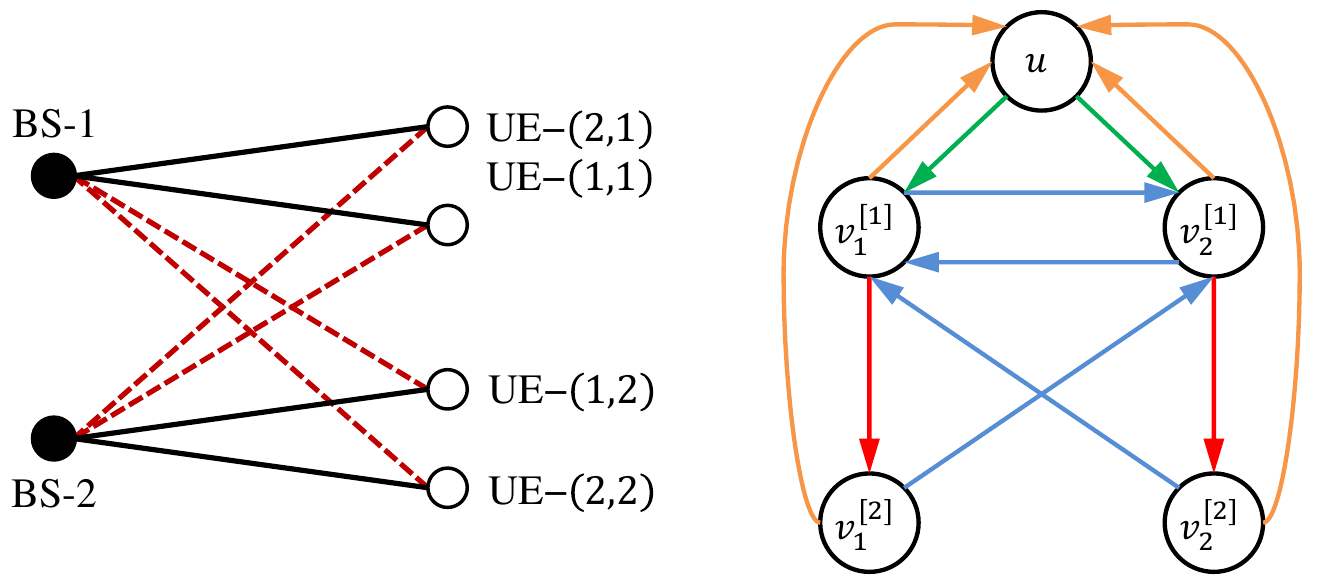}
\caption{Left: $2$-cell network with $2$ users per-cell. Right: The corresponding potential graph $\mathcal{G}$, with subsets of edges given by $\mathcal{E}_1$ in red, $\mathcal{E}_2$ in blue, $\mathcal{E}_3$ in orange, and $\mathcal{E}_4$ in green.}
\label{fig:potential_graph}
\end{figure}

Next, we assign the following lengths to edges from the four subsets defined above:
\begin{align}
\label{eq:edge_lengths_1}
l(v_i^{[l_i]}, v_i^{[l_i + 1]}) & = - d_i^{[l_i]}   \\
l(v_i^{[l_i]}, v_j^{[1]})  & =  \alpha_{ii}^{[l_i]} - \alpha_{ij}^{[l_i]} - d_i^{[l_i]} \\
l(v_i^{[l_i]}, u)  & = \alpha_{ii}^{[l_i]} - d_i^{[l_i]}   \\
\label{eq:edge_lengths_4}
l(u,v_i^{[1]} )  & = - a.
\end{align}
As these lengths clearly depend on $\mathbf{d}$, we parametrize $\mathcal{G}$ by $\mathbf{d}$ in what follows, and we write $\mathcal{G}(\mathbf{d} )$.
A \emph{potential} associated with $\mathcal{G}(\mathbf{d} )$ is a function $p: \mathcal{V} \to \mathbb{R}$ such that for any $v', v'' \in \mathcal{V}$, we have 
\begin{equation}
\nonumber
l(v',v'') \geq p(v'') - p(v').
\end{equation}
These inequalities only depend on differences of potentials, and hence we may set the potential of the ground node as  $p(u) = 0$, without any loss in generality.
It follows that $p$ must satisfy
\begin{align}
\label{eq:potential_1}
- d_i^{[l_i]} & \geq p(v_i^{[l_i + 1]}) - p(v_i^{[l_i]})     \\
\alpha_{ii}^{[l_i]} - \alpha_{ij}^{[l_i]} - d_i^{[l_i]} & \geq p(v_j^{[1]}) - 
p(v_i^{[l_i]}) \\
\alpha_{ii}^{[l_i]} - d_i^{[l_i]}  & \geq  - p(v_i^{[l_i]})    \\
\label{eq:potential_4}
- a & \geq  p(v_i^{[1]})
\end{align}
for all $i,j \in \langle 2 \rangle$, $i \neq j$, and $l_i, l_j \in \langle L \rangle$.
Setting $p(v_i^{[l_i]}) = r_i^{[l_i]}$, for all $(l_i, i) \in \mathcal{U}$, we notice that the set of inequalities for potentials in  \eqref{eq:potential_1}--\eqref{eq:potential_4}  is identical to the set of inequalities involving power control variables in\footnote{Recall that \eqref{eq:GDoF_potential_4} in the set of inequalities \eqref{eq:GDoF_potential_1}--\eqref{eq:GDoF_potential_5} is redundant.}  \eqref{eq:GDoF_potential_1}--\eqref{eq:GDoF_potential_5}.
Therefore, it holds that: \emph{for any GDoF tuple $\mathbf{d} \in \mathcal{D}^{\mathrm{PTIN}}(a)$, 
a power control tuple $\mathbf{r}$ is feasible and achieves $\mathbf{d}$, i.e. by satisfying  the set of inequalities in \eqref{eq:GDoF_potential_1}--\eqref{eq:GDoF_potential_5}, if an only if it is a valid potential for the digraph  $\mathcal{G}(\mathbf{d} )$.} 

We are now ready to invoke the potential theorem \cite[Th. 8.2]{Schrijver2002},  which states that: \emph{there exists a potential function for $\mathcal{G}(\mathbf{d} )$ if and only if each directed circuit in $\mathcal{G}(\mathbf{d} )$  has a non-negative length}.
Therefore, we may conclude that for any $\mathbf{d} \in \mathcal{D}^{\mathrm{PTIN}}(a)$,
there exists a feasible power control tuple $\mathbf{r}$ that  achieves $\mathbf{d}$ 
if and only if each directed circuit in $\mathcal{G}(\mathbf{d} )$  has a non-negative length.
It remains to interpret non-negative lengths of directed circuits in terms of GDoF inequalities.
To this end, a directed circuit is represented as $(v_1 \rightarrow v_2 \rightarrow \cdots \rightarrow v_M)$, where $v_m \in \mathcal{V}$ such that $(v_m,v_{m+1}) \in \mathcal{E}$, for all $m \in \langle M \rangle$ and $v_{M+1} = v_1$.
The length of such circuit is given by  $\sum_{m = 1}^{M} l(v_m,v_{m+1})$.
\subsection{From Directed Circuits to GDoF Inequalities}
We start with single-cell circuits, involving users from one cell only.
These circuits take the form 
\begin{equation}
\label{eq:single_cell_circuit}
\big( v_{i}^{[1]} \to  v_{i}^{[2]}  \to \cdots \to v_{i}^{[l_i]} \to u  \big)
\end{equation}
for some $(l_i,i) \in \mathcal{U}$.
From the non-negative circuit length condition, it can be easily checked using the lengths in \eqref{eq:edge_lengths_1}--\eqref{eq:edge_lengths_4} that a circuit of the form in \eqref{eq:single_cell_circuit}  gives rise to a GDoF bound given by
\begin{equation}
\label{eq:potential_graph_GDoF_1_cell}
\bar{d}_i^{[l_i]}     \leq   \alpha_{ii}^{[l_i]}  - a.
\end{equation}
Recall that $\bar{d}_i^{[l_i]} $ is a shorthand notation for the sum $\sum_{s_i = 1}^{l_i}d_i^{[s_i]} $.

Next, we move on to multi-cell circuits that involve users from both cells. 
For any pair of users $(l_i,i),(l_j,j) \in \mathcal{U}$, where $i \neq j$, 
we have the four following types of circuits:
\begin{align}
\nonumber
& \big( v_{i}^{[1]} \to v_{i}^{[2]} \cdots \to v_{i}^{[l_i]} \to u \to v_{j}^{[1]}  \to v_{j}^{[2]} \cdots \to v_{j}^{[l_j]}  \big) \\
\nonumber
& \big( v_{i}^{[1]} \to v_{i}^{[2]} \cdots \to v_{i}^{[l_i]} \to v_{j}^{[1]}  \to v_{j}^{[2]} \cdots \to v_{j}^{[l_j]} \to u  \big) \\
\nonumber
& \big( v_{i}^{[1]} \to v_{i}^{[2]} \cdots \to v_{i}^{[l_i]}  \to u \to v_{j}^{[1]}  \to v_{j}^{[2]} \cdots \to v_{j}^{[l_j]} \to u  \big) \\
\nonumber
& \big( v_{i}^{[1]} \to v_{i}^{[2]} \cdots \to v_{i}^{[l_i]} \to v_{j}^{[1]}  \to v_{j}^{[2]} \cdots \to v_{j}^{[l_j]}  \big).
\end{align}
It can be verified that the above circuits give rise o the following GDoF inequalities:
\begin{align}
\label{eq:potential_graph_GDoF_2_cell_1}
\bar{d}_i^{[l_i]}   + 
\bar{d}_j^{[l_j]}  &  \leq   \alpha_{ii}^{[l_i]} - a
+ \alpha_{jj}^{[l_j]} - \alpha_{ji}^{[l_j]}  \\
\bar{d}_i^{[l_i]}   + 
\bar{d}_j^{[l_j]}  &  \leq   \alpha_{ii}^{[l_i]} - \alpha_{ij}^{[l_i]} 
+ \alpha_{jj}^{[l_j]} - a \\
\bar{d}_i^{[l_i]}   + 
\bar{d}_j^{[l_j]}  &  \leq   \alpha_{ii}^{[l_i]} - a
+ \alpha_{jj}^{[l_j]} - a  \\
\label{eq:potential_graph_GDoF_2_cell_4}
\bar{d}_i^{[l_i]}   + 
\bar{d}_j^{[l_j]}  &  \leq   \alpha_{ii}^{[l_i]} - \alpha_{ij}^{[l_i]} 
+ \alpha_{jj}^{[l_j]} - \alpha_{ji}^{[l_j]}.
\end{align}
It is evident that together with the non-negativity of the GDoF, the inequalities in \eqref{eq:potential_graph_GDoF_1_cell} and \eqref{eq:potential_graph_GDoF_2_cell_1}--\eqref{eq:potential_graph_GDoF_2_cell_4} define a GDoF region
identical to the one defined in \eqref{eq:PTIN_2_cell_1} and \eqref{eq:PTIN_2_cell_2} in Lemma 
\ref{lemma:PTIN_region_2_cell}. 
Therefore, after eliminating the power control variables $\mathbf{r}$, the polyhedral TIN region $\mathcal{D}^{\mathrm{PTIN}}(a)$ does indeed reduce to the one in Lemma 
\ref{lemma:PTIN_region_2_cell}, which concludes the proof.
\section{Proof of Lemma \ref{lemma:Minkowski_sum}}
\label{appendix:Minkowski_sum}
The result in Lemma \ref{lemma:Minkowski_sum} is a consequence of
$\mathcal{D}^{\mathrm{Mul}}(a)$ and $\mathcal{D}^{\mathrm{PTIN}'}(a)$ both being polymatroids, and 
the direct summability property of polymatroids given in \cite[Th. 44.6]{Schrijver2002} 
(see also \cite[Th. 3]{McDiarmid1975}), as shown in detail next.
Before we proceed, it is worthwhile recalling that the set of UEs in the $2$-cell setting of interest is given by 
$\mathcal{U} = \mathcal{U}_1 \cup \mathcal{U}_2$, where $\mathcal{U}_i \triangleq \{ (l_i,i): l_i \in \langle L \rangle \}$ for all $i \in \langle 2 \rangle$.

First, we observe that $\mathcal{D}^{\mathrm{Mul}}(a)$, described in \eqref{eq:common_2_cell_2_1} 
and \eqref{eq:common_2_cell_2_2}, can be equivalently expressed in a polymatroid-like fashion as all
 tuples $\mathbf{d}_{\mathrm{c}} \in \mathbb{R}_{+}^{2K}$ that satisfy
\begin{equation}
\label{eq:common_polymatroid}
\mathbf{d}_{\mathrm{c}}(\mathcal{S} )  \leq a \cdot \mathbbm{1}(\mathcal{S} \neq  \emptyset), \ \forall \mathcal{S} \subseteq \mathcal{U}
\end{equation}
where $\mathbf{d}_{\mathrm{c}}(\mathcal{S} ) $ is just a shorthand notation for the sum $\sum_{(l_i,i) \in \mathcal{S}} d_{\mathrm{c},i}^{[l_i]}$.
This follows as the additional inequalities in  \eqref{eq:common_polymatroid}, not included in \eqref{eq:common_2_cell_2_1}  and \eqref{eq:common_2_cell_2_2}, are redundant.
It can be easily checked that \eqref{eq:common_polymatroid} is a polymatroid---note that the associated set function is 
normalized (since $\mathbf{d}_{\mathrm{c}}(\emptyset)  = 0$), non-decreasing and submodular (since it is constant for all $\mathcal{S} \neq \emptyset$).

Next, we move on to the region $\mathcal{D}^{\mathrm{PTIN}'}(a)$, described in \eqref{eq:PTIN_2_cell_2_1} and \eqref{eq:PTIN_2_cell_2_2}, and we express it in a similar polymatroid-like fashion as all tuples $\mathbf{d}_{\mathrm{s}} \in \mathbb{R}_{+}^{2K}$ that satisfy
\begin{equation}
\label{eq:PTIN_polymatroid}
\mathbf{d}_{\mathrm{s}}(\mathcal{S} )  \leq f(\mathcal{S} ) , \ \forall \mathcal{S} \subseteq \mathcal{U}.
\end{equation}
In the above, $f: 2^{\mathcal{U}} \rightarrow \mathbb{R}_{+}$ is a set function associated with the polyhedron $\mathcal{D}^{\mathrm{PTIN}'}(a)$, where  $2^{\mathcal{U}}$ denotes the power set over $\mathcal{U}$.
This set function $f$ is given by
\begin{align}
\label{eq:set_function_2_cell}
 f(\mathcal{S}) & \triangleq \begin{cases}
 0, &  \mathcal{S} =\emptyset \\
 \alpha_{11}^{[s_1]}  -  \max \big\{\alpha_{12}^{[s_1]}, a \big\}, & 
 \mathcal{S} \neq \emptyset, \mathcal{S} \cap \mathcal{U}_2 = \emptyset    \\
 \alpha_{22}^{[s_2]}  -  \max \big\{\alpha_{21}^{[s_2]}, a \big\}, & 
 \mathcal{S} \neq \emptyset, \mathcal{S} \cap \mathcal{U}_1 = \emptyset    \\
\alpha_{11}^{[s_1]} + 
\alpha_{22}^{[s_2]}  - \max\big\{\alpha_{12}^{[s_1]} + a, a + \alpha_{21}^{[s_2]} , 2a , \alpha_{12}^{[s_1]} + \alpha_{21}^{[s_2]} \big\}, &  
\mathcal{S} \cap \mathcal{U}_1 \neq \emptyset, \mathcal{S} \cap \mathcal{U}_2 \neq \emptyset,
 \end{cases} \\
 \nonumber
 & \text{where we define}  \ \ \  s_i \triangleq \max_{(l_i,i) \in \mathcal{U}_i \cap \mathcal{S}} l_i, \ \forall 
 i \in \langle 2 \rangle.
\end{align}
It is readily seen from \eqref{eq:set_function_2_cell} that for any $\mathcal{S} \subseteq \mathcal{U}$,
the value of $ f(\mathcal{S})$ depends on the pair of UEs $(s_1,1)$ and $(s_2,2)$ only, or more specifically, 
on the user indices $s_1$ and $s_2$ as defined in \eqref{eq:set_function_2_cell}.
Therefore, and with a slight abuse of notation, $f(\mathcal{S})$ is written as $f(s_1,s_2)$ whenever convenient.

As mentioned above, the region $\mathcal{D}^{\mathrm{PTIN}'}(a)$, described in \eqref{eq:PTIN_2_cell_2_1} and \eqref{eq:PTIN_2_cell_2_2}, is equivalently represented by \eqref{eq:PTIN_polymatroid}. This holds since the inequalities in \eqref{eq:PTIN_polymatroid} include those in \eqref{eq:PTIN_2_cell_2_1} and \eqref{eq:PTIN_2_cell_2_2}, while the remaining inequalities in \eqref{eq:PTIN_polymatroid}, which are not included in \eqref{eq:PTIN_2_cell_2_1} and \eqref{eq:PTIN_2_cell_2_2}, are redundant.
This is seen by observing that any inequality $\mathbf{d}_{\mathrm{s}}(\mathcal{S} )  \leq f(\mathcal{S} ) = f(s_1,s_2 ) $, where $\mathcal{S} = \mathcal{S}_1 \cup \mathcal{S}_2$ and  $\mathcal{S}_{i} \subseteq \mathcal{U}_i$ for all $i \in \langle 2 \rangle$, is implied by the inequality 
$\mathbf{d}_{\mathrm{s}}(\mathcal{S}' )  \leq f(\mathcal{S}' ) = f(s_1,s_2 ) $, where $\mathcal{S}' = \mathcal{S}'_1 \cup \mathcal{S}'_2$
and $\mathcal{S}'_{i} = \{ (l_i,i) : l_i \in \langle s_i \rangle \} $ for all $i \in \langle 2 \rangle$. The latter inequality is in turn included in 
\eqref{eq:PTIN_2_cell_2_1} and \eqref{eq:PTIN_2_cell_2_2}. 

In addition to the above, it turns out that the region described \eqref{eq:PTIN_polymatroid} is a polymatroid,
as the set function $f$ is normalized (by definition), non-decreasing and submodular.
This is shown in detail further on in part \ref{appendix:subsec_polymatroid}. 
With $\mathcal{D}^{\mathrm{Mul}}(a)$ and $\mathcal{D}^{\mathrm{PTIN}'}(a)$ both being polymatroids, we are now ready to invoke \cite[Th. 44.6]{Schrijver2002}.
This theorem states that: \emph{the Minkowski sum of a pair of polymatroids, associated with the set functions $f(\mathcal{S})$ and $g(\mathcal{S})$, is a polymatroid associated with the set function $f(\mathcal{S})+ g(\mathcal{S})$.}
Through a direct application of this theorem, it follows that
$\mathcal{D}^{\mathrm{Mul}}(a) \oplus \mathcal{D}^{\mathrm{PTIN}'}(a)$ is given by all tuples $\mathbf{d} \in \mathbb{R}_{+}^{2K}$ that satisfy
\begin{equation}
\label{eq:sum_polymatroid}
\mathbf{d}(\mathcal{S} )  \leq  f(\mathcal{S} )  + a \cdot \mathbbm{1}(\mathcal{S} \neq  \emptyset), \ \forall \mathcal{S} \subseteq \mathcal{U}.
\end{equation}
As for the region in  \eqref{eq:PTIN_polymatroid}, it can be seen that an inequality in \eqref{eq:sum_polymatroid} associated with a set $\mathcal{S} = \mathcal{S}_1 \cup \mathcal{S}_2$, where $\mathcal{S}_{i} \subseteq \mathcal{U}_i$ for all $i \in \langle 2 \rangle$, is redundant unless  each $\mathcal{S}_i $ takes the form
$\{ (l_i,i) : l_i \in \langle s_i \rangle \} $, for some $s_i \in \langle L \rangle$.
After removing redundant inequalities, the region in \eqref{eq:sum_polymatroid} reduces to the one in 
Lemma \ref{lemma:Minkowski_sum}, hence completing the proof.
It remains to show that $\mathcal{D}^{\mathrm{PTIN}'}(a)$ is indeed a polymatroid.
\subsection{Polymatroidality of \eqref{eq:PTIN_polymatroid} }
\label{appendix:subsec_polymatroid}
$\mathcal{D}^{\mathrm{PTIN}'}(a)$ is a polymatroid if the associated set function $f$ satisfies  the following conditions:
\begin{itemize}
\item Normalized: $f(\emptyset ) = 0$
\item Non-decreasing: $ f(\mathcal{S}) \leq f(\mathcal{T}) $, for all $\mathcal{S}  \subseteq \mathcal{T} \subseteq \mathcal{U}$.
\item Submodular : $  f\big( \mathcal{S}  \cup \mathcal{T} \big) + f\big( \mathcal{S}  \cap \mathcal{T} \big)  \leq f\big( \mathcal{S} \big) + f\big( \mathcal{T} \big) $, for all $\mathcal{S},\mathcal{T}   \subseteq  \mathcal{U}$.
\end{itemize}
As the first condition holds by definition, we have to show that the second and third conditions hold as well.
Let us start by establishing some notation.
Whenever we consider a subset $ \mathcal{S} \subseteq  \mathcal{U}  $, we assume that $ \mathcal{S} =  \mathcal{S}_1  \cup \mathcal{S}_2 $, where $\mathcal{S}_1   \subseteq \mathcal{U}_1$ and $\mathcal{S}_2 \subseteq \mathcal{U}_2$. Moreover, $s_1$ and $s_2$ denote $\max_{(l_1,1) \in \mathcal{S}_1} l_1$ and 
$\max_{(l_2,2) \in \mathcal{S}_2} l_2$, respectively.
Note that whenever $\mathcal{S}_i = \emptyset$ for some $i \in \langle 2 \rangle$, we set the corresponding $s_i$ to $0$.
Similar notation is used for a subset  $ \mathcal{T} \subseteq  \mathcal{U}  $, given by $   \mathcal{T}_1  \cup \mathcal{T}_2 $, and with  $t_1$ and $t_2$ denoting the corresponding maximum user  indices in 
$ \mathcal{T}_1 $ and $  \mathcal{T}_2 $ respectively.

Now let us revisit the shorthand notation for $f(\mathcal{S})$, given by $f(s_1,s_2)$, which was introduced in the previous part. 
It can be verified that  $f(s_1,s_2)$, defined in \eqref{eq:set_function_2_cell},  is equivalently given as 
\begin{align}
\nonumber
f(s_1,s_2) & =  \min \left\{
\begin{aligned}
&(\alpha_{11}^{[s_1]} - \alpha_{12}^{[s_1]}) \mathbbm{1}(s_1 \neq 0) +  (\alpha_{22}^{[s_2]} - a) \mathbbm{1}(s_2 \neq 0)  , \\
&(\alpha_{11}^{[s_1]} - a)\mathbbm{1}(s_1 \neq 0)  +  (\alpha_{22}^{[s_2]} - \alpha_{21}^{[s_2]})\mathbbm{1}(s_2 \neq 0)  ,  \\
&(\alpha_{11}^{[s_1]} - a)\mathbbm{1}(s_1 \neq 0)  +  ( \alpha_{22}^{[s_2]} - a) \mathbbm{1}(s_2 \neq 0)  ,  \\
&(\alpha_{11}^{[s_1]} - \alpha_{12}^{[s_1]}) \mathbbm{1}(s_1 \neq 0)  + (\alpha_{22}^{[s_2]} - \alpha_{21}^{[s_2]})  
\mathbbm{1}(s_2 \neq 0)  
\end{aligned}
\right\}.
\end{align}
Moreover, it is readily seen from the above that $f(s_1,s_2)$ can be written more compactly as
\begin{equation}
\nonumber
f(s_1,s_2) \triangleq \min_{m_1,m_2\in \langle 2 \rangle } \big\{ \delta_{1}^{m_1}(s_1) + \delta_{2}^{m_2}(s_2) \big\}
\end{equation}
where we define 
\begin{align}
\nonumber
\delta_{i}^{1}(s_i) & \triangleq \begin{cases} 
\alpha_{ii}^{[s_i]} - \alpha_{ij}^{[s_i]}, & s_i \in \langle L \rangle \\
0, & s_i = 0
\end{cases} \\
\nonumber
\delta_{i}^{2}(s_i) & \triangleq \begin{cases} 
\alpha_{ii}^{[s_i]} - a, & s_i \in \langle L \rangle \\
0, & s_i = 0.
\end{cases}
\end{align}
Note that both $\delta_{i}^{1}(s_i)$ and $\delta_{i}^{2}(s_i)$ are non-decreasing in their arguments, i.e.
\begin{equation}
\label{eq:delta_order}
\delta_{i}^{m_i}(s_i - 1)  \leq \delta_{i}^{m_i}(s_i), \ \forall s_i \in \langle L \rangle.
\end{equation}
This holds for $m_i = 1$ due to the SIR order $\alpha_{ii}^{[s_i]}  - \alpha_{ij}^{[s_i]} \geq \alpha_{ii}^{[s_i - 1]}  - \alpha_{ij}^{[s_i - 1]}$; and for $m_i = 2$ due to the SNR order $\alpha_{ii}^{[s_i]} \geq \alpha_{ii}^{[s_i - 1]}$.
Recall that both SIR and SNR orders hold in the mc-SLS regime.

Next, we prove that $f$ is monotonic. Consider the subsets $\mathcal{S}$ and $\mathcal{T}$, where $\mathcal{S} \subseteq \mathcal{T}$. We have 
\begin{align}
f(\mathcal{T}) & = f(t_1,t_2) \\ 
\label{eq:f_monoton_1}
& = \min_{m_1,m_2\in \langle 2 \rangle } \big\{ \delta_{1}^{m_1}(t_1) + \delta_{2}^{m_2}(t_2) \big\} \\
\label{eq:f_monoton_1_1}
& =  \delta_{1}^{m_1^{\star} } (t_1) + \delta_{2}^{m_2^{\star}}(t_2) 
\end{align}
where $m_1^{\star}$ and $m_2^{\star}$ in \eqref{eq:f_monoton_1_1} are the coefficients that attain the minimum in 
\eqref{eq:f_monoton_1}.
For $\mathcal{S}$, we have
\begin{align}
f(\mathcal{S}) & = f(s_1,s_2) \\ 
\label{eq:f_monoton_2}
& = \min_{m_1,m_2\in \langle 2 \rangle } \big\{ \delta_{1}^{m_1}(s_1) + \delta_{2}^{m_2}(s_2) \big\} \\
\label{eq:f_monoton_3}
& \leq  \delta_{1}^{m_1^{\star} } (s_1) + \delta_{2}^{m_2^{\star}}(s_2)  \\
\label{eq:f_monoton_4}
& \leq  \delta_{1}^{m_1^{\star} } (t_1) + \delta_{2}^{m_2^{\star}}(t_2)  \\
& = f(\mathcal{T}).
\end{align}
The inequality in \eqref{eq:f_monoton_3} holds since $m_1^{\star}$ and $m_2^{\star}$, which attain the minimum in  \eqref{eq:f_monoton_1}, do not necessarily  minimize \eqref{eq:f_monoton_2}.
On the other hand, \eqref{eq:f_monoton_4} holds since $\delta_{i}^{m_i^{\star} } (s_i)$
is  non-decreasing in   $s_i$, as highlighted in \eqref{eq:delta_order}.
Next, we move on to showing that $f$ is submodular.

Consider an arbitrary pair of subsets $\mathcal{S},\mathcal{T} \subseteq \mathcal{U}$.
We consider the two following cases:
\begin{enumerate}
\item $s_1 \geq t_1$ and $s_2 \geq t_2$: In this case, we have
\begin{align}
\label{eq:submod_proof_1}
f (\mathcal{S} \cup \mathcal{T} ) + f (\mathcal{S} \cap \mathcal{T} )  & = f\big(s_1, s_2 \big) + f (\mathcal{S} \cap \mathcal{T} )  \\
\label{eq:submod_proof_2}
& \leq f\big(s_1, s_2 \big) + f\big(t_1, t_2 \big)  \\
& =  f (\mathcal{S} ) + f (\mathcal{T} ).
\end{align}
In the above, \eqref{eq:submod_proof_1} holds since here users with maximal indices in $\mathcal{S} \cup \mathcal{T} $  are also in $\mathcal{S}$.
On the other hand, the inequality  in \eqref{eq:submod_proof_2} holds as users with maximal indices in  $\mathcal{S} \cap \mathcal{T} $ must also be in $\mathcal{T}$, and the fact that $f\big(l_1, l_2 \big)  \leq f\big(t_1, t_2 \big)$, for all $(l_1,1),(l_2,2) \in \mathcal{T}$.
\item $s_1 \geq t_1$ and $s_2 \leq t_2$: 
We may express $f (\mathcal{S}  ) + f ( \mathcal{T} )$  as 
\begin{align}
f (\mathcal{S}  ) + f ( \mathcal{T} )  & = f\big(s_1, s_2 \big) + f (t_1, t_2  )  \\
\label{eq:f_submod_1}
& =  \min_{m_1,n_2\in \langle 2 \rangle } \big\{ \delta_{1}^{m_1}(s_1) + \delta_{2}^{n_2}(s_2) \big\} + 
\min_{n_1,m_2\in \langle 2 \rangle } \big\{ \delta_{1}^{n_1}(t_1) + \delta_{2}^{m_2}(t_2) \big\} \\
& =  \delta_{1}^{m_1^{\star}}(s_1)  + \delta_{2}^{n_2^{\star}}(s_2) + \delta_{1}^{n_1^{\star}}(t_1) + \delta_{2}^{m_2^{\star}}(t_2)
\end{align}
where $m_1^{\star}$, $m_2^{\star}$, $n_1^{\star}$ and $n_2^{\star}$ are the coefficients that attain the minimum in \eqref{eq:f_submod_1}.
On the other hand, in this case the sum $f (\mathcal{S} \cup \mathcal{T} ) + f (\mathcal{S} \cap \mathcal{T} )$ is bounded above as 
\begin{align}
f (\mathcal{S} \cup \mathcal{T} ) + f (\mathcal{S} \cap \mathcal{T} )  & = f\big(s_1, t_2 \big) + f (\mathcal{S} \cap \mathcal{T} )  \\
\label{eq:f_submod_2}
& \leq   f\big(s_1, t_2 \big) +   f\big(\min\{s_1,t_1\}, \min \{ s_2, t_2 \} \big)  \\
& =   f\big(s_1, t_2 \big) +   f\big(t_1, s_2 \big)  \\
\label{eq:f_submod_3}
& =  \min_{m_1,m_2\in \langle 2 \rangle } \big\{ \delta_{1}^{m_1}(s_1) + \delta_{2}^{m_2}(t_2) \big\} + 
\min_{n_1,n_2\in \langle 2 \rangle } \big\{ \delta_{1}^{n_1}(t_1) + \delta_{2}^{n_2}(s_2) \big\} \\
\label{eq:f_submod_4}
& \leq  \delta_{1}^{m_1^{\star}}(s_1) + \delta_{2}^{m_2^{\star}}(t_2)  + \delta_{1}^{n_1^{\star}}(t_1) + \delta_{2}^{n_2^{\star}}(s_2) \\
& = f (\mathcal{S}  ) + f ( \mathcal{T} ).
\end{align}
The inequality in \eqref{eq:f_submod_2} holds as the maximal indices of users in $\mathcal{S} \cap \mathcal{T} $ are at most $\min\{s_1,t_1\}$ and  $\min \{ s_2, t_2 \}$.
On the other hand, the inequality in \eqref{eq:f_submod_4} holds since $m_1^{\star}$, $m_2^{\star}$, $n_1^{\star}$ and $n_2^{\star}$, the minimizers in \eqref{eq:f_submod_1},
are not necessarily minimizers for \eqref{eq:f_submod_3}.
\end{enumerate}
The remaining two cases, i.e. $t_1 \geq s_1$ and $t_2 \leq s_2$, and 
$t_1 \geq s_1$ and $t_2 \leq s_2$, 
can be addressed in a similar manner by swapping indices.
This proves that $f$ is submodular, and hence completes the proof of polymatroidality for the region described in \eqref{eq:PTIN_polymatroid}.
\section{Optimality of mc-SLS Under Homogeneous Interference}
\label{appendix:H_ICI}
Here we consider $K \times KL$ networks in which all inter-cell interference links have the same strength level, i.e. networks with homogeneous inter-cell interference (H-ICI).
In particular, we have
\begin{equation}
\alpha_{ij}^{[l_i]} =  \beta,  \; \forall  l_i \in \langle L \rangle, i,j \in \langle K \rangle,  i\neq j.
\end{equation}
The set of all $K \times KL$ networks $\bm{\alpha}$ that satisfy this property is denoted by $\mathcal{A}^{\mathrm{H\text{-}ICI}}$.
From the order in \eqref{eq:strength_order} and Definition \ref{def:SLS_regime}, it follows that 
the  mc-SLS regime in this setting is simply specified by
\begin{equation}
\label{eq:SLS_cond_sym}
\alpha_{ii}^{[1]} \geq \beta, \forall i \in \langle K \rangle.
\end{equation}
That is, all direct link strength levels must be no less than the inter-cell interference strength level. 
We have the following mc-SLS optimality result for this class of networks.
\begin{theorem}
\label{theorem:h_ICI_SLS}
In the mc-SLS regime, mc-SLS is GDoF optimal for the $K \times KL$ MISO-BC with homogeneous inter-cell interference under finite precision CSIT. Moreover, the GDoF region in this case is equal to the outer bound $\mathcal{D}^{\mathrm{SLS}}_{\mathrm{out}}$. That is,
$\bm{\alpha} \in \big\{  \mathcal{A}^{\mathrm{H\text{-}ICI}} \cap \mathcal{A}^{\mathrm{SLS}}  \big\} \implies 
\mathcal{D}^{\mathrm{MBC}} =  \mathcal{D}^{\mathrm{SLS}}  = 
\mathcal{D}^{\mathrm{SLS}}_{\mathrm{out}}$.
\end{theorem}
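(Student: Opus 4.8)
The converse inclusion $\mathcal{D}^{\mathrm{MBC}}\subseteq\mathcal{D}^{\mathrm{SLS}}_{\mathrm{out}}$ is exactly Theorem~\ref{theorem:MISO_BC_outerbound} read for $\bm{\alpha}\in\mathcal{A}^{\mathrm{H\text{-}ICI}}\cap\mathcal{A}^{\mathrm{SLS}}$, so the entire task is achievability, i.e.\ $\mathcal{D}^{\mathrm{SLS}}_{\mathrm{out}}\subseteq\mathcal{D}^{\mathrm{SLS}}\subseteq\mathcal{D}^{\mathrm{MBC}}$. The plan is to first make the outer bound explicit in the H-ICI case. Substituting $\alpha_{ij}^{[l_i]}=\beta$ for $i\neq j$ into Definitions~\ref{def:cycle_bounds_IBC}--\ref{def:cycle_bounds_MBC} gives, for every cycle $\pi$ of length $M=|\pi|$, $\Delta_{\pi}=\sum_{m=1}^{M}\alpha_{i_mi_m}^{[l_{i_m}]}-M\beta$ and $\Delta_{\pi,m}^{+}=\sum_{m=1}^{M}\alpha_{i_mi_m}^{[l_{i_m}]}-(M-1)\beta$ independently of $m$, both depending on $\pi$ only through its unordered set of participating users. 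Hence $\mathcal{D}^{\mathrm{SLS}}_{\mathrm{out}}$ collapses to the single polyhedron of all $\mathbf{d}\in\mathbb{R}_{+}^{KL}$ with
\[
\sum_{k\in\mathcal{K}}\bar{d}_{k}^{[l_k]}\leq\sum_{k\in\mathcal{K}}\alpha_{kk}^{[l_k]}-(|\mathcal{K}|-1)\beta,\qquad\forall\,\emptyset\neq\mathcal{K}\subseteq\langle K\rangle,\ (l_k)_{k\in\mathcal{K}},
\]
and this is precisely the region we must achieve.

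For achievability I would use the multi-cell analog of the simplified SLS scheme of Section~\ref{subsec:2_cell_SLS} (a special case of the general mc-SLS scheme, with $\bm{X}_{\mathcal{S}}$ active only for singletons and for $\mathcal{S}=\mathcal{U}$): each BS-$i$ transmits a superposition $X_i=\sqrt{q_{\mathrm{c}}}X_{\mathrm{c}}+\sum_{l_i\in\langle L\rangle}\sqrt{q_i^{[l_i]}}X_{\mathrm{s},i}^{[l_i]}$ of $L$ single-cell codewords and one network-wide multicast codeword $X_{\mathrm{c}}$ decoded by all $KL$ users, with powers set as in \eqref{eq:power_allocation_2_cell_SLS} and the power order $r_i^{[L]}\leq\cdots\leq r_i^{[1]}\leq-a$, $0\leq a\leq\beta$, and with decoding order $X_{\mathrm{c}},X_{\mathrm{s},i}^{[1]},\dots,X_{\mathrm{s},i}^{[l_i]}$ at UE-$(l_i,i)$. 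The $K$-cell counterparts of \eqref{eq:GDoF_common_2_cell_SLS}--\eqref{eq:GDoF_TIN_2_cell_SLS} (with a maximum over $j\neq i$ in place of a single cross-cell index), using $\alpha_{ij}^{[l_i]}=\beta$, $\alpha_{ii}^{[l_i]}\geq\beta$ and the bounded-density assumption (no coherent cancellation of $X_{\mathrm{c}}$ across BSs), show, exactly as in \eqref{eq:GDoF_common_2_cell_SLS_1}--\eqref{eq:GDoF_common_2_cell_SLS_2}, that $X_{\mathrm{c}}$ carries any multicast tuple with $\sum_{(l_i,i)\in\mathcal{U}}d_{\mathrm{c},i}^{[l_i]}\leq a$ (polytope $\mathcal{D}^{\mathrm{Mul}}(a)$), while the single-cell layers achieve the polyhedral-TIN region $\mathcal{D}^{\mathrm{PTIN}}(a)=\bigcup_{\mathbf{r}\in\mathcal{R}(a)}\mathcal{D}^{\mathrm{PTIN}}(\mathbf{r})$ of the underlying $K$-cell network; thus $\mathcal{D}^{\mathrm{SLS}}\supseteq\bigcup_{0\leq a\leq\beta}\mathcal{D}^{\mathrm{Mul}}(a)\oplus\mathcal{D}^{\mathrm{PTIN}}(a)$.

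To evaluate this union I would carry out the $K$-cell versions of Lemmas~\ref{lemma:PTIN_region_2_cell} and \ref{lemma:Minkowski_sum}. Build the potential graph of Appendix~\ref{appendix:PTIN_region_2_cell} with one vertex per user, a ground vertex $u$, up-chain edges $v_i^{[l_i]}\to v_i^{[l_i+1]}$ of length $-d_i^{[l_i]}$, inter-cell edges $v_i^{[l_i]}\to v_j^{[1]}$ ($j\neq i$) of length $\alpha_{ii}^{[l_i]}-\beta-d_i^{[l_i]}$, edges $v_i^{[l_i]}\to u$ of length $\alpha_{ii}^{[l_i]}-d_i^{[l_i]}$, and edges $u\to v_i^{[1]}$ of length $-a$. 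Since each cell-block can only be entered at $v_i^{[1]}$ and is traversed upward, every directed circuit visits a set of distinct cells once each, with at most one inter-block transition routed through $u$; as $a\leq\beta$ such a detour never shortens a multi-cell circuit, so the potential theorem \cite[Th.~8.2]{Schrijver2002} yields exactly $\bar{d}_{\mathrm{s},i}^{[l_i]}\leq\alpha_{ii}^{[l_i]}-a$ and, for $|\mathcal{K}|\geq2$, $\sum_{k\in\mathcal{K}}\bar{d}_{\mathrm{s},k}^{[l_k]}\leq\sum_{k\in\mathcal{K}}\alpha_{kk}^{[l_k]}-|\mathcal{K}|\beta$. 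Tightening the single-cell bound to $\alpha_{ii}^{[l_i]}-\beta$ (legitimate since $a\leq\beta$; this is the passage $\mathcal{D}^{\mathrm{PTIN}}(a)\to\mathcal{D}^{\mathrm{PTIN}'}(a)$ of \eqref{eq:PTIN_2_cell_2_1}--\eqref{eq:PTIN_2_cell_2_2}) makes every remaining bound read $\sum_{k\in\mathcal{K}}\bar{d}_{\mathrm{s},k}^{[l_k]}\leq\sum_{k\in\mathcal{K}}\alpha_{kk}^{[l_k]}-|\mathcal{K}|\beta$, so $\mathcal{D}^{\mathrm{PTIN}'}(a)$ is $a$-independent on $[0,\beta]$ with set function $f(\mathcal{S})=\sum_{k:\,\mathcal{S}\cap\mathcal{U}_k\neq\emptyset}\big(\alpha_{kk}^{[s_k]}-\beta\big)$, $s_k\triangleq\max_{(l,k)\in\mathcal{U}_k\cap\mathcal{S}}l$. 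This $f$ is separable across cells and each summand is non-decreasing in $s_k$ precisely by the SNR order \eqref{eq:strength_order} and the H-ICI condition \eqref{eq:SLS_cond_sym} ($\alpha_{kk}^{[1]}\geq\beta$), hence $f$ is normalized, monotone and submodular, i.e.\ $\mathcal{D}^{\mathrm{PTIN}'}(a)$ is a polymatroid; $\mathcal{D}^{\mathrm{Mul}}(a)$ is the polymatroid with set function $a\cdot\mathbbm{1}(\mathcal{S}\neq\emptyset)$. Since $\bigcup_{0\leq a\leq\beta}\mathcal{D}^{\mathrm{Mul}}(a)=\mathcal{D}^{\mathrm{Mul}}(\beta)$ and $\mathcal{D}^{\mathrm{PTIN}'}$ is fixed, the union collapses to $\mathcal{D}^{\mathrm{Mul}}(\beta)\oplus\mathcal{D}^{\mathrm{PTIN}'}$, which by \cite[Th.~44.6]{Schrijver2002} is the polymatroid with set function $f(\mathcal{S})+\beta\cdot\mathbbm{1}(\mathcal{S}\neq\emptyset)$; discarding redundant inequalities gives exactly the simplified $\mathcal{D}^{\mathrm{SLS}}_{\mathrm{out}}$ above. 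Therefore $\mathcal{D}^{\mathrm{SLS}}_{\mathrm{out}}\subseteq\mathcal{D}^{\mathrm{SLS}}\subseteq\mathcal{D}^{\mathrm{MBC}}\subseteq\mathcal{D}^{\mathrm{SLS}}_{\mathrm{out}}$, and all three regions coincide.

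The main obstacle I anticipate is the bookkeeping in the $K$-cell potential-graph step: one must verify that the circuit enumeration is exhaustive for arbitrary $K$ and $L$ (every directed circuit is a cyclic sequence of distinct cell-blocks with at most one ground detour) and that $a\leq\beta$ renders every ground detour non-binding, so that no spurious inequalities survive; the length-one circuits, which necessarily pass through $u$, are the only ones that involve $a$, which is exactly why the $\mathcal{D}^{\mathrm{PTIN}}(a)\to\mathcal{D}^{\mathrm{PTIN}'}(a)$ tightening is needed to restore monotonicity of $f$. By comparison, the polymatroid verification is routine once separability of $f$ across cells is exploited, and the Fourier--Motzkin elimination of $a$ degenerates here to the observation that the bound is increasing in $a$.
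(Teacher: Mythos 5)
Your proposal is correct and lands on the same endgame as the paper: the same simplified scheme ($L$ single-cell codewords per cell plus one network-wide multicast codeword), the same simplified form of $\mathcal{D}^{\mathrm{SLS}}_{\mathrm{out}}$ (your $\sum_{k\in\mathcal{K}}\bar{d}_k^{[l_k]}\leq\sum_{k\in\mathcal{K}}\alpha_{kk}^{[l_k]}-(|\mathcal{K}|-1)\beta$ is exactly \eqref{eq:coop_out_SLS_sym}), and the same closing step via the Minkowski-sum-of-polymatroids theorem \cite[Th.~44.6]{Schrijver2002}. Where you diverge is in how the TIN-layer region is obtained. You keep the power offset $a$ as a free variable in $[0,\beta]$, run a $K$-cell generalization of the potential-graph argument of Appendix~\ref{appendix:PTIN_region_2_cell}, perform the $\mathcal{D}^{\mathrm{PTIN}}(a)\to\mathcal{D}^{\mathrm{PTIN}'}(a)$ tightening, and then observe that the Fourier--Motzkin elimination of $a$ trivializes. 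The paper instead fixes $a=\beta$ from the outset: with all cross links equal to $\beta$, the single-cell layers then see every inter-cell interference term at or below the noise floor, so the residual network has zero cross links and its PTIN region is read off immediately from \eqref{eq:TIN_region_CTIN_regime} as the single-cell bounds $\bar{d}_{\mathrm{s},i}^{[l_i]}\leq\alpha_{ii}^{[l_i]}-\beta$ --- no potential graph, no tightening, no elimination of $a$. This shortcut is precisely what sidesteps the one step you flag as your main obstacle, namely verifying the exhaustiveness of the circuit enumeration for arbitrary $K$ and $L$; your sketch of that enumeration (each simple circuit enters a cell block only at $v_i^{[1]}$, visits distinct cells once, and uses at most one ground detour, which is non-binding for $a\leq\beta$) is sound, so your route does go through, but it buys nothing here since the union over $a$ collapses to the endpoint $a=\beta$ anyway. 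The one thing each approach still shares, and which neither can avoid, is the polymatroidality check of the TIN-layer region; in the H-ICI case this is easy either way because the associated set function is separable across cells, exactly as you note.
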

\subsection{Proof of Theorem \ref{theorem:h_ICI_SLS}}
We simplify the notation and write direct link strength parameters $\alpha_{ii}^{[l_i]}$ as 
$\alpha_{i}^{[l_i]}$. 
Due to the SNR order and the mc-SLS regime, we have
$\beta \leq \alpha_i^{[1]}   \leq \alpha_i^{[2]}  \leq  \cdots \leq \alpha_i^{[L]} $ in each cell $i$.

Next, we show that under full multi-cell cooperation,  the outer bound $\mathcal{D}_{\mathrm{out}}^{\mathrm{SLS}}$ in Theorem \ref{theorem:MISO_BC_outerbound} is achievable using the mc-SLS scheme.
To this end, we first observe that $\mathcal{D}_{\mathrm{out}}^{\mathrm{SLS}}$ in this case simplifies to the region described by all GDoF tuples $\mathbf{d} \in \mathbb{R}_{+}^{KL}$ that satisfy
\begin{align}
\label{eq:coop_out_SLS_sym}
\sum_{i \in \mathcal{K}}  \bar{d}_{i}^{[l_i]}    &  \leq  (1 - |\mathcal{K}| )\beta + \sum_{i \in \mathcal{K}} \alpha^{[l_i]}_i,  \  \forall l_i \in \langle L \rangle, \mathcal{K} \subseteq \langle K \rangle.
\end{align}
We show that \eqref{eq:coop_out_SLS_sym} is achievable by employing the simplified mc-SLS scheme described in Section \ref{sec:2_cell_SLS}. 
In particular, a superposition of $L$ single-cell codewords is transmitted in each cell by the designated BS in that cell; 
and a common codeword is cooperatively transmitted by all BSs on top.   
We further set the power control variable $a$ to $\beta$, i.e. each single-cell transmission has a power that scales at most as $P^{-\beta}$ (see Section \ref{subsec:optimality_of_2_cell_SLS}).
It follows that the GDoF contributed by the common codeword is given by the region 
$\mathcal{D}^{\mathrm{Mul}}(\beta)$, described by all tuples 
$\mathbf{d}_{\mathrm{c}} \in \mathbb{R}_{+}^{KL}$ that satisfy
\begin{equation}
\label{eq:mult_region_SLS_sym}
\sum_{(l_i,i) \in \mathcal{U}} d_{\mathrm{c},i}^{[l_i]} \leq \beta.
\end{equation}
Recall that the common codeword is decoded (and then removed) by all UEs, and hence it does not interfere with single-cell codewords.
Since cross link strengths are all equal to $\beta$, by setting $a = \beta$, 
inter-cell interference experienced by single-cell transmissions is all received at noise level, rendering it inconsequential for the GDoF achieved by these transmissions.
The GDoF region $\mathcal{D}^{\mathrm{PTIN}}(\beta)$ achieved by single-cell codewords is hence given by all tuples 
$\mathbf{d}_{\mathrm{s}} \in \mathbb{R}_{+}^{KL}$ that satisfy
\begin{align}
\label{eq:mc_TIN_region_SLS_sym}
\bar{d}_{\mathrm{s},i}^{[l_i]}   & \leq \alpha^{[l_i]}_i - \beta, \ \forall (l_i,i) \in \mathcal{U}.
\end{align}
\eqref{eq:mc_TIN_region_SLS_sym} is obtained by observing that single-cell transmissions see a network where direct links have strengths $\alpha^{[l_i]}_i - \beta$, for all $(l_i,i) \in \mathcal{U}$, while cross links have strength zero. 
This network is mc-TIN optimal, and by specializing the mc-TIN achievable region in \eqref{eq:TIN_region_CTIN_regime}, we obtain \eqref{eq:mc_TIN_region_SLS_sym}.

From the above, it follows that the region given by the Minkowski sum
$\mathcal{D}^{\mathrm{Mul}}(\beta) \oplus \mathcal{D}^{\mathrm{PTIN}}(\beta)$
is achievable.
It remains to show that this region coincides with the outer bound in \eqref{eq:coop_out_SLS_sym}.
We show this by invoking the result on the Minkowski sums of polymatroids, used in Appendix \ref{appendix:Minkowski_sum}.
First, however, we express \eqref{eq:mult_region_SLS_sym} and \eqref{eq:mc_TIN_region_SLS_sym} in the standard polymatroid form, with a sum GDoF inequality for each subset of $\mathcal{U}$.
To this end, we note that by including redundant inequalities, $\mathcal{D}^{\mathrm{Mul}}(\beta)$ in \eqref{eq:mult_region_SLS_sym}  is equivalently characterized by the set of inequalities
\begin{equation}
\label{eq:mult_region_SLS_sym_polymat}
\sum_{i \in \mathcal{K}} \sum_{ (l_i,i)  \in \mathcal{S}_i }  d_{\mathrm{c},i}^{[l_i]} \leq \beta, \ \forall \mathcal{S}_i \subseteq \mathcal{U}_i, \mathcal{K} \subseteq \langle K \rangle
\end{equation}
where $\mathcal{U}_i \triangleq \{ (l_i,i): l_i \in \langle L \rangle \}$, for all $i \in \langle K \rangle$.
Similarly, it can be verified that $\mathcal{D}^{\mathrm{PTIN}}(\beta)$ in \eqref{eq:mc_TIN_region_SLS_sym}  is equivalently characterized by the set of inequalities
\begin{align}
\label{eq:mc_TIN_region_SLS_sym_polymat}
\sum_{i \in \mathcal{K}} \sum_{ (l_i,i)  \in \mathcal{S}_i }  d_{\mathrm{s},i}^{[l_i]}    &  \leq \sum_{i \in \mathcal{K}}  (\alpha^{[s_i]}_i - \beta), \ \forall \mathcal{S}_i \subseteq \mathcal{U}_i, \mathcal{K} \subseteq \langle K \rangle
\end{align}
where $s_i \triangleq \max_{(l_i,i) \in \mathcal{S}_i} l_i$. This holds since all additional inequalities in \eqref{eq:mc_TIN_region_SLS_sym_polymat} are redundant with respect to the inequalities in \eqref{eq:mc_TIN_region_SLS_sym}.
The region in \eqref{eq:mult_region_SLS_sym_polymat} is clearly a polymatroid. 
Moreover, in part \ref{appendix:subsec_polymatroid_homog} at the end of this appendix, we show that the region in \eqref{eq:mc_TIN_region_SLS_sym_polymat} is also a polymatroid.

We now invoke \cite[Th. 44.6]{Schrijver2002} and characterize the Minkowski sum $\mathcal{D}^{\mathrm{Mul}} \oplus \mathcal{D}^{\mathrm{PTIN}}$  by directly summing the corresponding inequalities in \eqref{eq:mult_region_SLS_sym_polymat} and 
\eqref{eq:mc_TIN_region_SLS_sym_polymat},  yielding
\begin{align}
\sum_{i \in \mathcal{K}} \sum_{ (l_i,i)  \in \mathcal{S}_i }  d_{i}^{[l_i]}    &  \leq \beta + \sum_{i \in \mathcal{K}}  (\alpha^{[s_i]}_i - \beta), \ \forall \mathcal{S}_i \subseteq \mathcal{U}_i, \mathcal{K} \subseteq \langle K \rangle.
\end{align}
After removing redundant inequalities, we are left with the following inequalities
\begin{align}
\label{eq:coop_in_SLS_sym}
\sum_{i \in \mathcal{K}}  \bar{d}_{i}^{[l_i]}    &  \leq \beta  + \sum_{i \in \mathcal{K}} ( \alpha^{[l_i]}_i - \beta), \ \forall l_i \in \langle L \rangle, \mathcal{K} \subseteq \langle K \rangle.
\end{align}
From \eqref{eq:coop_in_SLS_sym} and \eqref{eq:coop_out_SLS_sym}, it follows that   $\mathcal{D}^{\mathrm{Mul}}(\beta) \oplus \mathcal{D}^{\mathrm{PTIN}}(\beta) =  \mathcal{D}_{\mathrm{out}}^{\mathrm{SLS}}$, which completes the proof.
\subsection{Polymatroidality of \eqref{eq:mc_TIN_region_SLS_sym_polymat}}
\label{appendix:subsec_polymatroid_homog}
Here we show that the region in \eqref{eq:mc_TIN_region_SLS_sym_polymat} is a polymatroid.
For this, we define the set function
\begin{equation}
\label{eq:set_function_f_SLS_sym}
f(\mathcal{S}) \triangleq 
\begin{cases}
0, &  \mathcal{S} =\emptyset \\
\sum_{i \in \mathcal{K}}  (\alpha^{[s_i]}_i - \beta), & \mathcal{S} = \cup_{i\in \mathcal{K}} \mathcal{S}_i,  \mathcal{S}_i \subseteq \mathcal{U}_i,
\mathcal{K} \subseteq \langle K \rangle
\end{cases}
\end{equation}
associated with the polyhedron in \eqref{eq:mc_TIN_region_SLS_sym_polymat}.
The function $f$ is normalized and non-decreasing, 
and hence it remains to show that it is also submodular for \eqref{eq:mc_TIN_region_SLS_sym_polymat} to be a polymatroid 
(see Appendix \ref{appendix:subsec_polymatroid}).
By a well known equivalent definition of submodularity, $f$  is submodular if  
\begin{equation}
\label{eq:submod_cond_equiv}
f \big(\mathcal{S} \cup \{(l_k,k)\} \big)  -  f\big( \mathcal{S} \big)   \geq f \big(\mathcal{S} \cup \{(l_k,k), (l_j,j) \} \big)  - 
f \big( \mathcal{S} \cup \{(l_j,j)\} \big) 
\end{equation}
holds for all $\mathcal{S} \subseteq \mathcal{U}$, 
$(l_k,k) \neq (l_j,j) $ and $(l_k,k), (l_j,j)  \in  \mathcal{U} \setminus \mathcal{S}$.
Now we examine both sides of the inequality in \eqref{eq:submod_cond_equiv}. 
From \eqref{eq:set_function_f_SLS_sym}, and recalling that $s_i \triangleq \max_{(l_i,i) \in \mathcal{S}_i} l_i$, it follows that
\begin{equation}
\label{eq:set_function_f_sobmod_SLS_sym_0}
f \big( \mathcal{S} \cup \{(l_k,k)\} \big) - f \big( \mathcal{S} \big) =  \mathbbm{1}( k \notin \mathcal{K}) \big(\alpha^{[l_k]}_k  - \beta \big) + \mathbbm{1}( k \in \mathcal{K}) \mathbbm{1}( l_k > s_k )   \big(\alpha^{[l_k]}_k  - \alpha^{[s_k]}_k \big)
\end{equation}
In a similar fashion, we also have
\begin{multline}
\label{eq:set_function_f_sobmod_SLS_sym}
f \big( \mathcal{S} \cup \{(l_k,k),(l_j,j)\} \big) - f \big( \mathcal{S} \cup \{(l_j,j)\}  \big) =  \mathbbm{1}( k \notin \mathcal{K})  \mathbbm{1}( k \neq j)   \big(\alpha^{[l_k]}_k  - \beta \big)   \\ + \mathbbm{1}( k \notin \mathcal{K}) \mathbbm{1}( k = j)  \mathbbm{1}( l_k > l_j )   \big(\alpha^{[l_k]}_k  - \alpha^{[l_j]}_k \big) + 
\mathbbm{1}( k \in \mathcal{K}) \mathbbm{1}( l_k > s_k )   \big(\alpha^{[l_k]}_k  - \alpha^{[s_k]}_k \big).
\end{multline}
From the mc-SLS condition in \eqref{eq:SLS_cond_sym}, it immediately follows that 
$\mathbbm{1}( l_k > l_j )   \big(\alpha^{[l_k]}_k  - \alpha^{[l_j]}_k \big)  \leq  \big(\alpha^{[l_k]}_k  - \beta \big)  $.
Combining this observation with \eqref{eq:set_function_f_sobmod_SLS_sym} and \eqref{eq:set_function_f_sobmod_SLS_sym_0}, we obtain the following upper bound
\begin{align}
\nonumber
& f \big( \mathcal{S} \cup \{(l_k,k),(l_j,j)\} \big) - f \big( \mathcal{S} \cup \{(l_j,j)\}  \big)  \\
\nonumber
& \leq \mathbbm{1}( k \notin \mathcal{K})  \big(  \mathbbm{1}( k \neq j) +\mathbbm{1}( k =  j)   \big) \big(\alpha^{[l_k]}_k  - \beta  \big) 
+ \mathbbm{1}( k \in \mathcal{K}) \mathbbm{1}( l_k > s_k )   \big(\alpha^{[l_k]}_k  - \alpha^{[s_k]}_k \big) \\
\nonumber
&  = f \big( \mathcal{S} \cup \{(l_k,k)\} \big) - f \big( \mathcal{S} \big)
\end{align}
which in turn proves the submodularity of the function $f$.
Therefore, \eqref{eq:mc_TIN_region_SLS_sym_polymat} is a polymatroid.
\section*{Acknowledgements}
The authors would like to thank the Associate Editor and the  anonymous reviewers for their suggestions that helped improve both 
the technical content and presentation of this paper.
\bibliographystyle{IEEEtran}
\bibliography{References}
\end{document}